%% file: main.tex
\documentclass[submission,copyright,creativecommons]{eptcs}
\providecommand{\event}{LTT2026} 

\usepackage{iftex}

\ifpdf
  \usepackage{underscore}         
  \usepackage[T1]{fontenc}        
\else
  \usepackage{breakurl}           
\fi

\title{Principal Typing for Intersection Types, \\ Forty-Five Years Later}

\author{
Daniele Pautasso 
\institute{University of Turin}
\email{daniele.pautasso@unito.it}
\and 
Simona Ronchi Della Rocca
\institute{University of Turin}
\email{ronchi@di.unito.it}
}

\def\titlerunning{Principal Typing for Intersection Types, Forty-Five Years Later}

\def\authorrunning{D. Pautasso \& S. Ronchi Della Rocca}

\usepackage{amsthm}
\usepackage{amsmath}
\usepackage{amssymb}
\usepackage{enumerate}
\usepackage[utf8]{inputenc}
\usepackage{proof}
\usepackage{bussproofs}
\usepackage{algorithm}
\usepackage{algpseudocode} 
\usepackage{cleveref}
\usepackage{stmaryrd}
\usepackage{xspace}
\usepackage{color}
\usepackage{lineno}
\usepackage{hyperref}
\usepackage{cancel}

\theoremstyle{plain}
\newtheorem{theorem}{Theorem}
\newtheorem{lemma}[theorem]{Lemma}
\newtheorem*{theorem*}{Theorem}
\newtheorem*{lemma*}{Lemma}
\newtheorem{property}[theorem]{Property}
\newtheorem{corollary}[theorem]{Corollary}

\theoremstyle{definition}
\newtheorem{definition}[theorem]{Definition}
\newtheorem{notation}[theorem]{Notation}
\newtheorem{example}[theorem]{Example}

\theoremstyle{remark}
\newtheorem{remark}{Remark}

\input{macros}

\begin{document}

\maketitle

\begin{center}
\textit{Dedicated to Stefano Berardi, \\who brought a clear and illuminating logical light to computational problems.}
\end{center}

\begin{abstract}
A type assignment system for $\lambda$-calculus enjoys the principal
typing property if every typable term $M$ has a special typing, called principal, 
from which all typings for $M$ can be obtained via
suitable operations. The existence of principal typings 
in various intersection type disciplines has long been established 
using both semantical and syntactical approaches.
Historically, on the syntactical side, 
proofs of this property and the description 
of type inference (semi-)algorithms computing principal typings
have been complicated by many subtle technicalities;
the present work aims at providing a more accessible formulation. 
To this end, we revisit some classical notions, and  
identify three elementary operations (substitution, expansion, erasure) 
that allow to build any type derivation in a system
characterizing head normalization.
We then use such operations 
in the design of an inference semi-algorithm
that computes the principal typing of all and only the strongly normalizing terms,
thus contributing to a modern perspective on results originally proven
more than 40 years ago.
\end{abstract}


\section{Introduction}
\input{intro}

\section{Preliminaries} \label{sec:preliminaries}
\input{preliminaries}

\section{Towards Intersection Type Inference} \label{sec:intersection-types}
\input{intersection}

\section{A Type Inference Semi-algorithm} \label{sec:algorithm}
\input{algo}

\section{Concluding Remarks} \label{sec:conclusion}
\input{conclusion}

\bibliographystyle{eptcs}
\bibliography{biblio}

\appendix
\input{appendix}

\end{document}

%% file: macros.tex
\renewcommand{\l}{\lambda}

\newcommand{\outvar}{\mathsf{OutVar}}
\newcommand{\varIT}{\mathbf{V}}
\newcommand{\varPT}{\mathbf{V}_\mathtt{p}}
\newcommand{\FV}{\textsf{FV}}

\newcommand{\redb}{\rightarrow_{\beta}}
\newcommand{\redlo}{\rightarrow_{\mathsf{LO}}}

\newcommand{\ir}{\mathsf{I}}
\newcommand{\kr}{\mathsf{K}}

\newcommand{\con}{{\tt a}}
\newcommand{\conb}{{\tt b}}
\newcommand{\conc}{{\tt c}}
\newcommand{\cond}{{\tt d}}
\newcommand{\cone}{{\tt e}}
\newcommand{\conf}{{\tt f}}
\renewcommand{\cong}{{\tt g}}
\newcommand{\conh}{{\tt h}}
\newcommand{\coni}{{\tt i}}
\newcommand{\conl}{{\tt l}}

\newcommand{\A}{{\tt A}}
\newcommand{\B}{{\tt B}}

\newcommand{\Tvar}{\mathsf{Var}}

\newcommand{\iA}{A}
\newcommand{\iB}{B}

\newcommand{\arrow}{\rightarrow}

\newcommand{\der}{\vdash}
\newcommand{\dersn}{\vdash_{\snsym}}

\newcommand{\derp}{\vdash_{\tt p}}

\newcommand{\IT}{\mathbf{T}}
\newcommand{\ITsn}{\mathbf{T}_{\snsym}}
\newcommand{\PT}{\mathbf{T}_{\mathtt{p}}}

\newcommand{\dem}{\, \triangleright \,}

\newcommand{\ie}{\textit{i.e.}\xspace}

\newcommand{\ih}{\textit{i.h.}\xspace}

\newcommand{\mult}[1]{[#1]}
\newcommand{\emult}{\mult{\,}}
\newcommand{\llist}[1]{\langle #1 \rangle}
\newcommand{\ellist}{\langle \; \rangle}
\newcommand{\set}[1]{\{ #1 \}}

\newcommand{\subs}[2]{[#2 / #1]}

\newcommand{\dom}[1]{\mathsf{dom}(#1)}

\newcommand{\cod}[1]{\mathsf{cod}(#1)}

\newcommand{\ccontext}{{\tt C}}
\newcommand{\kcontext}{{\tt K}}

\newcommand{\axvar}{\mathsf{var}}
\newcommand{\abs}{\mathsf{abs}}
\newcommand{\absi}{\mathsf{abs}\text{-}\ir}
\newcommand{\absk}{\mathsf{abs}\text{-}\kr}
\newcommand{\app}{\mathsf{app}}
\newcommand{\many}{\mathsf{many}}

\newcommand{\InferStrong}{{\tt InferStrong}}

\newcommand{\snsym}{\mathsf{s}}

\newcommand{\Blocked}{{\tt Blocked}}
\newcommand{\PDmin}[1]{{\tt PD^{min}}(#1)}
\newcommand{\PDsnmin}[1]{{\tt PD^{min}_{\snsym}}(#1)}
\newcommand{\PD}[1]{{\tt PD}(#1)}
\newcommand{\PDsn}[1]{{\tt PD}_{\snsym}(#1)}

\newcommand{\Expand}[1]{{\tt Expand}(#1)}
\newcommand{\Erase}[1]{{\tt Erase}(#1)}

\newcommand{\systemIT}{{\mathcal N}}
\newcommand{\systemITsn}{\systemIT_{\snsym}}

\newcommand{\uniseq}{\bar{\mathsf{u}}}

\newcommand{\rewu}{\to_\mathsf{u}}

\newcommand{\rewur}{\rewu^*}
\newcommand{\nf}[1]{\mathtt{nf}_\mathsf{u}(#1)}
\newcommand{\nfo}[1]{\mathtt{nf}_\mathsf{o}(#1)}
\newcommand{\rewo}{\to_\mathsf{o}}

\newcommand{\rewor}{\rewo^*}
\newcommand{\rewe}{\to_\mathsf{e}}

\newcommand{\Rew}{\Rightarrow}

\newcommand{\ms}{\mu}
\newcommand{\mstwo}{\nu}

\newcommand{\ls}{\sigma}
\newcommand{\lstwo}{\tau}
\newcommand{\cons}{\boldsymbol{\cdot}}

\newcommand{\gpipe}{ \ \mid \ }
\newcommand{\grameq}{ \ ::= \ }

\newcommand{\cminus}{\setminus}

\newcommand{\eqset}{E}

\newcommand{\ltom}{\mathsf{m}}

\newcommand{\mgu}[1]{\vec{#1}}

\newcommand{\expseq}{\bar{\mathsf{e}}}
\newcommand{\expstep}{\mathsf{e}}

\newcommand{\uerase}{\textit{erase}}
\newcommand{\uswap}{\textit{swap}}
\newcommand{\uarrow}{\textit{arrow}}
\newcommand{\ulist}{\textit{list}}
\newcommand{\usubszero}{\textit{subs-out}}
\newcommand{\usubs}{\textit{subs}}

\newcommand{\zerosubs}[2]{\llparenthesis #2/#1 \rrparenthesis}

\newcommand{\tDelta}{\mathtt{\Delta}}
\newcommand{\tGamma}{\mathtt{\Gamma}}
\newcommand{\tPhi}{\mathtt{\Phi}}
\newcommand{\tPsi}{\mathtt{\Upsilon}}

\newcommand{\eabs}{\text{abs}}
\newcommand{\eappl}{\text{appl}}
\newcommand{\eappr}{\text{appr}}

%% file: intro.tex

The notion of \emph{principal type} was introduced by Hindley in \cite{Hindley69}, 
and further developed in \cite{Hindley97}, in the setting of simple type assignment 
for $\lambda$-calculus\footnote{Actually, to be more precise, for combinatory logic.}. 
The principal type of a $\lambda$-term $M$ 
is the most general type derivable for it, 
in the sense that all types one can assign to $M$ are obtained 
from the principal one by means of substitution.
Hindley proved that 
 all simply typable terms have a principal type, 
and moreover that the (principal) typability problem is decidable: 
in fact, simple type inference can be regarded as an instance of
Robinson's classical unification problem \cite{Robinson65}. 
Ben-Yelles used these ideas in the design of his 
principal type inference algorithm \cite{Ben-Yelles79}, 
which later provided the basis for the type assignment 
procedure in the ML programming language developed by Damas and Milner \cite{DamasM82}. 

The existence of principal typings has also been investigated in various \emph{intersection type} disciplines.
Intersection types, pioneered by Coppo and Dezani 
\cite{CoppoD78}, increase the typability power of simple types
by assigning \emph{multiple} (traditionally, a \emph{set} of) types to terms. 
Intersection type systems can be tailored to characterize 
semantical classes of $\lambda$-terms such as the (head, strongly) normalizing ones, 
and to describe models of $\lambda$-calculus in a finitary way \cite{BarendregtCD83}. 
Due to the tight connections between intersection typability 
and term normalization, the typability problem in such systems is usually undecidable; 
nonetheless Coppo, Dezani and Venneri introduced an intersection type system 
which admits a notion similar to that of principal type \cite{CoppoDV80}. 
In this system, for each typable term $M$ it is possible to identify a \emph{principal pair} $(\Gamma, \iA)$
such that, for every derivable typing judgement $\Gamma' \der M : \iA'$,
the pair $(\Gamma', \iA')$ can be obtained from $(\Gamma, \iA)$ 
by means of substitution and a sequence of operations called \emph{expansions}. 

The need for this additional (and, in its original formulation, rather technical)
operation can be intuitively motivated  by observing that, 
in simple type assignment systems, all type derivations for a given term
share the same tree structure, and differ from each other only by the types occurring at their
nodes; on the contrary, in intersection type systems, derivations for the same term can differ both
in the previous sense \emph{and} in the structure of the derivation. 
Thus, substitution is used to match the types in the nodes of a derivation, 
while expansion adjusts its tree structure: 
starting from the derivation corresponding to $(\Gamma, \iA)$, 
which exhibits a minimal structure, 
expansions introduce additional subderivations, 
so to obtain a more complex structure corresponding to $(\Gamma',\iA')$. 
Summing up, expansion over pairs 
can be understood as the projection (on the conclusion) 
of structural changes applied to the whole derivation.

The already mentioned \cite{CoppoDV80} considers a system where 
intersection is seen as a non-idempotent connective, 
and types are \emph{strict} in the sense of \cite{Bakel92}, 
\ie intersection is not allowed on the right-hand side of arrows.
Ronchi Della Rocca and Venneri extended the notion of principal pair to a system
where intersection is idempotent, types are not strict, 
and they come with a preorder relation \cite{RoccaV84}. 
The system describes a filter model of $\lambda$-calculus,
and the preorder captures functional inclusion between elements of the model; 
in this setting, generating all pairs from the principal one 
requires three operations, namely substitution, 
expansion, and \emph{rise}, which deals with the preorder relation. 
Despite their differences, the systems of \cite{CoppoDV80} and \cite{RoccaV84}
have the same typability power, \ie they both characterize head normalization (see \cite{Bakel92}); thus, 
to keep the presentation as simple as possible, in this work we opt for a strict system.

Both \cite{CoppoDV80,RoccaV84} achieve their goals
exploiting the fact that the systems of interest enjoy an approximation theorem: a term can be
assigned all and only the types that can be assigned to its approximants, where approximants are
normal forms in a $\lambda$-calculus extended with a constant $\bot$ and associated reduction rules.
Ronchi Della Rocca later supplied the first type inference \emph{semi-algorithm} 
computing the principal pair of all and only the strongly normalizing terms \cite{Rocca88}.
Since then, multiple authors investigated the analogies 
between \emph{intersection type inference} and \emph{term reduction} 
\cite{NeergaardM04, CarlierW04, BoudolZ05, Boudol08}.

Intersection types have been primarily used for semantical purposes; 
still, in search of more practical applications, 
several restrictions for which the typability problem is decidable have been proposed. 
Many such restrictions are based on the notion of \emph{rank} \cite{Leivant83}, which intuitively measures 
the nesting of the intersection connective inside types. 
Remarkably, Kfoury and Wells showed that typability in all the finite
rank restrictions is decidable \cite{KfouryW99}. In order to prove this result in a purely syntactical
way, they  
reformulate the original notion of expansion and connect it to the structure of the
derivation, by enriching the syntax of types with the so-called \emph{expansion variables}, \ie pointers
to the subtypes that can be modified by expansions \cite{KfouryW04, CarlierW05, CarlierW12}.
In our work a similar role is played by a system of constraints that,
despite not being as flexible, avoids some of the bureaucracy needed to handle expansion variables.

Recent contributions that discuss principal typings in non-idempotent intersection
type systems include \cite{BernadetL13}, \cite{BucciarelliKV17} and \cite{Accattoli24}.
Compared to these works, which deal with the topic in a somewhat marginal way,
we delve deeper into algorithmic aspects of intersection type inference.

\paragraph{Contributions.} 
In this paper we supply a syntactic approach to principal typing in intersection type systems.
First, we introduce \emph{pseudo-derivations}, together
with three operations acting on them (substitution, expansion, erasure), 
and prove that such notions are \emph{correct and complete} w.r.t. typability 
in a system characterizing head normalization,
\ie they can be used to build any type derivation.
Second, we employ these ingredients in the design of an inference \emph{semi-algorithm}
computing the principal typing of the input term, in a subsystem characterizing strong normalization.
We place our work as an hopefully more accessible formulation 
of long-established results, with the aim of 
providing a gentle introduction to principal typings
and explaining the basic operations needed to construct them. 
To this end, we streamline some classical notions,
in particular that of expansion, putting additional emphasis 
on the parallelisms between the operations performed by our algorithm and term reduction.

\paragraph{Paper Organization.} The work at hand is structured as follows:
\begin{description}
\item[\Cref{sec:preliminaries}.] Preliminaries on $\lambda$-calculus 
and intersection type assignment systems $\systemIT$ and $\systemITsn$.
\item[\Cref{sec:intersection-types}.] 
Definition of pseudo-derivations, together with expansion and erasure operations. 
Correctness and completeness of such notions w.r.t. typability 
in $\systemIT$ and $\systemITsn$ (Theorems \ref{thm:correctness}, \ref{thm:completeness} and \ref{thm:corr-comp-strong}).
\item[\Cref{sec:algorithm}.] 
Design of a type inference semi-algorithm for $\systemITsn$.
Analysis of termination properties of the procedure (\Cref{thm:termination});
uniqueness and principality of the found solution (Theorems \ref{thm:algo-confluence} and \ref{thm:algo-principal}).
\item[\Cref{sec:conclusion}.] Conclusion and final remarks.
\end{description}

%% file: preliminaries.tex

\paragraph{\textbf{$\lambda$-calculus.}}
Terms and term contexts of \emph{$\l$-calculus} are generated by the following grammars:
\[  
M,N \ \grameq \ x \mid  \l x.M \mid MN \quad \quad \quad \quad 
\ccontext  \ \grameq \ \square \mid \l x. \ccontext  \mid  \ccontext M \mid M \ccontext 
 \] 
where $x$ ranges over a countable set of term variables. 
The abstraction $\l x.M$ binds $x$ in $M$; the writing $\FV(M)$ denotes the set of free variables of $M$.
We assume an
hygiene condition on variables: free and bound variables have
different names, and so do variables bound by different binders.
Given a context $\ccontext$ and a term $M$, the writing $\ccontext[M]$ denotes
the term obtained by replacing the single occurrence of the hole $\square$ in
$\ccontext$ by $M$, potentially capturing free variables of $M$.

The \emph{$\beta$-reduction}, denoted by $\redb$, is the contextual closure of the rewriting rule:  
\[
(\l x.M)N \mapsto_\beta M\subs{x}{N}
\] 
where $M\subs{x}{N}$ denotes the \emph{capture-free} substitution of $x$ by $N$ in $M$.
A term of shape $(\l x.M)N$ is called a $\beta$-redex. 
Such $\beta$-redexes are partitioned into $\ir$-redexes and $\kr$-redexes, 
depending on whether the variable $x$ occurs free in $M$ or not. 
Accordingly, we refer to the non-erasing part and the erasing part of $\beta$-reduction as
$\ir$-reduction and $\kr$-reduction. 

Given a binary reduction relation $\rightarrow_r$, we denote its 
transitive closure with $\rightarrow_r^+$, and its reflexive, 
transitive closure with $\rightarrow_r^*$.
A term is in \emph{$r$-normal form} when it contains no
$r$-redex; it is \emph{$r$-normalizing} if it can be reduced to a term in
$r$-normal form; it is \emph{strongly $r$-normalizing} if every $r$-reduction
sequence starting from it eventually stops. 

Terms in \emph{$\beta$-normal form} and \emph{head $\beta$-normal form}
are respectively generated by the following grammars (note that any normal form is an head normal form):
\[ 
K \ \grameq \ \l x.K  \gpipe  x K_1 \dots K_n 
\qquad \qquad 
H \ \grameq \ \l x.H  \gpipe  x M_1 \dots M_n  \qquad \qquad (n \geq 0)
\]
The notion of head normal form is the syntactical counterpart
of the notion of \emph{solvability} for (call-by-name) $\lambda$-calculus,
a term $M$ being solvable iff there is an \emph{head-context}
$\ccontext $ of shape $(\lambda x_1 \dots x_n. \square) M_1 \dots M_m$ 
such that $\ccontext[M]$ reduces to a fully determined result, 
traditionally the identity, \ie $\ccontext[M] \redb^* \lambda x.x$ \cite{Wadsworth76}. 
Solvable terms can be regarded as the ``meaningful'' terms that
produce some information, even if they may not be normalizing.

\paragraph{\textbf{Intersection Types.}}
Let us briefly recollect a \emph{non-idempotent} intersection type assignment system, 
based on the notion of \emph{multiset}. 
A multiset (notation $\ms, \mstwo$) is an unordered list of elements; 
we write $|\ms |$ for the cardinality of the multiset $\ms$, 
and $\uplus$ for the union of multisets taking into account multiplicities.

\begin{definition} \label{def:itypes}
The set $\IT$ of \emph{intersection types} is defined by the following grammar:
\[ 
\begin{array}{lrcl}
\textsc{Intersection Types} \ & \iA,\iB & \grameq & a \gpipe \ms \arrow \iA  \\
\textsc{Multisets} & \ms, \mstwo \ & \grameq & \mult{\iA_1,\dots,\iA_n} \quad (n \geq 0)
\end{array}
\]
where $a$ ranges over a countable set $\varIT$ of type variables.
\end{definition}

A \emph{type environment} is a total function from term variables to
multisets, such that only a finite number of variables is not mapped to the empty multiset. 
Environments are ranged over by $\Gamma$, $\Delta$.
The domain of an environment is $\dom{\Gamma} = \set{ x \mid \Gamma(x) \not = \emult}$; 
the union of environments is defined as
$(\Gamma \uplus \Delta)(x) = \Gamma(x) \uplus \Delta(x)$; 
it can be abbreviated $\Gamma, \Delta$ when $\Gamma$ and $\Delta$ have disjoint domains,
and $\Gamma, x : \ms$ is a special case of such notation.
$\Gamma \cminus x $ denotes the environment such that $(\Gamma \cminus x)(x) = \emult$ 
and $(\Gamma \cminus x)(y) = \Gamma(y)$ for every $y \not =  x$.
For convenience, the writing $\Gamma, x : \emult$ is considered to be the same as $\Gamma$.

\begin{definition} \label{def:int-system}
The \emph{intersection type assignment system} $\systemIT$,
assigning (multisets of) types to $\lambda$-terms, consists of the following rules:
\begin{gather*}
\infer[\axvar]{ x:\mult{\iA} \der x:\iA}{} 
\qquad \qquad
\infer[\abs]
{\Gamma \cminus x \der \lambda x. M: \Gamma(x) \arrow \iA}
{\Gamma \der M : \iA}
\\[5pt]
\infer[\many]
{\biguplus_{i=1}^n \Gamma_i \der M:\mult{\iA_1,\dots,\iA_n}}
{(\Gamma_i \der M: \iA_i)_{i=1}^n & n \geq 0 }
\qquad \qquad
\infer[\app]
{\Gamma \uplus \Delta \der MN: \iA}
{\Gamma \der M: \ms \arrow \iA & \Delta \der N : \ms}
\end{gather*}
\end{definition} 

Type derivations are ranged over by $\Pi, \Sigma$. 
We write $\Gamma \der M:\iA$ as a shorthand for the existence of a derivation proving 
$\Gamma \der M:\iA$, and to name a derivation with such conclusion we write $\Pi \dem \Gamma \der M:\iA$.

We also introduce the \emph{strong} system $\systemITsn$, a restriction of $\systemIT$ that forbids 
types containing empty multisets. To distinguish
between derivations in the two systems, 
typing judgments in $\systemITsn$ are noted $\Gamma \dersn M : \iA$.
The set $\ITsn \subset \IT$ is obtained  by imposing $n \geq 1$ in \Cref{def:itypes};
system $\systemITsn$ is obtained from \Cref{def:int-system} by imposing $n \geq 1$ 
in rule $\many$, and replacing rule $\abs$ by the two rules: 
\[
\infer[\absi]{\Gamma \dersn \lambda x.M : \ms \arrow \iA}
{\Gamma, x:\ms \dersn M : \iA  & \ms \not = \emult} 
\qquad \qquad 
\infer[\absk]{\Gamma \dersn \lambda x.M: \mult{\iA} \arrow \iB}
{\Gamma \dersn M : \iB & x \not \in \dom{\Gamma} & \iA \in \ITsn} 
\]

Systems $\systemIT$ and $\systemITsn$ are \emph{relevant} (\ie no unnecessary weakening is allowed), 
and coincide with systems $\mathcal{W}$ and $\mathcal{S}$ of \cite{BucciarelliKV17};
the following folklore properties are inherited from there. 

\begin{theorem}[Subject Reduction/Expansion] \label{thm:subconv} \
Let $M \redb N$. Then $\Gamma \der M : \iA$ if and only if $\Gamma \der N : \iA$. 
\end{theorem}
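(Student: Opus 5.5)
We prove both directions by first reducing to the case of a root redex and then lifting to arbitrary contexts. Suppose $M = \ccontext[(\l x.P)Q]$ and $N = \ccontext[P\subs{x}{Q}]$. We argue by induction on $\ccontext$. For $\ccontext = \square$ this is the root case. Otherwise we invert the last rule of the derivation for $\ccontext[\cdot]$, which is $\abs$ or $\app$, possibly combined with $\many$ for the argument side. This gives subderivations for the immediate subterm containing the hole. We then apply the induction hypothesis to each of these. The typing of the subterm is a premise of $\many$, so the hypothesis is applied to each of the finitely many premises separately. We then reassemble the derivation with the same rule. The environments are unchanged, since the induction hypothesis preserves $\Gamma$ exactly, so the conclusion is unchanged. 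To make this go through, the statement should be proved simultaneously for judgments $\Gamma \der M : \iA$ and for multiset judgments $\Gamma \der M : \ms$. For multisets the result follows immediately from the type case via $\many$.

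The core is a \emph{substitution lemma} and an \emph{anti-substitution lemma}.

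\textbf{Substitution.} Suppose $\Gamma, x:\ms \der P : \iA$ and $\Delta \der Q : \ms$. Then $\Gamma \uplus \Delta \der P\subs{x}{Q} : \iA$.

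\textbf{Anti-substitution.} Suppose $\Theta \der P\subs{x}{Q} : \iA$. Then there exist $\Gamma$, $\Delta$ and $\ms$ with $\Theta = \Gamma \uplus \Delta$, $\Gamma, x:\ms \der P : \iA$ and $\Delta \der Q : \ms$.

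Both lemmas are proved by induction on $P$, or equivalently on the derivation.

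\emph{Substitution.} The derivation of $Q : \ms$ ends with $\many$. It therefore provides exactly $|\ms|$ derivations $\Delta_i \der Q : \iA_i$, and these are distributed among the occurrences of $x$ in the derivation of $P$. At an axiom for $x$ we plug in the corresponding $\Delta_i$-derivation. At an $\app$ node the multiset $\ms$ splits as $\ms_1 \uplus \ms_2$ between the two premises, and we split the family of derivations for $Q$ accordingly. Relevance is essential here: each axiom for $x$ consumes exactly one element of $\ms$, so the bookkeeping of $\uplus$ matches exactly.

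\emph{Anti-substitution.} We decompose the derivation of $P\subs{x}{Q}$. At each occurrence of $x$ in $P$ we collect the subderivation typing that copy of $Q$, together with its type. The multiset $\ms$ is the collection of all these types, and $\Delta$ is the union of the environments of the collected subderivations. If $x \notin \FV(P)$, then $\ms = \emult$ and $\Delta$ is empty, which $\many$ with $n = 0$ allows.

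\textbf{Root case.} Given $\Gamma \der (\l x.P)Q : \iA$, inversion of $\app$ and then $\abs$ yields $\Gamma_1, x:\ms \der P : \iA$ and $\Gamma_2 \der Q : \ms$ with $\Gamma = \Gamma_1 \uplus \Gamma_2$. Substitution then gives $\Gamma \der P\subs{x}{Q} : \iA$. Conversely, anti-substitution produces the two premises, and we rebuild the derivation with $\abs$ followed by $\app$. This step uses that $\abs$ accepts $\Gamma(x) = \emult$. That is why expansion holds even for $\kr$-redexes in $\systemIT$, while it would fail in $\systemITsn$. The hygiene convention guarantees $x \notin \dom{\Delta}$, so $\Gamma \cminus x$ behaves as expected.

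The main obstacle is the anti-substitution lemma. Occurrences of $Q$ inside $P\subs{x}{Q}$ must be distinguished from subterms of $P$ that happen to be syntactically equal. The cleanest way to handle this is to induct on the structure of $P$, not of $P\subs{x}{Q}$. In the case $P = x$ we take $\ms = \mult{\iA}$. In the case $P = y \neq x$ we take $\ms = \emult$. Application and abstraction cases recombine the multisets and environments obtained from the induction hypotheses with $\uplus$. The careful environment accounting is routine but needs attention.
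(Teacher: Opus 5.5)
Your proposal is correct. The paper gives no proof of its own here: it inherits the theorem as folklore from \cite{BucciarelliKV17}, where it is established by the same substitution/anti-substitution argument you give, with anti-substitution by induction on $P$ and the contextual closure handled by induction on the context. The one fact you use implicitly is relevance in the form $\dom{\Delta} \subseteq \FV(Q)$ whenever $\Delta \der Q : \ms$; this is what makes the hygiene convention sufficient in the abstraction cases.
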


\begin{theorem}[Characterization] \label{thm:characterization} \
\begin{itemize}
\item $M$ is $\systemIT$-typable if and only if it is head $\beta$-normalizing.
\item $M$ is $\systemITsn$-typable if and only if it is strongly $\beta$-normalizing.
\end{itemize}
\end{theorem}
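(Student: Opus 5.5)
Both items follow the standard non-idempotent recipe: a weighted subject reduction yields the ``typable implies normalizing'' directions, and typability of normal forms plus subject expansion yields the converses. The paper imports these facts from \cite{BucciarelliKV17}, but the plan below is self-contained. Write $|\Pi|$ for the number of $\app$ and $\axvar$ rules in a derivation $\Pi$.

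\textbf{Substitution lemma.} The first step is a quantitative substitution lemma. If $\Pi \dem \Gamma, x:\ms \der M : \iA$ and $\Sigma \dem \Delta \der N : \ms$, then there is $\Pi' \dem \Gamma \uplus \Delta \der M\subs{x}{N} : \iA$. Moreover $|\Pi'| = |\Pi| + |\Sigma| - |\ms|$. The proof is by induction on $\Pi$. Non-idempotency makes the bookkeeping exact: the $\many$ rule for $N$ splits $\Sigma$ into one subderivation per occurrence of an $\axvar$ rule on $x$.

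\textbf{Subject reduction.} Consider a step $(\l x.M)N \redb M\subs{x}{N}$ occurring in a position that the derivation types, i.e.\ not inside an argument typed with the empty multiset. At such a step, the lemma shows that $|\Pi|$ strictly decreases, since one $\app$ rule and at least one axiom disappear. This needs an auxiliary generation lemma inverting the rules, and a context induction that distinguishes typed positions from erased ones.

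\textbf{Typable implies normalizing.}
\begin{itemize}
\item For $\systemIT$: head reduction always contracts a typed redex. Hence $|\Pi|$ decreases along head reduction, so head reduction terminates.
\item For $\systemITsn$: multisets are nonempty, and the rule $\absk$ forces the argument of a $\kr$-redex to be typed. By induction on $\Gamma \dersn M:\iA$, every subterm of $M$ is then typed. Every reduction step therefore decreases the measure. Alternatively one can argue that typability is preserved and the measure strictly drops, where for $\kr$-redexes the erased argument's derivation is discarded.
\end{itemize}

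\textbf{Normalizing implies typable.} Every head normal form $\l \vec x. y M_1\dots M_n$ is typable in $\systemIT$: give $y$ the type $\emult\arrow\cdots\arrow\emult\arrow a$ and use $\many$ with $n=0$ for the arguments. Subject expansion, which is \Cref{thm:subconv} and follows from an anti-substitution lemma, then transfers typability back along the head reduction sequence. For $\systemITsn$, the plan has two parts:
\begin{itemize}
\item Show that every $\beta$-normal form $K$ is typable, by induction on $K$. Use $\absk$ for vacuous abstractions. For $x K_1\dots K_n$, choose the type of $x$ to be $\mult{B_1}\arrow\cdots\arrow\mult{B_n}\arrow a$, where $B_i$ is a type of $K_i$, and merge the environments.
\item Expand back along reductions. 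Plain subject expansion fails for $\kr$-steps, because the erased argument must itself be typable. Instead, induct on the length of the longest reduction sequence (finite by König's lemma), with a standard ``perpetual'' decomposition of strongly normalizing terms. Either $M = x\vec N$ or $\l x.M'$, with typable strongly normalizing components. Or $M = (\l x.P)Q\vec N$, where both $P\subs{x}{Q}\vec N$ and $Q$ are strongly normalizing with smaller measure, hence typable, and the anti-substitution lemma reassembles a derivation.
\end{itemize}

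\textbf{Main obstacle.} The hardest step is this last $\kr$ case. The type given to $x$ via $\absk$ must match the type obtained for $Q$. Since $Q$'s derivation is independent of $P\subs{x}{Q}\vec N$, this works directly with the rule $\absk$ allowing any type $\iA$. The $\ir$ case requires anti-substitution to produce a nonempty multiset for $x$, which holds because $x$ occurs in $P$.
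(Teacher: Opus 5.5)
The paper does not prove this theorem: it imports it as folklore from \cite{BucciarelliKV17}, so your outline has to stand on its own. Most of it does.
\begin{itemize}
\item Both directions for $\systemIT$ are fine. The head redex is always typed, and in $\systemIT$ contracting a typed redex preserves environment and type exactly.
\item Your proof that strongly normalizing terms are $\systemITsn$-typable is also fine. It uses the decomposition $x\vec N$ / $\l x.M'$ / $(\l x.P)Q\vec N$, with $\absk$ instantiated by the type independently obtained for $Q$.
\end{itemize}
Two small repairs are needed there. First, the length of the longest reduction does not decrease when passing from $\l x.M'$ to $M'$, or from $x\vec N$ to some $N_i$. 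So order it lexicographically with the size of the term, or induct on the inductive definition of strong normalization. Second, in $\systemIT$ a $\kr$-step erases no axiom, since the argument is typed with $\emult$. The strict decrease comes from the $\app$ rule alone.

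The genuine gap is the direction ``$\systemITsn$-typable implies strongly normalizing''. Your sentence ``every reduction step therefore decreases the measure'' presupposes that the reduct of a typed term gets a $\systemITsn$-derivation from the old one. That is subject reduction for $\systemITsn$, which fails (\Cref{rmk:strong-systems}).
\begin{itemize}
\item $\ir$-steps are harmless, since your substitution lemma preserves environment and type.
\item Contracting a $\kr$-redex $(\l x.P)Q$ discards the derivation of $Q$. This removes the free variables of $Q$ from the environment, and under an enclosing binder it changes a type.
\end{itemize}
For instance, typing $\l x.y$ by $\absk$ as $\mult{b}\to a$ gives $\dersn \l y.(\l x.y)y : \mult{a,b}\to a$. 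The reduct $\l y.y$, however, only has types $\mult{\iA}\to\iA$.

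The change then propagates outward. If this abstraction is an argument, the multiset typing it changes. The function it is passed to must then be retyped, or the variable it is applied to, then the binder of that variable, and so on. Neither ``every subterm is typed'' nor ``discard the erased argument's derivation'' deals with this. An induction on the redex context already breaks in the $\app$ case: the argument's types change, but the function's derivation is left as it was.

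The fix is never to contract a redex under a binder. Prove that $\systemITsn$-typable terms are strongly normalizing by induction on $|\Pi|$, using the same decomposition as in your converse. Order this lexicographically with the size of the term, since $|\Pi|$ does not drop through an abstraction rule.
\begin{itemize}
\item For $\l x.M'$, use the subderivation for $M'$.
\item For $x N_1\dots N_n$, each $N_i$ is typed, because $\many$ has at least one premise, and its derivation is strictly smaller.
\item For $(\l x.P)Q\vec N$, contract only the head redex. In the $\ir$ case, the substitution lemma types $P\subs{x}{Q}\vec N$ with the same environment and type and a smaller measure. In the $\kr$ case, dropping the derivation of $Q$ only shrinks the top-level environment, which breaks nothing. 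In both cases $Q$ itself has a strictly smaller derivation.
\end{itemize}
Conclude with the standard fact behind the inductive characterization of strong normalization: if $Q$ and $P\subs{x}{Q}\vec N$ are strongly normalizing, then so is $(\l x.P)Q\vec N$. The alternative is a weighted subject reduction for $\systemITsn$, stated up to a suitable inclusion between the old and new environments and types. That is considerably more work.
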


\begin{remark} \label{rmk:strong-systems}
Neither subject reduction nor subject expansion hold in the strong system $\systemITsn$. 
Absence of subject reduction is due to the relevance of the system,
and can be recovered by adding a weakening rule. Note that a weaker property holds:
if $M$ is $\systemITsn$-typable and $M \redb N$, 
then $N$ is $\systemITsn$-typable, possibly with a different typing. 
For example, consider the derivation $x : \mult{a,b} \dersn (\l y.x)x : a$. 
Although $x : \mult{a,b} \dersn x : a$ is not derivable, one can derive $x : \mult{a} \dersn x : a$. 
Concerning subject expansion, $\systemITsn$-typability is preserved (possibly with a different typing) 
only if one considers expansion w.r.t. \emph{specific} notions of reduction, namely those
that preserve strong normalization; unrestricted $\beta$-reduction does not fall in this category,
because of the erasing $\kr$-reduction steps. For example, consider
$(\l y.x)(\delta \delta) \redb x$, where $\delta = (\l z.zz)$. Clearly $x$ is $\systemITsn$-typable, but  
$(\l y.x)(\delta \delta)$ is not, since $\delta \delta$ can only be typed using the empty multiset.  
\end{remark}

%% file: intersection.tex

To better reason about 
(non-idempotent) intersection type systems, and significantly reduce
the complexity of the procedures we discuss in the next sections, 
it is convenient to impose some kind of ordering to multisets.
Formally, we need the notion of \emph{intersection pre-types}.

Intersection pre-types (pre-types for brevity) are analogous to intersection
types, the only difference being that multisets are replaced by lists.
Lists are ranged over by $\ls, \lstwo$; we write $|\ls|$
for the length of the list $\ls$, and the symbol $\cons$ denotes list concatenation.
The set $\PT$ of pre-types is defined by:
\[ 
\begin{array}{lrcl}
\textsc{Intersection Pre-types} \ & \A,\B & \grameq & \con \; \mid \; \ls \arrow \A  \\
\textsc{Lists} & \ls, \lstwo & \grameq & \llist{\A_1, \dots, \A_n} \quad (n \geq 0) \\ 
\end{array}
\]
where $\con$ ranges over a countable set $\varPT$ of pre-type variables.  
Remark that the distinction between pre-type variables (notation $\con$) and type variables
(notation $a$) is not a meaningful one; the two different notations are used only to help visually
distinguish pre-types from actual intersection types.

A \emph{pre-type environment} is a total function from term variables to lists,
such that only a finite number of variables is not mapped to the empty list.
Pre-type environments are ranged over by $\tGamma, \tDelta$,
and $\dom{\tGamma} = \set{x \mid \tGamma(x) \not = \ellist}$.
Union of pre-type environments is
defined as $(\tGamma \cdot \tDelta)(x) = \tGamma(x) \cdot \tDelta(x)$.
We extend all notations previously introduced for type environments
to pre-type environments.

\begin{figure}
\begin{gather*}
\infer[\axvar]{\Pi \dem x:\llist{\con} \derp x:\con}{} \qquad \eqset_\Pi = \emptyset
\\[10pt]
\infer[\abs]
{\Pi \dem \tGamma \cminus x \derp \lambda x. M: \conb}
{\Sigma \dem \tGamma \derp M:\con & \conb \text{ fresh}}
\qquad
\eqset_\Pi = \eqset_\Sigma \cup \{ \conb \doteq \tGamma(x) \arrow \con \}
\\[10pt]
\infer[\many]
{\Pi \dem \tGamma_1 \cons \dots \cons \tGamma_n \derp M : \llist{\con_1,\dots,\con_n}}
{(\Sigma_i \dem \tGamma_i \derp M : \con_i)_{i=1}^n &
\forall i \not = j.\Sigma_i * \Sigma_j  & n \geq 0 }
\qquad
\begin{array}{l}
\eqset_\Pi =\bigcup_{i=1}^n \eqset_{\Sigma_i}
\end{array}
\\[10pt]
\infer[\app]
{\Pi \dem \tGamma \cons \tDelta \derp MN: \conc}
{\Sigma_1 \dem \tGamma \derp M: \con & \Sigma_2 \dem \tDelta \derp N: \llist{\conb_1,\dots,\conb_n} & \Sigma_1 * \Sigma_2 & \conc \text{ fresh}}
\\
\eqset_\Pi = \eqset_{\Sigma_1} \cup \eqset_{\Sigma_2} \cup \{\con \doteq \llist{\conb_1,\dots,\conb_n} \arrow \conc \}
\end{gather*}
\caption{Pseudo-derivation rules for system $\systemIT$.} \label{fig:pseudo}
\end{figure}

\subsection{Pseudo-derivations}

A \emph{pseudo-derivation} is a tree of judgements assigning (lists of) pre-types to terms; 
to each pseudo-derivation $\Pi$ is associated a set of 
equations between (lists of) pre-types $\eqset_{\Pi}$, 
which keeps track of all and only 
the constraints that must be satisfied in order 
to transform $\Pi$ into an actual derivation.
Since $\eqset_\Pi$ can be recovered from $\Pi$, we 
sometimes leave $\eqset_\Pi$ implicit. Moreover, when $\Pi$ is clear from the context, 
we may write $\eqset$ instead of $\eqset_\Pi$ for the system of equations associated to $\Pi$.
Remark that pseudo-derivations are defined modulo renaming of pre-type variables,
and that premises of each rule are \emph{ordered}.

Letting $\Tvar(\A)$ denote the set of pre-type variables occurring in $\A$, we say that $\A$ and $\B$ are
\emph{disjoint} (written $\A * \B$) if $\Tvar(\A) \cap \Tvar(\B) = \emptyset$. 
The notion of disjointness is extended to environments and derivations
in the standard way. A type $\A$ is \emph{fresh} w.r.t. a 
derivation $\Pi$ if $\A * \B$ for each $\B$ occurring in $\Pi$.

\begin{definition}[Pseudo-derivations for $\systemIT$] \label{def:pseudo} \
\begin{itemize}
\item A \emph{pseudo-derivation} for $M$ is a pair $\PD{M} = (\Pi, \eqset_\Pi)$,
where $\Pi$ is a tree of judgements assigning (lists of) pre-type variables to terms and $\eqset_{\Pi}$ 
is the associated system of equations, as per \Cref{fig:pseudo}.
\item The \emph{minimal pseudo-derivation} for $M$ is a pseudo-derivation for $M$
such that $n = 1$ in all rules $\many$.
Since the minimal pseudo-derivation is unique, 
modulo renaming of pre-type variables, we refer to it as $\PDmin{M}$.
\end{itemize}
\end{definition}

Next we introduce notions related to the search 
for solutions of a set of equations between (lists of) pre-types. 
The symbol $\phi$ stands for a substitution $\varIT \arrow \IT$, 
while $\psi$ stands for a substitution $\varPT \arrow \PT$; 
we pose $\dom{\psi} = \set{\con \in \varPT \mid \psi(\con) \not = \con}$ and
$\cod{\psi} = \set{\A \mid \con \in \dom{\psi} \text{ and } \psi(\con) = \A }$. 
Substitutions are extended 
to  types (resp. pre-types), 
multisets (resp. lists) and derivations in the standard way.

\begin{definition} \label{def:solvability} 
Let $S=\set{\A_i \doteq \B_i \mid i \in I} \cup \set{\ls_j \doteq \lstwo_j \mid j \in J}$
be a set of equations between (lists of) pre-types.
\begin{itemize}
\item $\psi : \varPT \arrow \PT$ \emph{solves} $S$
if $\psi (\A_i) = \psi (\B_i)$ for all $i \in I$
and $\psi (\ls_j) = \psi (\lstwo_j)$ for all $j \in J$. 
\item $S$ is in \emph{solved form} if the following conditions are met:
\begin{itemize}
	\item $J = \emptyset$, \ie there is no equation between lists; 
	\item every $\A_i$ is a variable $\con_i$, and all variables $\con_i$ are distinct;
	\item no left-hand side $\con_i$ appears in some right-hand side $\B_k$.
\end{itemize}
\item $S$ is in \emph{unsolvable form} if it contains at least one \emph{circular equation},
\ie an equation of shape $\con \doteq \A$ such that $\con$ occurs in $\A \not = \con$.
\item $S$ is in \emph{blocked form} if it contains at least one \emph{blocked equation},
\ie an equation between lists $\ls \doteq \lstwo$ such that $|\ls| \not= |\lstwo|$.
\end{itemize}
\end{definition}

\begin{notation}
If $S = \set{ \con_i \doteq \B_i \mid i \in I}$ is in solved form, 
$\mgu{S}$ denotes the most general substitution solving $S$ 
(also called the most general unifier of $S$), that is $\mgu{S}(\con_i) = \B_i$. 
\end{notation}

\Cref{fig:unification-intersection} introduces the unification
rules used to try to solve a system of equations between (lists of) pre-types.
The writing $S\subs{\con}{\A}$ denotes the set obtained 
from $S$ replacing \emph{every} occurrence of $\con$ by $\A$.

\begin{figure}[b]
\begin{gather*}
\infer[\uerase]
{ S }
{ S \cup \{\A \doteq \A\} 
}
\qquad
\infer[\uswap]
{ S  \cup \{\con \doteq \ls \arrow \A\} }
{ S \cup \{\ls \arrow \A \doteq \con \} }
\qquad 
\infer[\uarrow]
{ S  \cup \{\ls \doteq \lstwo\} \cup \{\A \doteq \B\} }
{ S \cup \{\ls \arrow \A \doteq \lstwo \arrow \B\} }
\\[8pt]
\infer[\ulist]
{ S  \cup \{\A_i \doteq \B_i\}_{1 \leq i \leq n} }
{ S \cup \{\llist{\A_1, \ldots, \A_n}  \doteq \llist{\B_1, \ldots, \B_n} \} & n \geq 0 }
\qquad 
\infer[\usubs]
{ S[\A / \con] \cup \{\con \doteq \A\} }
{ S \cup \{\con \doteq \A\} & \con \notin \Tvar(\A) & \con \in \Tvar(S) }
\end{gather*}
\caption{Unification rules for intersection pre-types.}
\label{fig:unification-intersection}
\end{figure}

\begin{definition} \
\begin{itemize} 
\item $S \rewu S'$ means that $S'$
is obtained from $S$ by applying one of the rules of \Cref{fig:unification-intersection}. 
\item A set of equations $S$ is in $\rewu$-\emph{normal form} if no rule can be applied to it.
\end{itemize}
\end{definition}

The following Properties \ref{prop:unif-confluence} and \ref{prop:unif-nf} 
come from the fact that the rules of \Cref{fig:unification-intersection}
are Robinson's unification rules \cite{Robinson65,MartelliM82}, instantiated 
to the grammar of intersection pre-types.
In particular, writing $\nf{S}$ for the unique $\rewu$-normal form of $S$,
we remark that $\nf{S}$ is either in solved form, or in unsolvable/blocked form
(possibly both unsolvable \emph{and} blocked at the same time).

\begin{property} \label{prop:unif-confluence}
$\rewu$ is terminating and confluent (modulo renaming of pre-type variables).
\end{property}

\begin{property} \label{prop:unif-nf}
$\nf{S}$ is either in solved form, or in unsolvable/blocked form.
\end{property}

\subsection{Expansion and Erasure}\label{sub:expansion}
Applying unification rules to the set of equations associated 
to a pseudo-derivation may result in a blocked form.
In order to deal with such a scenario, we introduce two operations 
called \emph{expansion}\footnote{Not to be confused with \emph{subject expansion}. 
The name has been chosen for historical reasons discussed in the introduction.} 
and \emph{erasure}. Both modify the tree structure of the pseudo-derivation 
they are applied to: an expansion increases the number of premises of a rule $\many$, 
whereas an erasure decreases it.

\begin{definition}[Expansion, Erasure] \label{def:expansion} 
Let $(\Pi, \eqset)$ be a pseudo-derivation. 
\begin{itemize}
\item 
	An \emph{expansion} operation, written $\Expand{\ls,n}$, has two parameters: 
	a list $\ls \not = \ellist$ and a natural number $n \geq 1$.
	The result of applying $\Expand{\ls,n}$ to $(\Pi, \eqset)$, 
	for brevity written $\Expand{\ls,n,\Pi}$, is a pseudo-derivation $(\Pi', \eqset')$ 
	such that:
	\begin{itemize}
	\item If $\ls = \llist{\con_1,\dots,\con_m}$ and $\Pi$ contains a rule
	\[
	\infer[\many]
	{\tDelta_1 \cons \dots \cons \tDelta_m \derp N: \llist{\con_1,\dots,\con_m} }
	{(\tDelta_i \derp N:\con_i)_{i=1}^m & }
	\]
	then $\Pi'$ has the same tree structure as $\Pi$,
	with the exception of the subtree whose root is the above $\many$ rule, 
	which is replaced by:
	\[
	\infer[\many]
	{\tGamma_1 \cons \dots \cons \tGamma_{m+n} \derp N: \llist{\con_1,\dots,\con_m,\dots,\con_{m+n}} }
	{(\Sigma_i \dem \tGamma_i \derp N : \con_i)_{i=1}^{m+n}}
	\]
	where $\Sigma_{1},\dots,\Sigma_{m+n}$ are fresh disjoint copies of $\PDmin{N}$.
	\item Otherwise, if no such rule exists, $\Pi' = \Pi$.
	\end{itemize}
\item 
	An \emph{erasure} operation, written $\Erase{\ls,n}$, 
	has two parameters: a list $\ls \not = \ellist$ and a natural number $n \geq 1$. 
	The result of applying $\Erase{\ls,n}$ to $(\Pi, \eqset)$, 
	for brevity written $\Erase{\ls,n,\Pi}$, is a pseudo-derivation $(\Pi', \eqset')$ 
	such that:
	\begin{itemize}
	\item If $\ls = \llist{\con_1,\dots,\con_m}$ and (as in the previous point) $\Pi$ contains a $\many$ rule with
	conclusion $\tDelta_1 \cons \dots \cons \tDelta_m \derp N: \llist{\con_1,\dots,\con_m}$,
	then $\Pi'$ has the same tree structure as $\Pi$,
	with the exception of the subtree whose root is the above $\many$ rule, 
	which is replaced by:
	\[
	\infer[\many]
	{\tGamma_1 \cons \dots \cons \tGamma_{m-n} \derp N: \llist{\con_1,\dots,\con_{m-n}} }
	{(\Sigma_i \dem \tGamma_i \derp N : \con_i)_{i=1}^{m-n}}
	\]
	where $\Sigma_{1},\dots,\Sigma_{m-n}$ are fresh disjoint copies of $\PDmin{N}$. 
	If $m \leq n$, the resulting $\many$ rule has conclusion $\derp N : \ellist$.
	\item Otherwise, if no such rule exists, $\Pi' = \Pi$.
	\end{itemize}
\end{itemize}
\end{definition}

It is easy to check that the above \Cref{def:expansion} is well posed, that is, 
for every pseudo-derivation $\PD{M}=(\Pi, \eqset)$, natural $n \geq 1$, and list $\ls \not = \ellist$, 
both $\Expand{\ls,n,\Pi}$ and $\Erase{\ls,n,\Pi}$ are pseudo-derivations for $M$.
Thanks to expansion and erasure operations, 
the notion of solvability can be extended to pseudo-derivations.

\begin{definition} \label{def:pd-solution}
Let $(\Pi, \eqset)$ be a pseudo-derivation. A solution of $(\Pi, \eqset)$ is a pair $(\expseq,\psi)$, 
where $\expseq$ is a sequence of expansions and erasures such that 
$\expseq(\Pi,\eqset) = (\Pi', \eqset')$
and $\psi : \varPT \arrow \PT$ is a solution of $\eqset'$. 
\end{definition}

Note that arbitrary expansions and erasures do \emph{not} preserve 
the solvability of a pseudo-derivation, as illustrated by \Cref{exmp:expand-erase-not-safe}.
This suggests a design principle we will follow in our type inference algorithm
(see later \Cref{sec:algorithm}):
modifications to the structure of a pseudo-derivation should always 
be guided by the system of equations associated to it.

\begin{example} \label{exmp:expand-erase-not-safe}
Let $M=(\lambda x.x)y$. Then $\PDmin{M}=(\Pi, \eqset)$, where $\Pi$ is:
\[
\infer{ \Pi \dem y:\llist{\conc}\derp (\lambda x.x)y:\cond }
{\infer{\derp \lambda x.x: \conb}{x:\llist{\con} \derp x:\con} & \infer{y:\llist{\conc}\derp y:\llist{\conc} }{y:\llist{\conc}\derp y:\conc }
}
\]
and $\eqset=\{\conb\doteq \llist{\con} \arrow \con, \conb=\llist{\conc}\arrow \cond \}$. 
The set $\eqset$ reduces to
$\{\conb \doteq \llist{\con}\arrow \con, \conc \doteq \con, \cond \doteq \con \}$,
which is in solved form.
Now consider the two following scenarios:
\begin{itemize}
\item Let $\Expand{\llist{\conc}, 1, \Pi} = (\Pi',\eqset')$, where $\Pi'$ is:
\[
\infer{\Pi' \dem y:\llist{\conc, \conc'} \derp (\lambda x.x)y:\cond }
{\infer{\derp \lambda x.x: \conb}{x:\llist{\con} \derp x:\con} & \infer{y:\llist{\conc, \conc'}\derp y:\llist{\conc, \conc'} }{y:\llist{\conc}\derp y:\conc & y:\llist{\conc'}\derp y:\conc'}
}
\]
and $\eqset' = \{\conb\doteq \llist{\con} \arrow \con, \conb=\llist{\conc,\conc'}\arrow \cond \}$. 
$\eqset'$ reduces to 
$\{\conb \doteq \llist{\con}\arrow \con, \cond\doteq\con, \llist{\con} \doteq \llist{\conc, \conc'} \}$, 
which is in blocked form. 
But expansion cannot get rid of the block, since $ \llist{\con}$ does not occur in the conclusion of a rule $\many$, 
and consequently $\Expand{ \llist{\con},1,\Pi'} = (\Pi', \eqset')$.
The situation can only be unblocked performing $\Erase{\llist{\conc,\conc'},1,\Pi'}$.
\item Let $\Erase{\llist{\conc},1,\Pi} = (\Pi',\eqset')$, where $\Pi'$ is:
\[
\infer{ \Pi' \dem \derp (\lambda x.x)y:\cond }
{\infer{\derp \lambda x.x: \conb}{x:\llist{\con} \derp x:\con} 
 & \derp y: \ellist}
\]
and $\eqset' = \{\conb\doteq \llist{\con} \arrow \con, \conb=\ellist \arrow \cond \}$.
$\eqset'$ reduces to $\{\conb\doteq \llist{\con} \arrow \con, \cond \doteq \con,  \llist{\con}\doteq\ellist \}$,
which cannot be unblocked neither by expansions nor erasures, as $\Erase{\llist{\con},1,\Pi'} = (\Pi', \eqset')$. 
\end{itemize} 
\end{example}

\subsection{Correctness and Completeness}
In this section we prove that the notion of (minimal) pseudo-derivation, together with 
expansion, erasure, and substitution operations, is \emph{correct and complete} w.r.t. typability in system $\systemIT$.
This guarantees that such basic ingredients are suitable
for the search for (principal) typings in the systems of interest.

In order to transform a pre-type into an actual intersection type,
we define the following two mutually recursive functions 
whose domains are, respectively, pre-types and lists 
(for simplicity, we use the symbol $\ltom$ for both of them):
\[
\ltom(\con)  =  a \qquad \quad 
\ltom(\ls \arrow \A) =  \ltom(\ls) \arrow \ltom(\A) \qquad \quad 
\ltom(\llist{\A_1, \dots, \A_n})  =  \mult{\ltom(\A_1),\dots,\ltom(\A_n)} 
\]
where we assume that $\ltom$ is injective on pre-type variables. 
The function $\ltom$ is extended to pre-type environments and derivations in the standard way.

Using $\ltom$, first we show that, 
starting from any solution of a
pseudo-derivation, one obtains an infinite family of derivations in system $\systemIT$.

\begin{theorem}[Correctness] \label{thm:correctness}
Let $\PD{M}=(\Pi,\eqset)$ be a pseudo-derivation, $(\expseq,\psi)$ be a solution of $(\Pi,\eqset)$,
and $\expseq(\Pi) \dem \tGamma \derp M : \con$.
Then $\phi \circ \ltom \circ \psi \circ \expseq(\Pi) \dem \phi \circ \ltom \circ \psi (\tGamma) \der M : \phi \circ \ltom \circ \psi (\con) $ for all $\phi : \varIT \arrow \IT$.
\end{theorem}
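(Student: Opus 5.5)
The argument reduces to a single structural fact, proved by induction on pseudo-derivations. Since $\expseq(\Pi)$ is itself a pseudo-derivation for $M$ (expansions and erasures are well posed) and $\psi$ solves its associated system $\eqset'$, I prove the following.

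\emph{Auxiliary claim.} For every pseudo-derivation $(\Pi',\eqset')$ with conclusion $\tGamma \derp M : \con$ (or $\tGamma \derp M : \ls$ for a $\many$ root), and every $\psi$ solving $\eqset'$, the tree $\ltom\circ\psi(\Pi')$ is a derivation in $\systemIT$ of $\ltom\circ\psi(\tGamma) \der M : \ltom\circ\psi(\con)$ (resp.\ of the corresponding multiset).

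Granting this, the statement with an arbitrary $\phi$ follows from closure of $\systemIT$ under type substitution. If $\Pi \dem \Gamma \der M : \iA$, then $\phi(\Pi) \dem \phi(\Gamma) \der M : \phi(\iA)$. This is a routine induction: each rule of \Cref{def:int-system} is schematic in types and commutes with $\phi$. In particular $\phi(\Gamma(x) \arrow \iA) = \phi(\Gamma)(x) \arrow \phi(\iA)$, and $\phi$ distributes over $\uplus$ of environments.

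The auxiliary claim goes by induction on the structure of $\Pi'$, using that $\eqset'$ contains the equation sets of all subderivations. Consequently $\psi$ also solves each sub-system, so the induction hypothesis applies to every premise. I check each rule in turn.

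\textbf{Case $\axvar$.} The image of $x:\llist{\con} \derp x:\con$ is $x:\mult{\ltom\psi(\con)} \der x:\ltom\psi(\con)$, which is an instance of $\axvar$.

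\textbf{Case $\many$.} The induction hypothesis gives derivations of $\ltom\psi(\tGamma_i) \der M : \ltom\psi(\con_i)$. Then $\ltom\psi$ maps list concatenation of environments to multiset union, since $\ltom$ forgets order and preserves multiplicities. It also maps $\llist{\con_1,\dots,\con_n}$ to $\mult{\ltom\psi(\con_1),\dots}$. Rule $\many$ therefore applies, including the case $n=0$.

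\textbf{Case $\abs$.} The premise becomes $\ltom\psi(\tGamma) \der M : \ltom\psi(\con)$. Rule $\abs$ yields $\ltom\psi(\tGamma)\cminus x \der \l x.M : \ltom\psi(\tGamma)(x) \arrow \ltom\psi(\con)$. Since $\psi$ solves $\conb \doteq \tGamma(x)\arrow\con$, we have $\ltom\psi(\conb)$ equal to exactly this type. The environment also matches, because $\ltom\psi$ commutes with $\cminus x$ and with evaluation at $x$.

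\textbf{Case $\app$.} The induction hypothesis gives $\ltom\psi(\tGamma) \der M : \ltom\psi(\con)$ and $\ltom\psi(\tDelta) \der N : \mult{\ltom\psi(\conb_1),\dots}$. Since $\psi(\con) = \psi(\llist{\conb_1,\dots,\conb_n}\arrow\conc)$, the function type of $M$ is $\mult{\ltom\psi(\conb_i)}_i \arrow \ltom\psi(\conc)$. Rule $\app$ then gives the required conclusion with environment $\ltom\psi(\tGamma\cons\tDelta)$.

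The only delicate point, and the main obstacle, is the bookkeeping. One must check that $\ltom\circ\psi$ is a homomorphism from pre-types, lists and pre-type environments to types, multisets and environments, sending $\cons$ to $\uplus$ and commuting with $\cminus x$. Note that the freshness and disjointness side conditions play no role in correctness and can simply be discarded.
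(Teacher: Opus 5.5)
Your proposal is correct and follows the paper's route. The paper's proof reads the fact that $\ltom \circ \psi \circ \expseq(\Pi)$ is a $\systemIT$-derivation directly off the definition of solution, then invokes closure of derivations under type substitution $\phi$; your auxiliary claim just spells out, rule by rule, the induction the paper leaves implicit.
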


\begin{proof}
The fact that $\ltom \circ \psi \circ \expseq(\Pi) \dem \ltom \circ \psi (\tGamma) \der M : \ltom  \circ \psi (\con) $
comes from \Cref{def:pd-solution}. The result follows, 
as derivations are closed under substitution of type variables.
\end{proof}

Second, we show that all derivations for a term $M$ in system $\systemIT$ 
can be obtained starting from the \emph{minimal} pseudo-derivation $\PDmin{M}$. 
The proof of this property is split in two steps: 
the first point of \Cref{lem:from-pd-to-der}
shows that any pseudo-derivation can be obtained from the minimal
one via a suitable sequence of erasures and expansions, while the second point states that 
any derivation $\Sigma$ can be obtained by substitution from a pseudo-derivation $\Pi$
sharing the same structure as $\Sigma$.

\begin{lemma} \label{lem:from-pd-to-der} \
\begin{enumerate}
\item Let $\PD{M} = (\Pi, \eqset)$ and $\PDmin{M}=(\Pi_M, \eqset_M)$. Then there is 
a sequence of expansions and erasures $\expseq$ such that 
$(\Pi, \eqset)=\expseq(\Pi_M, \eqset_M)$.
\item Let $\Sigma \dem \Gamma \der M:\iA$. Then there are a pseudo-derivation $\PD{M}=(\Pi, \eqset)$ 
and a substitution $\psi$ such that $\psi$ solves $\eqset$ and $\Sigma = \ltom \circ \psi(\Pi)$.
\end{enumerate}
\end{lemma}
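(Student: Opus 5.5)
Both points go by structural induction on the term $M$. Since pseudo-derivations are syntax-directed, a pseudo-derivation for $M$ is determined, modulo renaming of pre-type variables, by one piece of data: the number of premises chosen in each rule $\many$ occurring in it, recursively. The same holds for derivations in $\systemIT$.

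For the first point, I would reason on the \emph{shape} of $\Pi$. This is the tree obtained by forgetting pre-types and recording only the arity $k\geq 0$ of each $\many$ rule. I proceed top-down, i.e. from the root towards the leaves, along the argument positions of $M$. The $\many$ rules of $\Pi_M$ that are not nested inside the argument of another $\many$ rule are exactly the $\many$ rules of $\Pi$ at the same positions, each of arity $1$ in $\Pi_M$. For each such rule with list $\ls$ and target arity $k$, I apply $\Expand{\ls,k-1}$ if $k>1$, $\Erase{\ls,1}$ if $k=0$, and nothing if $k=1$. The result has the correct arity at the outer level, and its premises are fresh disjoint copies of $\PDmin{N}$. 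The corresponding premises of $\Pi$ are pseudo-derivations $\PD{N}$ for the argument $N$. By the induction hypothesis, each of them is obtained from $\PDmin{N}$ by a sequence of expansions and erasures. Since the copies are pairwise disjoint and fresh, their lists occurring in $\many$ conclusions are distinct. Hence these sequences act independently inside each copy, and I can concatenate them. A final renaming of pre-type variables, which is allowed because pseudo-derivations are taken modulo renaming, makes the result coincide with $(\Pi,\eqset)$. The equation sets then agree automatically, because $\eqset$ is computed from the tree. The point needing care is exactly this independence, i.e. that an operation addressed by a list $\ls$ hits only the intended $\many$ rule. Freshness of all variables introduced by $\many$ and $\axvar$ rules guarantees this, since the lists occurring in distinct $\many$ conclusions are pairwise disjoint.

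For the second point, I induct on $\Sigma$ and build $\Pi$ with the same tree structure, with fresh pre-type variables at every node. I also build $\psi$ on the pre-type variables of $\Pi$, choosing for each variable a pre-type $\A$ with $\ltom(\A)$ equal to the corresponding type in $\Sigma$. Such an $\A$ exists by choosing a preimage under $\ltom$, ordering multisets into lists and reading type variables as pre-type variables via the injectivity of $\ltom$. The cases are as follows.
\begin{itemize}
\item Rule $\axvar$: I set $\psi(\con)$ to a preimage of $\iA$.
\item Rule $\many$: I combine the disjoint substitutions given by the induction hypothesis on the premises; disjointness makes the union well defined.
\item Rule $\abs$ and rule $\app$: I also set $\psi$ on the fresh variable $\conb$ (resp. $\conc$) to the preimage of the conclusion type, and I check the new equation.
\end{itemize}

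The main obstacle is the checking of the new equations, because multisets are unordered while lists are ordered. In rule $\app$, the premise $\Delta\der N:\ms$ has been translated into a list $\llist{\conb_1,\dots,\conb_n}$ in some order, and $\psi(\con)$ must equal $\psi(\llist{\conb_1,\dots,\conb_n})\arrow\psi(\conc)$ syntactically. To solve this, I would strengthen the induction: the statement holds for every choice of preimage of the conclusion type. Equivalently, I reorder the premises of the $\many$ rule for $N$ to match the list order chosen inside $\psi(\con)$. This is possible because $\many$ premises are freely permutable and the multiset equality in $\Sigma$ gives a bijection. Rule $\abs$ is handled similarly: I fix the order of $\tGamma(x)$ as the concatenation induced by the tree, and I define $\psi(\conb)=\psi(\tGamma(x))\arrow\psi(\con)$, so its equation holds by construction. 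Finally, $\Sigma=\ltom\circ\psi(\Pi)$ follows node by node.
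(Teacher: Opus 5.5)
Your first point is the paper's argument reorganised top-down. Fixing the arity of the outermost $\many$ rules first, and only then running the inductive sequences inside the fresh copies, is exactly the paper's order of composition. That order is necessary, because expansion and erasure reset all premises to fresh minimal copies. Your construction for the second point is also the paper's. The gap is in how you resolve the ordering obstacle you correctly isolate in the $\app$ case.

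The strengthened hypothesis, that the statement holds for \emph{every} preimage of the conclusion type, is false. It already clashes with your own $\abs$ case, where the conclusion pre-type is forced to be $\psi(\tGamma(x))\arrow\cdots$. There $\tGamma(x)$ lists the occurrences of $x$ in tree order, and no reordering of $\many$ premises changes that order when the occurrences lie in different branches of an $\app$. For instance, suppose $\Sigma$ derives $z:\mult{\mult{a}\arrow\mult{b}\arrow c}\der \lambda t.\,z\,t\,t:\mult{a,b}\arrow c$. Then every $(\Pi,\psi)$ of that shape yields $\llist{\con,\conb}\arrow\conc$, never $\llist{\conb,\con}\arrow\conc$. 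Nor is the strengthening ``equivalent'' to permuting the premises of the $\many$ rule for the argument. That permutation aligns the two lists only at top level, whereas lists with the same $\ltom$-image can differ inside their elements, e.g. $\llist{\llist{\con,\conb}\arrow\conc}$ versus $\llist{\llist{\conb,\con}\arrow\conc}$.

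Both failures occur together in $M=(\lambda f.\,f\,u)\,(\lambda t.\,z\,t\,t)$, with $\Sigma$ ending in $z:\mult{\mult{a}\arrow\mult{b}\arrow c},\,u:\mult{a,b}\der M:c$. The argument has a single premise, and its pre-type is forced to be $\llist{\con,\conb}\arrow\conc$. Suppose that in the subderivation for $\lambda f.\,f\,u$ your axiom case gives $f$ the preimage $\llist{\conb,\con}\arrow\conc$, and the premises for $u$ are ordered to match. Then the function side has domain $\llist{\llist{\conb,\con}\arrow\conc}$ and the equation at the root fails. Permuting at the root does nothing, and the strengthened hypothesis cannot be invoked for the argument.

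What does work is going back inside the function side and reordering a different, deeper $\many$ rule (the one for $u$). Equivalently, you choose the preimage at $f$'s axiom after seeing the argument. So preimages and premise orders must be chosen coherently at every depth. You would need an invariant separating adjustable list positions (those coming from $\many$ premises or from axiom preimages) from rigid ones (abstraction domains in tree order). You would also need to show that every mismatch at an application can be absorbed by an adjustable position, consistently with the reordering this induces on environments and hence on enclosing abstraction domains.

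Be aware that the paper's own proof makes exactly your one-level move: it asserts that $\lstwo$ and $\psi_2(\ls)$ are ``the same up to permutation'' because they have the same $\ltom$-image. The example above shows that this is too quick there as well. Your diagnosis of the obstacle is sharper than the paper's text, but neither of the fixes you propose closes the point.
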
 

\begin{proof}
See \Cref{sec:completeness-appendix}.
\end{proof}

\begin{theorem}[Completeness] \label{thm:completeness}
Let $\Sigma \dem \Gamma \der M : \iA$ and $\PDmin{M} = (\Pi_M,\eqset_M)$.
Then there is a solution $(\expseq,\psi)$ of $(\Pi_M, \eqset_M)$ such that 
$\Sigma = \ltom \circ \psi \circ \expseq(\Pi_M)$.
\end{theorem}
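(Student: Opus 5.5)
The Completeness theorem follows by chaining the two points of \Cref{lem:from-pd-to-der}. The idea is to first recover a pseudo-derivation with the same shape as $\Sigma$, and then reach it from the minimal one by structural operations.

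\textbf{Step 1.} Starting from $\Sigma \dem \Gamma \der M : \iA$, apply the second point of \Cref{lem:from-pd-to-der}. This yields a pseudo-derivation $\PD{M} = (\Pi, \eqset)$ and a substitution $\psi$ such that $\psi$ solves $\eqset$ and $\Sigma = \ltom \circ \psi(\Pi)$.

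\textbf{Step 2.} Apply the first point of \Cref{lem:from-pd-to-der} to $(\Pi,\eqset)$ and $\PDmin{M} = (\Pi_M, \eqset_M)$. This gives a sequence $\expseq$ of expansions and erasures with $(\Pi,\eqset) = \expseq(\Pi_M,\eqset_M)$.

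\textbf{Step 3.} Since $\expseq(\Pi_M,\eqset_M) = (\Pi,\eqset)$ and $\psi$ solves $\eqset$, the pair $(\expseq,\psi)$ is a solution of $(\Pi_M,\eqset_M)$ in the sense of \Cref{def:pd-solution}. Moreover, $\ltom \circ \psi \circ \expseq(\Pi_M) = \ltom\circ\psi(\Pi) = \Sigma$, which is the claimed equality.

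The only delicate point is that pseudo-derivations are taken modulo renaming of pre-type variables. The pseudo-derivation produced by $\expseq$, which introduces fresh copies of $\PDmin{N}$, may coincide with the $\Pi$ from Step 1 only up to such a renaming $\rho$. In that case I replace $\psi$ by $\psi\circ\rho^{-1}$. This substitution solves the renamed system and gives the same image under $\ltom$, because $\ltom$ is injective on variables and renaming commutes with solving.

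The real content therefore lies in the lemma; the theorem itself is just this composition.
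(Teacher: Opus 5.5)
Your proposal is correct and follows the paper exactly: the paper proves the theorem as an immediate consequence of \Cref{lem:from-pd-to-der}, and you compose its second point with its first point just as intended. Your remark on renaming, replacing $\psi$ by $\psi\circ\rho^{-1}$, is valid and makes explicit a detail the paper leaves implicit. The appeal to injectivity of $\ltom$ there is unnecessary, though, since the equality $\psi\circ\rho^{-1}(\rho(\Pi)) = \psi(\Pi)$ already holds at the level of pre-types.
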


\begin{proof}
Immediate consequence of \Cref{lem:from-pd-to-der}.
\end{proof}

\subsection{The Strong Case}

This brief section recasts some notions to fit the setting of system $\systemITsn$.
We start from pseudo-derivations, which in the strong case do not allow empty lists in the subject type. 

\begin{definition}[Pseudo-derivations for $\systemITsn$] \label{def:strong-pseudo} \
\begin{itemize}
\item A \emph{strong pseudo-derivation} for $M$ is built as per the rules of \Cref{fig:pseudo}, but
imposing $n \geq 1$ in rule $\many$ and replacing rule $\abs$ by the two rules:
\begin{gather*}
\vcenter{ 
\infer[\absi]{\Pi \dem \tGamma \derp \lambda x. M : \conb}
{\Sigma \dem \tGamma, x:\ls \derp M : \con  & \ls \not = \ellist &\conb \text{ fresh}} 
} \qquad 
\eqset_\Pi = \eqset_{\Sigma} \cup \set{\conb \doteq \ls \arrow \con} 
\\[3pt]
\vcenter{ 
\infer[\absk]{\Pi \dem \tGamma \derp \lambda x. M: \conc}
{\Sigma \dem \tGamma \derp M:\con & x \not \in \dom{\tGamma} & \conb,\conc \text{ fresh}} 
} \qquad 
\eqset_\Pi = \eqset_\Sigma \cup \set{\conc \doteq \llist{\conb} \arrow \con} 
\end{gather*}
\item A \emph{minimal strong pseudo-derivation} for $M$ is a strong pseudo-derivation such that $n=1$
in all rules $\many$. Since the minimal strong pseudo-derivation is unique, 
modulo renaming of pre-type variables, we refer to it as $\PDsnmin{M}$.
\end{itemize}
\end{definition}

To adapt the expansion operation to strong pseudo-derivations,
it suffices to replace $\PDmin{N}$ by $\PDsnmin{N}$ in the first point of \Cref{def:expansion}.
On the other hand, erasure is not needed.
This means that a solution of $\PDsnmin{M}$ consists of a sequence of \emph{expansions only},
together with a final unifying substitution. 
\Cref{thm:correctness}, \Cref{lem:from-pd-to-der} and \Cref{thm:completeness} 
are also easily adapted to the {strong setting}:

\begin{lemma} \label{lem:from-pd-to-der-strong} \
\begin{enumerate}
\item Let $\PDsn{M} = (\Pi, \eqset)$ and $\PDsnmin{M}=(\Pi_M, \eqset_M)$. Then there is 
a sequence of expansions $\expseq$ such that $(\Pi, \eqset) = \expseq(\Pi_M, \eqset_M)$.
\item Let $\Sigma \dem \Gamma \dersn M: \iA$. Then there are a strong pseudo-derivation $\PDsn{M}=(\Pi, \eqset)$ 
and a substitution $\psi$ such that $\psi$ solves $\eqset$ and $\Sigma = \ltom \circ \psi(\Pi)$.
\end{enumerate}
\end{lemma}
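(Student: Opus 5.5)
We adapt the proof strategy of \Cref{lem:from-pd-to-der} to strong pseudo-derivations. Both points are proved by structural induction: on the strong pseudo-derivation $\Pi$ for the first point, and on the derivation $\Sigma$ in $\systemITsn$ for the second.

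For the first point, we proceed by induction on $M$, or equivalently on the structure of $\Pi$. Strong pseudo-derivations for $M$ differ only in the arity $n \geq 1$ of their $\many$ rules, since for each term the rule to be applied at each node is determined syntactically. In particular, the choice between $\absi$ and $\absk$ depends only on whether $x$ occurs free in the body: in a relevant system $\dom{\tGamma}$ coincides with the free variables of the subject, because every list has length at least one. We process the $\many$ rules of $\Pi$ top-down, from the root towards the leaves. Let a $\many$ rule of $\Pi$ for a subterm $N$ have $m \geq 1$ premises. The corresponding rule of the current pseudo-derivation, obtained from $\PDsnmin{M}$, has a single premise and a conclusion list $\ls = \llist{\con_1}$. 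If $m>1$ we apply $\Expand{\ls,m-1}$, which produces $m$ fresh disjoint copies of $\PDsnmin{N}$. We then recursively expand the $\many$ rules inside each copy to match the corresponding premise subtrees of $\Pi$. No erasure is ever needed, because $m \geq 1$ everywhere. Two technical points must be checked. First, the target list $\ls$ must identify a unique $\many$ rule; this holds because copies are fresh and disjoint, so the variables in distinct lists never collide. Second, the result must equal $\Pi$ modulo renaming of pre-type variables, and the equation sets must coincide; both follow because $\eqset$ is determined by the tree, rule by rule as in \Cref{fig:pseudo}. Formally we would induct on the number of nodes, using that expansions on disjoint subtrees commute.

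For the second point, we proceed by induction on $\Sigma$ and build a $\Pi$ with the same shape, together with a $\psi$ such that $\ltom\circ\psi(\Pi)=\Sigma$ and $\psi$ solves $\eqset_\Pi$. Since $\ltom$ is injective on variables, each type variable $a$ can be read back as the pre-type variable $\con$ with $\ltom(\con)=a$. For $\axvar$ with type $\iA$, we take a fresh $\con$ and set $\psi(\con)$ to be a pre-type $\A$ such that $\ltom(\A)=\iA$, choosing an ordering of every multiset. For $\many$ and $\app$, we rename the inductively obtained pseudo-derivations apart so that they are disjoint, and take the union of the substitutions, which is well defined on disjoint domains. The orderings of the lists must then be made compatible. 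For $\app$, the argument's list $\llist{\conb_1,\dots,\conb_n}$ has its order fixed by the $\many$ premises. We order the multiset $\ms$ in $\psi(\con)=\ls\arrow\A$ for the function accordingly, choosing the pre-type image of the function's type so that $\psi(\con)=\llist{\psi(\conb_1),\dots,\psi(\conb_n)}\arrow\psi(\conc)$, where $\psi(\conc)$ is fresh.

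The main obstacle is this reordering. Choosing a list representative of a multiset inside a subderivation constrains all occurrences of it, for instance through the environment lists propagated by $\absi$. We would therefore strengthen the induction hypothesis: for every choice of list orderings of the conclusion type and the environment compatible with $\Sigma$, there are $\Pi$ and $\psi$ realizing exactly those orderings. This is possible because the rule $\many$ can permute its premises freely, and $\axvar$ leaves are independent. Concretely, we use the fact that $\psi$ may send each variable to any pre-type lifting of a given type, together with a permutation of premises of $\many$ rules. The $\absi$ case then sets $\psi(\conb)=\psi(\tGamma(x))\arrow\psi(\con)$, with $\tGamma(x)\neq\ellist$ because $\ms\neq\emult$. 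The $\absk$ case sets $\psi(\conb)$ to a lifting of the side type $\iA\in\ITsn$ and $\psi(\conc)=\llist{\psi(\conb)}\arrow\psi(\con)$. No empty lists arise, since $\Sigma$ is in $\systemITsn$, so $\Pi$ is a strong pseudo-derivation.
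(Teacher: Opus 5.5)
\textbf{Verdict: there is a genuine gap, in the reordering step of the second point.}

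Your first point is fine and coincides with the paper's argument. You expand the $\many$ rules top-down, and no erasure is needed since $n\geq 1$. Your remark that in the strong setting the choice between $\absi$ and $\absk$ is forced by the term is a useful detail the paper leaves implicit. For the second point your basic construction is also the paper's: induction on $\Sigma$, a $\Pi$ of the same shape, and a union of substitutions. The gap is exactly at the step you single out as the main obstacle.

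\textbf{The strengthened hypothesis is false.} You claim that every ordering of the lists in the conclusion type and environment that collapses to $\Sigma$ is realized by some $(\Pi,\psi)$. Your justification, that ``$\many$ permutes its premises freely and $\axvar$ leaves are independent'', misses two rigidities.
\begin{itemize}
\item An axiom $x:\mult{\iA}\dersn x:\iA$ forces the same lifting of $\iA$ in its environment and in its type.
\item A list $\tGamma(x)$ abstracted by $\absi$ is a concatenation taken in the fixed left-to-right order of $\app$ premises, which cannot be permuted. Moreover, a $\many$ permutation acts simultaneously on the subject list and on every environment list of its conclusion.
\end{itemize}
Concretely, consider $\dersn \lambda x.x\,x : \mult{\mult{a}\arrow b,a}\arrow b$. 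The only admissible shape has a single-premise $\many$, so $\psi$ of the conclusion variable is necessarily $\llist{\llist{\A}\arrow\B,\A}\arrow\B$. The ordering $\llist{\A,\llist{\A}\arrow\B}\arrow\B$ is unreachable. Hence in the $\app$ case you cannot prescribe the lifting of the function's type to fit the argument; take $P=\lambda x.x\,x$.

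\textbf{What the paper does instead.} It keeps $(\Pi_1,\psi_1)$ from the induction hypothesis. It then permutes the premises of the argument's $\many$ rule so that $\psi_2(\ls)$ becomes the domain list of $\psi_1(\conb)$. Even this aligns only the outermost list. In $(\lambda y.y\,z)(\lambda x.x\,x)$ the argument's $\many$ rule has a single premise. Agreement of the nested list there may require re-choosing the lifting at the axiom for $y$, and permuting the $\many$ rule for $z$ inside the function's subderivation.

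\textbf{What a repair needs.} A correct replacement for your hypothesis must speak only about orderings reachable by permuting $\many$ premises, each permutation acting jointly on the subject and the environments. You would then have to show that the alignments demanded at $\app$ nodes are always of this kind. That argument is what your proposal is missing.
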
 

\begin{theorem}[Correctness and Completeness in the Strong Case] \label{thm:corr-comp-strong} \
\begin{itemize}
\item Let $\PDsn{M}=(\Pi,\eqset)$ be a strong pseudo-derivation, 
$(\expseq,\psi)$ be a solution of $(\Pi,\eqset)$,
and $\expseq(\Pi) \dem {\tGamma \derp M : \con}$.
Then $\phi \circ \ltom \circ \psi \circ \expseq(\Pi) \dem \phi \circ \ltom \circ \psi (\tGamma) \dersn M : \phi \circ \ltom \circ \psi (\con) $ for all $\phi : \varIT \arrow \IT$.
\item Let $\Sigma \dem \Gamma \dersn M : \iA$ and $\PDsnmin{M} = (\Pi_M,\eqset_M)$.
Then there is a solution of $(\expseq,\psi)$ of $(\Pi_M, \eqset_M)$ such that 
$\Sigma = \ltom \circ \psi \circ \expseq(\Pi_M)$.
\end{itemize}
\end{theorem}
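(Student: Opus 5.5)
The plan is to mirror the non-strong case: the first point is proved like \Cref{thm:correctness}, the second point is derived from \Cref{lem:from-pd-to-der-strong} just as \Cref{thm:completeness} follows from \Cref{lem:from-pd-to-der}. The only genuinely new work is checking that the strong restrictions (no empty lists, the split of $\abs$ into $\absi$ and $\absk$) are preserved by each operation.

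\textbf{Correctness.} Expansions only increase the number of premises of a $\many$ rule and insert fresh copies of $\PDsnmin{N}$. Hence $\expseq(\Pi)$ is still a strong pseudo-derivation: every $\many$ has $n \geq 1$, and every abstraction is an instance of $\absi$ or $\absk$. Since $\psi$ solves $\eqset'$, I then proceed by induction on $\expseq(\Pi)$ and show that $\ltom \circ \psi$ maps each rule to the corresponding rule of $\systemITsn$.
\begin{itemize}
\item For $\app$, the equation $\con \doteq \llist{\conb_1,\dots,\conb_n} \arrow \conc$ gives exactly the required arrow type.
\item For $\absi$, the equation $\conb \doteq \ls \arrow \con$ with $\ls \neq \ellist$ gives a non-empty multiset.
\item For $\absk$, the type $\ltom(\psi(\conb))$ is the required extra type $\iA$.
\item For $\many$, $n \geq 1$ is inherited directly.
\end{itemize}
One must also check that every type produced lies in $\ITsn$, i.e.\ that $\psi(\con)$ contains no empty list. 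This holds because $\psi$ is a solution of equations whose lists are all non-empty, so $\ellist$ can never be forced. If $\psi$ does introduce $\ellist$ through a variable unconstrained by the equations, I would either restrict to solutions with codomain in strong pre-types (the natural reading of the strong setting) or note that such variables can be renamed away. Finally, closure of $\systemITsn$ derivations under $\phi : \varIT \arrow \ITsn$ gives the stated family. I take $\phi$ to range over strong types; arbitrary types could otherwise introduce $\emult$.

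\textbf{Completeness.} Given $\Sigma \dem \Gamma \dersn M : \iA$:
\begin{enumerate}
\item Apply the second point of \Cref{lem:from-pd-to-der-strong} to obtain a strong pseudo-derivation $(\Pi,\eqset)$ and a substitution $\psi$ solving $\eqset$ with $\Sigma = \ltom \circ \psi(\Pi)$.
\item Apply the first point of \Cref{lem:from-pd-to-der-strong} to obtain a sequence of expansions $\expseq$ with $(\Pi,\eqset) = \expseq(\Pi_M,\eqset_M)$.
\item Conclude that $(\expseq,\psi)$ is a solution of $(\Pi_M,\eqset_M)$ and that $\Sigma = \ltom \circ \psi \circ \expseq(\Pi_M)$.
\end{enumerate}
Renaming of pre-type variables is absorbed into $\psi$.

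\textbf{Main obstacle.} The delicate step is the one hidden in \Cref{lem:from-pd-to-der-strong}. For the first point, induction on $\Pi$ works: at each $\many$ rule with $k \geq 1$ premises, the minimal derivation has one premise, so $\Expand{\ls,k-1}$ adds the missing copies. The expansions must be ordered outermost-first, so that the lists they refer to exist when they are applied, and the fresh copies are then recursively expanded. For the second point, I build $\Pi$ by induction on $\Sigma$, choosing fresh variables at each node and defining $\psi$ to map each variable to the pre-type counterpart of the type at that node. In the $\absk$ case, $\psi(\conb)$ is the pre-type of $\iA$. Because lists are ordered and multisets are not, I fix any enumeration of each multiset consistent with the order of the $\many$ premises; the disjointness conditions $\Sigma_i * \Sigma_j$ are guaranteed by freshness.
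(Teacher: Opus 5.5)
Your proposal is correct and follows the paper's implicit argument. The first point is \Cref{thm:correctness} unfolded rule by rule: the solved equations make $\ltom \circ \psi$ send each $\absi$, $\absk$, $\app$, $\many$ node of $\expseq(\Pi)$ to the corresponding $\systemITsn$ rule, and you then close under substitution. The second point chains the two parts of \Cref{lem:from-pd-to-der-strong}, exactly as \Cref{thm:completeness} follows from \Cref{lem:from-pd-to-der}.

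Your remark that $\phi$ must range over $\varIT \arrow \ITsn$, and $\psi$ must not introduce $\ellist$, is a genuine and necessary correction. The statement as written allows arbitrary $\phi : \varIT \arrow \IT$ and arbitrary pre-types for $\psi$, and then it can produce judgements outside $\systemITsn$.
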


%% file: algo.tex

Building upon the notion of expansion, 
in this section we design a \emph{semi-algorithm} that, taken in input a term $M$,
tries to solve the system of constraints generated by $\PDsnmin{M}$. 
We then prove that the algorithm yields a solution if and only if $M$ is strongly $\beta$-normalizing. 

To uniquely identify the $\many$ rule that needs to be modified by an expansion (resp. erasure), 
\Cref{def:expansion} relies on on the fact that, by construction, each subtree of 
a pseudo-derivation contains disjoint pre-type variables. In order to preserve this disjointness property
throughout algorithm execution, we introduce a slight modification of the standard unification rules: 
specifically, we do not replace occurrences of pre-type
variables that are contained inside lists, so that no ambiguity can arise.

\begin{definition}
$S \rewo S'$ means that $S'$ is obtained from $S$ 
by applying one of the rules in \Cref{fig:unification-intersection}, 
but replacing rule $\usubs$ by rule:
\[
\infer[\usubszero]
{ S\zerosubs{\con}{\A} \cup \{\con \doteq \A\} }
{ S \cup \{\con \doteq \A\} & \con \notin \Tvar(\A) & \con \in \outvar(S) }
\]
where  $\outvar(S) \subseteq \Tvar(S)$
is the set of all pre-type variables that occur \emph{outside} a list in $S$, and 
$S\zerosubs{\con}{\A}$ denotes the set obtained 
from $S$ replacing \emph{only} the occurrences of $\con$
that are not contained inside a list.
\end{definition}

Notice that $S$ may be in $\rewo$-normal form but
neither in solved, nor in unsolvable, nor in blocked form,
as testified by $S = \{ \con \doteq \llist{\conb} \arrow \conc, \conb \doteq \llist{\con} \arrow \cond \}$.  
Still, is it possible to isolate some good properties that 
relate the behaviour of $\rewo$ to that of $\rewu$. 

\begin{property} \label{prop:unif-out-confluence}
$\rewo$ is terminating and confluent (modulo renaming of pre-type variables). 
\end{property}

\begin{property} \label{prop:unif-out} \
\begin{itemize}
\item Writing $\nfo{S}$ for the unique $\rewo$-normal form of $S$, one has $\nfo{S} \rewur \nf{S}$.
\item If $\nfo{S}$ is not in blocked form, then $\nf{S}$ is not in blocked form
(that is, $\nf{S}$ is either in solved or unsolvable form).
\end{itemize}
\end{property}

\begin{proof}
First observe that if $\nfo{S}$ is not in blocked form, then it must have shape:
\[ \nfo{S} = 
\set{ \con_i \doteq \ls^i_1 \arrow \dots \arrow  \ls^i_{m_i} \arrow \conb_i }_{1 \leq i \leq n}
\quad (m_i \geq 0) 
\] 
Moreover $\con_1,\dots,\con_n$ are all distinct, 
and if $\set{\con_1,\dots,\con_n} \cap \set{\conb_1,\dots,\conb_n} \not = \emptyset$ 
then $\nfo{S}$ is in unsolvable form. Indeed, we cannot have $\con_i = \con_j$ for $i \not = j$, 
otherwise $\usubszero$ could be applied. 
In addition, we cannot have $\con_i = \conb_j$ for $i \not = j$, 
otherwise $\usubszero$ could substitute the tail occurrences of $\conb_j$ in the RHS. 
The only scenario where $\set{\con_1,\dots,\con_n} \cap \set{\conb_1,\dots,\conb_n} \not = \emptyset$
but neither $\uerase$ nor $\usubszero$ can be applied is 
if $\con_i = \conb_i$ for at least one $i$ such that $m_i \geq 1$, in which case $\nfo{S}$ is in unsolvable form.
Lastly observe that, in order to go from $\nfo{S}$ to $\nf{S}$, it suffices to apply $\usubs$ rules
replacing the occurrences of LHS variables that are found inside lists, 
which $\usubszero$ could not replace. As these substitutions cannot generate blocked equations, we conclude.
\end{proof}



\subsection{The Semi-algorithm}
We are now ready to discuss the semi-algorithm $\InferStrong$. 
The input of $\InferStrong$ is a term $M$; the output, if it terminates, is a pair $(\Pi, \psi)$, where
$\Pi$ is a pseudo-derivation for $M$ whose associated set of equations $\eqset$ is such that 
$\nf{\eqset}$ is in solved form, and $\psi$ is the substitution induced by it.

\begin{notation} 
$\Blocked(S) = \mathtt{true}$ if and only if the set of equations $S$ is in blocked form.
\end{notation}

\begin{algorithm}
\caption{The non-deterministic semi-algorithm $\InferStrong$. Input: a term $M$.}
\label{algorithm}
\begin{algorithmic}[1]
\Function{$\InferStrong(M)$}{}
	\State $(\Pi, \eqset) \gets \PDsnmin{M}$
	\State $\eqset \gets \nfo{\eqset}$ 
	\While {$\Blocked(\eqset)$}  \label{alg:while-begin}
		\State \textbf{choose} $ (\ls \doteq \lstwo) \in \eqset$ such that $|\ls | \not= |\lstwo |$
			\If {$ |\ls | > |\lstwo | $}
				\State $n \gets |\ls |-|\lstwo |$
				\State $(\Pi,\eqset) \gets \Expand{\lstwo, n, \Pi}$ 
			\Else 
				\State $n \gets |\lstwo|-|\ls | $
				\State $(\Pi,\eqset) \gets \Expand{\ls, n, \Pi}$ 
			\EndIf
		\State $\eqset \gets \nfo{\eqset}$ 
	\EndWhile \label{alg:while-end}
	\State $\eqset \gets \nf{\eqset}$
	\State $\psi \gets \mgu{\eqset}$
	\State \Return $(\Pi,\psi)$ 
\EndFunction
\end{algorithmic} 
\end{algorithm} 


The algorithm operates on pairs of shape $(\Pi,\eqset)$, consisting of a pseudo-derivation 
and a (possibly partially reduced) set of equations.
Extending $\rewo$ to such pairs, we write $(\Pi,\eqset) \rewo (\Pi,\eqset')$ if 
$\eqset \rewo \eqset'$. 
The writing $(\Pi,\eqset) \rewe (\Pi',\eqset')$ means that $(\Pi',\eqset')$ is
the result of an expansion that $\InferStrong$
can perform on $\Pi$ to try to unblock a blocked equation in $\eqset$,
as per the algorithm description. 
Lastly, $(\Pi,\eqset) \Rew (\Pi',\eqset')$ means that $(\Pi,\eqset) \rewor (\Pi,\nfo{\eqset}) \rewe (\Pi',\eqset')$.
Adopting these conventions, a run of $\InferStrong$ 
can be understood as a sequence of $\Rew$ steps, followed by a sequence of $\rewu$ steps.

Note that, since the system of equations is recomputed
at each expansion\footnote{Of course, in an actual implementation, 
efficiency could be improved by keeping track of the modified constraints only, 
so to avoid repeating unification steps. 
Here we favoured clarity of exposition over efficiency.}, 
the only purpose of the intermediate $\rewo$-reduction steps is to expose blocked
equations, thus guiding the structural changes to the pseudo-derivation.
The algorithm is non-deterministic, as blocked equations are randomly chosen.

We point out that the algorithm checks whether the set of equations 
is in blocked form, but does not consider the possibility 
it may be in unsolvable form. The following \Cref{lem:eqset-not-unsolvable}, 
of which we do not provide a formal proof, 
motivates this design choice by stating that applying unification rules to the system 
of constraints associated to a pseudo-derivation cannot result in an unsolvable form.
Formally proving this statement is surprisingly difficult, 
and interestingly none of the works dealing with intersection typability as
an extended unification problem seems to acknowledge the 
relevance of such a result, even when they implicitly rely on it.
An intuitive justification can be provided by comparing \emph{simple} and \emph{intersection} type systems.
Indeed, when checking for simple typability of a given term, circular equations may be generated;
this is essentially due to the strict conditions imposed 
on type environments, which are required to agree on common variables.
In contrast, intersection type systems  
impose no constraint on the union of environments, thus preventing circularities.  

\begin{lemma} \label{lem:eqset-not-unsolvable}
If $\PDsn{M} = (\Pi,\eqset)$, then $\nf{\eqset}$ is not in unsolvable form.
\end{lemma}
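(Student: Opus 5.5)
The plan is to exploit the \emph{linearity} of the constraint systems generated by pseudo-derivations. The claim is that in $\eqset_\Pi$ every pre-type variable occurs \emph{at most twice}, and that the equations are organised acyclically. I would then show that these two facts are invariants of $\rewu$. Since a circular equation $\con \doteq \A$ with $\con$ occurring in $\A \neq \con$ violates them, $\nf{\eqset}$ can never be in unsolvable form. This mirrors the folklore fact that affine $\lambda$-terms are always simply typable: a pseudo-derivation is essentially a simple typing of a ``linearised'' copy of $M$. Each premise of a $\many$ rule is a separate copy of the argument, and lists play the role of tuples.

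\textbf{Step 1 (occurrence count).} By induction on the rules of \Cref{def:strong-pseudo} (and \Cref{fig:pseudo}), I would show that every variable of $\eqset_\Pi$ occurs at most twice in the multiset of occurrences formed by $\eqset_\Pi$ together with the conclusion $\tGamma \derp M : \con$ of $\Pi$.
\begin{itemize}
\item An axiom variable $\con$ occurs once in the type and once in the environment. When $x$ is abstracted, the environment occurrence moves into the equation $\conb \doteq \tGamma(x) \arrow \con$.
\item A fresh variable $\conb$ or $\conc$ introduced by $\abs$, $\absi$, $\absk$ or $\app$ occurs once as a left-hand side and once in the conclusion, which is later consumed by at most one enclosing rule.
\item The variable $\conb$ of $\absk$ occurs only once.
\item Disjointness of premises, guaranteed by the side conditions $\Sigma_i * \Sigma_j$, ensures that no new sharing arises.
\end{itemize}
Accordingly, I would define the \emph{occurrence graph} $G(S)$: its nodes are the equations of $S$, with an edge between the equation defining $\con$ (as a left-hand side) and the equation in which $\con$ is used on the right. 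I would check by the same induction that $G(\eqset_\Pi)$ is a forest: each rule adds one new equation connected only to the subtree(s) of its premises, which are pairwise disjoint, and the application rule joins two disjoint components through a single new equation.

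\textbf{Step 2 (invariance under $\rewu$).} I would prove that the rules of \Cref{fig:unification-intersection} preserve the property ``every variable occurs at most twice in $S$, and the graph of variable-sharing between equations has no cycle''. Here the appropriate generalisation of $G$ is a hypergraph whose nodes are the equations and whose edges are the variables, linking the (at most two) equations containing each variable; the invariant is acyclicity of this hypergraph.
\begin{itemize}
\item $\uerase$ deletes a node.
\item $\uswap$ changes nothing relevant.
\item $\uarrow$ and $\ulist$ split one node into several. Because each component variable occurs in exactly one of the resulting pieces, splitting cannot create a cycle.
\item $\ulist$ is only applicable when the lengths agree; unequal-length list equations stay as blocked nodes, which is harmless for acyclicity.
\end{itemize}
The delicate rule is $\usubs$. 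Since $\con$ occurs at most twice, and one occurrence is the left-hand side of $\con \doteq \A$, the substitution replaces a single occurrence. The equation containing it thereby absorbs the variables of $\A$, whose other occurrences, if any, lie in other equations. The counting invariant is preserved because occurrences are moved, not duplicated. Keeping a copy of $\con \doteq \A$ would duplicate the variables of $\A$, so I would treat $\con \doteq \A$ as a solved node that is removed from the working graph, with $\con$ eliminated.

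\textbf{Step 3 (conclusion and main obstacle).} Suppose $\nf{\eqset}$ contained $\con \doteq \A$ with $\con$ occurring in $\A \neq \con$. Then both occurrences of $\con$ lie in one equation, giving a self-loop, which is a cycle in $G$; this contradicts Step 2. I expect the main obstacle to be the precise formulation of the graph invariant in Step 2 that survives $\usubs$. The worry is that substitution merges two nodes, and merging along an edge in a forest keeps it a forest only if the two nodes were adjacent through the eliminated variable $\con$ and through nothing else. Acyclicity ensures exactly this, since two equations sharing two variables would already form a cycle. I would first try to prove this ``merge along a unique edge'' lemma directly. If it becomes messy, I would fall back on constructing an explicit simple-type-like solution of the prefix-truncated system, by a well-founded ordering on variables extracted from the tree structure of $\Pi$.
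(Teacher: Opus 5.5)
\paragraph{There is a genuine gap: the acyclicity invariant of Steps~1--2 is false from the start.}
Your Step~1 count is correct: every variable occurs at most twice in $\eqset_\Pi$ together with the conclusion. The problem is the claim that each rule attaches its new equation to the premises' components only once. The abstraction equation $\conb \doteq \tGamma(x) \arrow \con$ is linked to the body's component twice over: through the codomain $\con$, and through every variable of $\tGamma(x)$, i.e.\ through every occurrence of the bound variable.

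Concretely:
\begin{itemize}
\item For $M=\lambda x.x$ you get $\eqset=\{\conb \doteq \llist{\con}\arrow\con\}$. Here $\con$ occurs twice in one equation, which is a self-loop. This is exactly the configuration your Step~3 declares contradictory.
\item For $M=\lambda x.xy$ you get $\eqset=\{\con \doteq \llist{\conb}\arrow\conc,\ \cond \doteq \llist{\con}\arrow\conc\}$. The two equations share both $\con$ and $\conc$, which by your own remark is already a cycle.
\item Such repetitions also survive in perfectly good solved forms. For $\lambda x.\lambda y.xy$, $\nf{\eqset}$ contains $\cone \doteq \llist{\llist{\conb}\arrow\conc}\arrow\llist{\conb}\arrow\conc$, in which $\conb$ and $\conc$ each occur twice.
\end{itemize}
A variable repeated inside one equation is harmless unless it is the left-hand side variable. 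Your hypergraph cannot distinguish these cases, so ``no self-loops/no cycles'' cannot be an invariant, and the contradiction drawn in Step~3 is unsound.

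\paragraph{Falling back to the left-hand-side edges of your Step~1 graph does not repair this.}
That graph is not even well defined as stated, since a variable can be the left-hand side of two equations. For instance, $\conb$ is the left-hand side of both $\conb\doteq\llist{\con}\arrow\con$ and $\conb\doteq\llist{\conc}\arrow\cond$ in $(\lambda x.x)y$.

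More importantly, acyclicity of that graph together with ``every variable occurs at most twice'' is not closed under $\rewu$. Take $\{\con \doteq \llist{\conb}\arrow\conb,\ \con \doteq \llist{\conc}\arrow\llist{\conc}\arrow\cond\}$. It satisfies both conditions, having no left-to-right edges at all. Yet applying \emph{subs}, \emph{arrow}, \emph{list} and \emph{subs} turns it into a system containing the circular equation $\conc \doteq \llist{\conc}\arrow\cond$. So no local rule-by-rule check of the kind you plan in Step~2 can succeed.

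\paragraph{What is missing.}
You need a global invariant that is sensitive to the \emph{roles} of the two occurrences of each variable. These should behave like the two ends of an axiom link, with opposite polarities; in the counterexample $\conc$ occurs twice negatively. This must come together with a correctness condition in the spirit of proof-net switching acyclicity.

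Alternatively, you could reduce the lemma to the simple typability of linear terms. Your final fallback gestures at this, but it is where all the work lies, and it is unspecified. Because of blocked equations, $\eqset$ may have no solution at all. You would therefore have to define precisely which relaxed (e.g.\ truncated) system you solve, and show that a circular equation of $\nf{\eqset}$ would persist in it.

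Note that the paper gives no formal proof of this lemma either, only the informal remark that intersection systems impose no agreement constraints on environments. Your linearity intuition matches that remark, but the invariant you propose does not hold.
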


\begin{theorem}[Algorithm Correctness] \label{thm:algo-correctness}
Let $\InferStrong(M)=(\Pi, \psi)$, where $\Pi \dem \tGamma \derp M: \con$. Then
$\ltom \circ \psi(\Pi) \dem \ltom \circ \psi (\tGamma) \dersn M : \ltom  \circ \psi (\con) $.
\end{theorem}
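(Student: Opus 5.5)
The plan is to reduce the statement to the first point of \Cref{thm:corr-comp-strong}, instantiated with $\phi$ the identity. The work lies in showing that the output $(\Pi,\psi)$ of $\InferStrong$ really is of the form required there: $\Pi$ must be obtained from $\PDsnmin{M}$ by a sequence of expansions, and $\psi$ must solve the equation set $\eqset_\Pi$ associated to $\Pi$.

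First, I would set up an invariant for the while loop. At every iteration, the current pair $(\Pi,\eqset)$ is such that $\Pi = \expseq(\Pi_M)$ for some sequence of strong expansions $\expseq$ applied to $\PDsnmin{M} = (\Pi_M,\eqset_M)$. Moreover, $\eqset$ is $\rewo$-reachable from $\eqset_\Pi$, so $\eqset_\Pi \rewo^* \eqset$. This holds initially, and each loop step preserves it. The reason is that an expansion of a strong pseudo-derivation (using $\PDsnmin{N}$ for the fresh copies) is again a strong pseudo-derivation, and the algorithm recomputes the equation set from the new $\Pi$ before normalising. One point needs checking: the list $\ls$ or $\lstwo$ chosen in a blocked equation should still name a $\many$ rule of $\Pi$. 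If it does not, the expansion is the identity, and the invariant still holds trivially. So the invariant needs no side conditions.

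Next, I would analyse the exit state. When the loop stops, $\nfo{\eqset_\Pi}$ is not blocked. By \Cref{prop:unif-out}, $\nf{\eqset_\Pi}$ is then either solved or unsolvable. By \Cref{lem:eqset-not-unsolvable}, applied to the strong pseudo-derivation $\Pi$, it is not unsolvable, so it is in solved form. Then $\psi = \mgu{\nf{\eqset_\Pi}}$. The standard soundness of Robinson's rules gives that each $\rewu$ step preserves the set of solutions; this is implicit in \Cref{prop:unif-confluence} and \Cref{prop:unif-nf}, and is easy to check rule by rule. It also applies to $\rewo$ steps, since $\usubszero$ is a restricted form of $\usubs$ and replacing some occurrences of $\con$ by $\A$ in the presence of $\con\doteq\A$ also preserves solutions. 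Hence $\psi$ solves $\eqset_\Pi$. Since $\nf{\eqset_\Pi}$ is in solved form, its left-hand-side variables do not occur on right-hand sides, so $\mgu{\cdot}$ is idempotent and does solve it.

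Thus $(\expseq,\psi)$ is a solution of $\PDsnmin{M}$ in the sense of \Cref{def:pd-solution}, restricted to expansions. The first point of \Cref{thm:corr-comp-strong} with $\phi=\mathrm{id}$ then gives $\ltom\circ\psi(\Pi) \dem \ltom\circ\psi(\tGamma)\dersn M:\ltom\circ\psi(\con)$.

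I expect the main obstacle to be the bookkeeping in the invariant rather than any deep step. Equation sets are recomputed from scratch after each expansion, so the algorithm's $\eqset$ and $\eqset_\Pi$ must be kept properly related. The only nontrivial ingredient, excluding the unsolvable form, is taken from \Cref{lem:eqset-not-unsolvable}.
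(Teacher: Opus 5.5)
Your proposal is correct and follows the paper's own proof, which is a one-line appeal to the same three ingredients: the first point of \Cref{thm:corr-comp-strong}, the second point of \Cref{prop:unif-out}, and \Cref{lem:eqset-not-unsolvable}. Your loop invariant and your remark that $\rewo$ and $\rewu$ steps preserve solutions only make explicit the bookkeeping that the paper leaves implicit.
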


\begin{proof}
By the first point of \Cref{thm:corr-comp-strong}, 
the second point of \Cref{prop:unif-out} and \Cref{lem:eqset-not-unsolvable}.
\end{proof}

\begin{remark}
$\InferStrong$ does not find all possible solutions of $\PDsnmin{M}$, 
because expansions are used only when strictly necessary (\ie to unblock a blocked equation),
and in a minimal way (\ie without enlarging lists more than required).
\end{remark}

\subsection{Termination} 

Since $\InferStrong$ is non-deterministic, we say that 
the algorithm terminates if \emph{there exists} a terminating execution path.
We now show that if $M$ is strongly $\beta$-normalizing, then $\InferStrong(M)$ terminates.
The key idea for proving this result is to relate the behaviour of the algorithm to the \emph{reduction} of $M$;
this intuition can be strengthened by stating that $\InferStrong$ essentially behaves as a (non-deterministic) 
interpreter of $\l$-calculus, incrementally constructing a type for $M$ by reducing it. 

From this perspective, taking minimal pseudo-derivations as the starting
point is motivated by the fact that the minimal pseudo-derivation for $M$ matches the structure
of the term, \ie each subterm of $M$ is the subject of exactly one subderivation. 
During the execution of $\InferStrong$, the
structure of the pseudo-derivation evolves according to the replication of subterms
happening at reduction time: indeed, a blocked equation always originates from a \emph{non-linear}
redex that is created along a reduction sequence starting from $M$. Accordingly, normal
forms generate no blocked equation.

\begin{lemma} \label{lem:nf-no-expansion}
If $\PDsn{M}=(\Pi,\eqset)$ and $M$ is in normal form,
then $\eqset$ is solvable.
\end{lemma}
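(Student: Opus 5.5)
The plan is to build an explicit solution by induction on the structure of the normal form $M$, rather than reasoning about $\rewu$-normal forms. The main step is to show that $\eqset$ is a \emph{triangular} system: every equation has shape $\con \doteq \A$ with $\con$ a variable, each variable is the left-hand side of at most one equation, and the dependency order among left-hand sides is well founded. Such a system is solved by defining $\psi$ bottom-up along that order. (Alternatively, one could show that $\nf{\eqset}$ is not blocked and then conclude with \Cref{prop:unif-nf} and \Cref{lem:eqset-not-unsolvable}. The direct construction avoids the unproven \Cref{lem:eqset-not-unsolvable}, so I prefer it.)

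First I would record how the equations arise. By \Cref{fig:pseudo} and \Cref{def:strong-pseudo}, each $\abs$, $\absi$, $\absk$ and $\app$ rule contributes exactly one equation, and $\many$ and $\axvar$ contribute none. An abstraction rule contributes $\conb \doteq \ls \arrow \con$ with $\conb$ fresh. An $\app$ rule with function premise $\tGamma \derp P : \con$ contributes $\con \doteq \ls \arrow \conc$ with $\conc$ fresh. Fresh variables are never the left-hand side of an equation generated lower in the tree. The only risk of two equations sharing a left-hand side is therefore an $\app$ rule whose function premise ends with an abstraction rule, since then $\con$ would already be defined by $\con \doteq \tGamma(x) \arrow \con'$. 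That configuration is exactly a $\beta$-redex $(\l x.P)N$ in $M$, and every subterm of a normal form is normal. Hence in a normal form the function premise of every $\app$ rule ends with $\axvar$ or with another $\app$, because $M$ has shape $\l \vec x.\, y K_1 \dots K_k$. The variable $\con$ is then either the fresh variable of an axiom for $y$ or the fresh conclusion variable $\conc$ of the previous application, and neither of these is a left-hand side elsewhere. This is the only place where normality is used, and it is where I expect the main subtlety. I would state it as a small auxiliary claim: in $\PDsn{M}$ with $M$ normal, distinct equations have distinct left-hand-side variables.

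Next comes acyclicity. I would order equations by the position of their generating rule in the tree. The right-hand side of an equation generated at node $\nu$ mentions only variables from premises of $\nu$ (lists in environments and subject types of subderivations) together with the fresh $\conc$. Its left-hand side is either fresh at $\nu$ or the subject variable of the function premise, which is still undefined when $\nu$ is reached bottom-up. Using disjointness of sibling subderivations (the $*$ side conditions), I would then define $\psi$ by induction on $M$ in the form $\l x_1 \dots x_j.\, y K_1 \dots K_k$. For each $K_i$ the induction hypothesis gives solutions $\psi_{i,l}$ for the disjoint copies in the $\many$ premise, which can be merged. The axiom variable $\con_0$ of $y$ is set to $\psi(\ls_1) \arrow \dots \arrow \psi(\ls_k) \arrow \conc_k$, and the intermediate application variables are set to the corresponding suffixes. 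Finally, each abstraction variable $\conb$ is set to $\psi(\tGamma(x)) \arrow \psi(\con)$, with $\conb$ fresh in the $\absk$ case.

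I would close by checking that $\psi$ satisfies every equation. This is immediate from the construction, because each left-hand side is assigned exactly its right-hand side and all variables in that right-hand side are assigned earlier or left free.
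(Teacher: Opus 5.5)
Your argument is correct, but its central step differs from the paper's. Both proofs induct on the normal form. Both also use the two facts you isolate:
\begin{itemize}
\item the left-hand side of an abstraction equation is fresh;
\item in $y K_1 \dots K_k$ the spine variables $\con_0,\dots,\con_k$ do not occur in the pairwise disjoint systems of the arguments.
\end{itemize}
The paper, however, reasons about the rewriting itself. It shows that applying the rules of \Cref{fig:unification-intersection} to $\eqset$ never produces an equation between lists, hence never a blocked one. You instead prove that $\eqset$ is triangular and exhibit a unifier.

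The paper's version is shorter and speaks directly about what the algorithm tests. But ``not blocked'' yields solvability only once circular equations are excluded, and the paper leaves that step implicit: it needs \Cref{lem:eqset-not-unsolvable}, or an acyclicity argument like yours. Your version closes that step and still delivers the algorithmic consequence. Since $\rewo$ and $\rewu$ steps preserve unifiers, and blocked or circular equations have none, an explicit $\psi$ shows at once that $\nfo{\eqset}$ is not blocked and that $\nf{\eqset}$ is in solved form. That is exactly what \Cref{cor:nf-termination} uses.

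One point of precision: the order you describe in words does not literally fit the application spine. You order equations by the position of their generating rule and say fresh variables are never defined ``lower in the tree''. But the right-hand side of $\con_{i-1} \doteq \ls_i \arrow \con_i$ contains $\con_i$, whose defining equation sits at the parent node, closer to the root. So setting $\psi(\con_{i-1}) := \psi(\ls_i) \arrow \con_i$ node by node from the leaves would not give a solution. Your explicit assignment of suffixes ending in the last spine variable is the right fix. The well-founded order you actually use is:
\begin{enumerate}
\item argument subderivations first;
\item then the spine from $\con_{k-1}$ back to $\con_0$;
\item then the abstractions from the innermost outward.
\end{enumerate}
It is worth stating it that way. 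Also, when merging the solutions for the disjoint copies of each $K_i$, restrict each one to the variables of its own subderivation, so that none of them assigns a spine or abstraction variable.
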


\begin{proof} 
By induction on the normal form $M$. 
We show that applying unification 
rules to $\eqset$ cannot generate equations between
lists, hence cannot generate blocked equations. 
Case $M=x$ is trivial, as $\eqset=\emptyset$. For the case $M=\lambda x.N$,
let $\PDsn{N} = (\Pi_0, \eqset_0)$; then 
$\eqset = \eqset_0 \cup \{\conc \doteq \ls \arrow \conb\}$ where $\conc$ is fresh. 
Since by \ih $\eqset_0$ does not generate equations between lists, 
neither does $\eqset$, because $\conc$ does not occur in $\eqset_0$.
Lastly, consider $M=xM_1 \dots M_n$ for $n>0$. 
Let $\PDsn{x}=(\Pi_0, \emptyset)$ where $\Pi_0 \dem x: \llist{\con_0} \derp x: \con_0$,
and $\PDsn{M_i}=(\Pi_i, \eqset_i)$ where $\Pi_i \dem \tGamma_i \derp M_i: \ls_i$ $(1 \leq i \leq n)$. Then 
$\eqset = \bigcup_{i=1}^n \eqset_i \cup \{\con_0 \doteq \ls_1 \arrow \con_1, \; \con_1 \doteq \ls_2 \arrow \con_2, \; \dots, \; \con_{n-1} \doteq \ls_n \arrow \con_n \}$ where $\con_1,\dots,\con_n$ are fresh. 
Remark that $\eqset_i * \eqset_j$ whenever $i \not = j$.
Since by \ih none of the $\eqset_i$ generates equations between lists, 
neither does $\eqset$, because $\con_0,\con_1,\dots,\con_n$ do not occur in $\bigcup_{i=1}^n \eqset_i$.
\end{proof}

\begin{corollary} \label{cor:nf-termination}
If $M$ is in normal form, then $\InferStrong(M)$ terminates performing no expansion.
\end{corollary}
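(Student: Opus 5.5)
Since $M$ is in normal form, I would start from $\PDsnmin{M}=(\Pi,\eqset)$ and show that the \textbf{while} loop of $\InferStrong$ (lines \ref{alg:while-begin}--\ref{alg:while-end}) is never entered. That is, I would show $\Blocked(\nfo{\eqset})=\mathtt{false}$. The algorithm then goes straight to computing $\nf{\eqset}$ and $\mgu{\eqset}$, and it terminates with no expansion performed.

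\textbf{Reuse of \Cref{lem:nf-no-expansion}.} The key observation is in the proof of \Cref{lem:nf-no-expansion}, not just its statement. That proof shows, by induction on the normal form $M$, that no unification rule applied to $\eqset$ can ever produce an equation between lists. I would check that this invariant covers $\rewo$ as well as $\rewu$. The rules of $\rewo$ are those of \Cref{fig:unification-intersection} with $\usubs$ replaced by $\usubszero$. $\usubszero$ performs a restricted form of the substitutions done by $\usubs$, and the argument in the lemma never depends on substitutions inside lists. In fact, lists appear in $\eqset$ only as arguments $\ls_i$ of arrows whose left-hand sides are fresh or head variables. So the argument goes through unchanged for $\rewo$.

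\textbf{Concluding.} Since $\nfo{\eqset}$ contains no equation between lists, it contains no blocked equation, so the loop guard is false. By \Cref{prop:unif-out}, $\nfo{\eqset}\rewur\nf{\eqset}$, and $\nf{\eqset}$ is not blocked. By \Cref{lem:eqset-not-unsolvable} it is not unsolvable either. Hence, by \Cref{prop:unif-nf}, it is in solved form, $\mgu{\eqset}$ is defined, and the algorithm returns. Termination of the two normalization steps follows from \Cref{prop:unif-confluence} and \Cref{prop:unif-out-confluence}.

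\textbf{Possible gap.} The main obstacle is that \Cref{lem:nf-no-expansion} literally only asserts solvability, not absence of list equations under $\rewo$. A cleaner alternative uses \Cref{prop:unif-out} directly. If $\nfo{\eqset}$ were blocked, it would contain a list equation $\ls\doteq\lstwo$ with $|\ls|\neq|\lstwo|$. Further $\rewu$-steps would then only substitute inside it, and lengths are preserved under substitution, so $\nf{\eqset}$ would still be blocked, hence unsolvable by any $\psi$. This contradicts the solvability given by \Cref{lem:nf-no-expansion}. I would present this second argument, since it uses the lemma only as stated.
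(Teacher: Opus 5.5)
Your proposal is correct and matches the paper. The paper derives the corollary directly from \Cref{lem:nf-no-expansion}: for a normal form no list equation, hence no blocked equation, can appear in $\nfo{\eqset}$, so the \textbf{while} loop is never entered and no expansion is performed. Your second variant is a sound way to use the lemma only as stated: a blocked equation keeps its length mismatch and has no unifier, and $\rewo$/$\rewu$ steps preserve unifiers (the fact your contradiction implicitly relies on). With that variant you also do not need the unproved \Cref{lem:eqset-not-unsolvable}, because solvability of $\eqset$ already excludes circular equations in $\nf{\eqset}$.
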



The proof of the following \Cref{lem:algo-as-redb},
which relates the termination of $\InferStrong(M)$
to the termination of the procedure starting from reducts of $M$, 
employs Barendregt's $F_\infty$ \emph{perpetual reduction strategy}.
Let us recall that perpetual reduction strategies preserve infinite reduction paths \cite{RaamsdonkSSX99};
thanks to the close connections with strong normalization, 
such strategies have proven useful for studying properties 
of intersection type systems without empty intersection on several occasions.
In particular, we mention relevant work by Neergaard \cite{Neergaard05}, who 
provided an elegant inductive proof of (a weaker variant of) 
subject expansion in a system similar to $\systemITsn$, 
by considering expansion w.r.t. $F_\infty$-reduction steps.
An extension of such a proof technique is found in \cite{DudenhefnerP24}. 

\begin{definition}[{\cite[Definition 13.4.1]{Barendregt85book}}] \label{def:f-infinity}
The $F_\infty$ reduction strategy is defined as:
\[
F_\infty(M) =
\begin{cases}
\text{If $M$ is in normal form then $M$} \\
\text{If $M = \ccontext[(\l x.P)Q]$ and $(\l x.P)Q$ is the leftmost-outermost redex:} \\
\begin{cases}
\text{$\ccontext[P\subs{x}{Q}]$ if $x \in \FV(P)$} \\
\text{$\ccontext[P]$ if $x \not \in \FV(P)$ and $Q$ is in normal form} \\
\text{$\ccontext[(\l x.P)F_\infty(Q)]$ otherwise} \\
\end{cases}
\end{cases}
\]
\end{definition}


\begin{lemma}\label{lem:algo-as-redb}
Let $M = M_0 \redb M_1 \redb \dots \redb M_k$ be a $F_\infty$ reduction sequence 
(\ie let $M_{i+1} = F_\infty(M_{i})$ for $0 \leq i < k$).
Then $\InferStrong(M) $ terminates if and only if $\InferStrong(M_k) $ terminates. 
\end{lemma}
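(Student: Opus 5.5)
By induction on $k$ it suffices to treat a single step, $M \redb M' = F_\infty(M)$, and show that $\InferStrong(M)$ terminates iff $\InferStrong(M')$ terminates. The plan is to simulate runs of the algorithm on $M$ by runs on $M'$ and conversely. We view a terminating run as a finite sequence of $\Rew$ steps reaching an unblocked $\rewo$-normal form. By \Cref{prop:unif-out}, \Cref{lem:eqset-not-unsolvable} and \Cref{thm:corr-comp-strong}, such a run produces a strong derivation whose structure is exactly the pseudo-derivation reached. So a terminating run on a term $N$ amounts to a strong pseudo-derivation $\Pi$ for $N$ together with a finite ``guided'' sequence of expansions from $\PDsnmin{N}$ to $\Pi$, such that $\nfo{\eqset_\Pi}$ is not blocked.

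We split according to the three cases of \Cref{def:f-infinity}, with the redex $R=(\lambda x.P)Q$ in the leftmost-outermost position of $M=\ccontext[R]$.

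\emph{Case $x\in\FV(P)$.} In $\PDsnmin{M}$ the redex is typed by $\absi$ on $\lambda x.P$, say with $x:\ls$, $|\ls|=k$ (the number of occurrences of $x$), and a $\many$ rule for $Q$ with a single premise. Unifying the $\app$ equation gives $\conb\doteq \ls\arrow\con$ and $\conb\doteq\llist{\conb_1}\arrow\conc$, hence the blocked equation $\ls\doteq\llist{\conb_1}$ when $k>1$. The algorithm may choose this equation first; it is legitimate because $R$ is leftmost-outermost, so no other list constraint constrains it. Expanding the $\many$ rule for $Q$ to $k$ copies of $\PDsnmin{Q}$ yields a pseudo-derivation whose equation system, after $\rewo$, coincides up to renaming and after discarding the trivially solved equations for the redex, with that of $\PDsnmin{M'}$, where each occurrence of $x$ in $P$ is replaced by a fresh copy of $\PDsnmin{Q}$. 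This is the standard substitution lemma for pseudo-derivations: the variable axiom $x:\llist{\con_i}\derp x:\con_i$ unifies with the root type of the $i$-th copy of $Q$. Since $\rewo$ is confluent (\Cref{prop:unif-out-confluence}) and the order in which blocked equations are chosen only changes the run, not the existence of a terminating one, every subsequent expansion on $M$ corresponds to an expansion on $M'$ and vice versa. Commuting the chosen expansion with others is where care is needed.

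\emph{Case $x\notin\FV(P)$ and $Q$ in normal form.} Here $\absk$ introduces $\llist{\conb}$ with fresh $\conb$, so the redex never blocks. The equations coming from $\PDsnmin{Q}$ are disjoint from everything else, and by \Cref{lem:nf-no-expansion} they never block. So the runs on $M$ are exactly the runs on $M'$ with an extra independent, solvable component, which again gives the equivalence.

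\emph{Case $x\notin\FV(P)$ and $Q$ not normal.} Here $M'=\ccontext[(\lambda x.P)F_\infty(Q)]$. The subderivation for $Q$ sits under a $\many$ rule of size one that is never expanded: its list is $\llist{\conb_1}$, unified against the fresh $\llist{\conb}$. Its equations interact with the rest only through the root variable $\conb_1$, which becomes a fresh $\conb$, so they are independent. Hence $\InferStrong(M)$ terminates iff both $\InferStrong(Q)$ and the run on $\ccontext[(\lambda x.P)y]$ terminate. Applying the argument recursively to $Q\redb F_\infty(Q)$ (an induction on term size, or nesting the same case analysis) concludes.

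\emph{Main obstacle.} The hardest step is the first case: establishing the substitution/commutation lemma showing that, after the expansion triggered by the redex, the $\rewo$-normal form of the equation system matches that of $\PDsnmin{M'}$ up to renaming, and that expansions performed in other orders can be rearranged so this expansion comes first. I would attack it by proving a general statement: for a pseudo-derivation of $P$ with $x:\ls$ and derivations of $Q$ matching $\ls$, the unified system of the $\app$ configuration equals the system of the substituted derivation plus solved equations. This would be shown by induction on $P$.
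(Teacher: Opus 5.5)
\textbf{The erasing cases have a genuine gap.} Your reduction to a single step and your treatment of the non-erasing case follow the paper's proof. There you expand the $\many$ rule of $Q$ to $|\ls|$ copies of $\PDsnmin{Q}$ and check that the unified system is that of $\PDsnmin{M'}$, plus equations on variables that no longer occur.

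The two erasing cases rest on the claim that the equations of $\PDsnmin{Q}$ are disjoint from the rest of the system, interacting at most through the root variable of $Q$. This is false whenever $Q$ has a free variable $y$ bound by an abstraction in $\ccontext$. The list abstracted by the $\absi$ rule for $\lambda y$ (call it $\tGamma(y)$) contains the pre-type variables of the axioms for the occurrences of $y$ inside $Q$. These same variables occur in $\eqset_Q$, for instance as heads of applications in $Q$. Erasing $Q$ therefore shortens $\tGamma(y)$ and deletes constraints on its entries, which can change which lists get blocked.

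Your case-2 argument never uses the leftmost-outermost hypothesis, so it would equally prove a false instance. Take $(\lambda y.(\lambda x.u)(yy))\delta \redb (\lambda y.u)\delta$ with $\delta=\lambda z.zz$.
\begin{itemize}
\item This is a $\kr$-step with normal argument $yy$, and $\InferStrong$ terminates on the reduct with no expansion.
\item The source is not strongly normalizing, since it also reduces to $(\lambda x.u)(\delta\delta)$. So by \Cref{thm:algo-correctness} and \Cref{thm:characterization}, $\InferStrong$ cannot terminate on it.
\item The mechanism is that $\tGamma(y)$ has length $2$ only because of $Q=yy$, and it is unified with the list of the argument $\delta$. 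The constraint that $\eqset_Q$ places on the head occurrence of $y$ then meets $\delta$'s own constraints.
\end{itemize}

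\textbf{The missing idea} is exactly where the paper uses the $F_\infty$ hypothesis: such abstracted variables are harmless \emph{because} the contracted redex is leftmost-outermost.
\begin{itemize}
\item A subterm $\lambda y.R$ enclosing $(\lambda x.P)Q$ can never reach, under abstractions and left applications, the functional position of an application. Otherwise there would be a redex outside or to the left of $(\lambda x.P)Q$.
\item It can occur as an argument only of a neutral term $zS_1\dots S_n$. Hence $\tGamma(y)$ ends up only inside the constraint on the pre-type of the head variable $z$, and is never unified against another list.
\item If $z$ is itself abstracted, the same reasoning repeats, and the resulting chain is finite.
\end{itemize}

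Your case 3 needs this same argument twice. First, it is needed to justify the claimed independence of $Q$'s equations. Second, it is needed to identify the run on $Q$ embedded in $M$ with the standalone run $\InferStrong(Q)$ before you recurse on $Q\redb F_\infty(Q)$. With that in place, your explicit recursion is a reasonable way to organize this clause, which the paper folds into its base case without comment.
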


\begin{proof}
See \Cref{sec:termination-appendix}.
\end{proof}


Assuming \emph{fair} non-determinism, \ie 
that the algorithm does not repeat expansions with no effect an infinite number of times, 
we can state: 

\begin{theorem} \label{thm:termination}
If $M$ is strongly $\beta$-normalizing, then $\InferStrong(M)$ terminates.
\end{theorem}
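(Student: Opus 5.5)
The plan is to combine \Cref{lem:algo-as-redb} with \Cref{cor:nf-termination}, using the fact that $F_\infty$ is a perpetual strategy that, on strongly normalizing terms, reaches the normal form.

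First, let $M$ be strongly $\beta$-normalizing. I will show that the $F_\infty$ reduction sequence starting from $M$ reaches a normal form in finitely many steps. By \Cref{def:f-infinity}, every $F_\infty$ step on a term not in normal form is a genuine $\beta$-step. In the first two subcases a $\beta$-redex is contracted directly. In the third subcase, $(\l x.P)Q$ becomes $(\l x.P)F_\infty(Q)$, and the step inside $Q$ is again a $\beta$-step; this follows by induction on the size of the term, since $Q$ is not in normal form in that subcase. So the sequence $M_0 = M,\ M_{i+1} = F_\infty(M_i)$ is a $\beta$-reduction sequence, and it only stops at a normal form, because $F_\infty(N) = N$ exactly when $N$ is normal. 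Since $M$ is strongly normalizing, this sequence cannot be infinite. Hence there is a $k$ with $M_k$ in $\beta$-normal form. I do not even need perpetuality here, only that every $F_\infty$ step is a $\beta$-step.

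Second, by \Cref{cor:nf-termination}, $\InferStrong(M_k)$ terminates, in fact performing no expansion. By \Cref{lem:algo-as-redb}, applied to the sequence $M_0 \redb \dots \redb M_k$, $\InferStrong(M)$ terminates if and only if $\InferStrong(M_k)$ does. Therefore $\InferStrong(M)$ terminates, in the sense of having a terminating execution path.

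The fairness assumption is what makes this meaningful. \Cref{lem:algo-as-redb} presumably yields a terminating path for $M$ by choosing blocked equations in an order mirroring the $F_\infty$ redexes, and fairness excludes runs that loop forever on expansions with no effect. The main obstacle is not in this theorem itself, which is a short corollary, but in \Cref{lem:algo-as-redb}, whose proof is deferred to the appendix. Within this proof, the only point requiring care is checking that $F_\infty$ steps really are $\beta$-steps, so that strong normalization bounds the length of the sequence; the remaining steps are direct invocations of earlier results.
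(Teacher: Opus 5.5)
Your proposal is correct and follows the paper's proof, which is exactly the combination of \Cref{lem:algo-as-redb}, applied to the $F_\infty$-sequence from $M$, with \Cref{cor:nf-termination}. Your extra argument that this sequence reaches a normal form, because every $F_\infty$ step on a non-normal term is a genuine $\beta$-step, makes explicit what the paper leaves implicit. Your side claim that $F_\infty(N)=N$ only for normal $N$ fails for $(\lambda z.zz)(\lambda z.zz)$, but that is harmless under strong normalization.
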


\begin{proof}
By \Cref{lem:algo-as-redb} and \Cref{cor:nf-termination}.
\end{proof}

\begin{corollary} \label{cor:termination-iff-sn}
$\InferStrong(M)$ terminates if and only if $M$ is strongly $\beta$-normalizing.
\end{corollary}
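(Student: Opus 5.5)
The "if" direction is exactly \Cref{thm:termination}, so the work lies in the converse: if $\InferStrong(M)$ terminates, then $M$ is strongly $\beta$-normalizing. I would go through typability and the characterization theorem, not through any reduction argument.

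Suppose some execution path of $\InferStrong(M)$ terminates with output $(\Pi,\psi)$. By construction of the algorithm, $\Pi$ arises from $\PDsnmin{M}$ by a finite sequence of (strong) expansions, so it is a strong pseudo-derivation $\PDsn{M}$. Its equation set $\eqset$ satisfies $\Blocked(\nfo{\eqset}) = \mathtt{false}$ at loop exit. I would then apply the second point of \Cref{prop:unif-out}: since $\nfo{\eqset}$ is not blocked, $\nf{\eqset}$ is in solved or unsolvable form. By \Cref{lem:eqset-not-unsolvable}, applied to the strong pseudo-derivation $\Pi$, $\nf{\eqset}$ is not unsolvable, so it is in solved form. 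Hence $\psi = \mgu{\nf{\eqset}}$ is well defined and solves $\eqset$, since $\rewu$ preserves the set of solutions, as in Robinson unification.

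Next I would invoke \Cref{thm:algo-correctness}, which is the first point of \Cref{thm:corr-comp-strong} applied with the expansion sequence used by the run. It gives $\ltom\circ\psi(\Pi) \dem \ltom\circ\psi(\tGamma) \dersn M : \ltom\circ\psi(\con)$, so $M$ is $\systemITsn$-typable. By the second item of \Cref{thm:characterization}, $M$ is then strongly $\beta$-normalizing.

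There is one point to check, and it is the only potential obstacle. The derivation produced must genuinely live in $\systemITsn$, with no empty multisets. This holds because strong pseudo-derivations impose $n\geq 1$ in $\many$ and $\ls\neq\ellist$ in $\absi$, and expansions only enlarge lists. So every list occurring in $\Pi$, and hence in the codomain of $\psi$ built from them, is nonempty, and $\ltom\circ\psi$ yields types in $\ITsn$. Everything else is a direct combination of earlier results.
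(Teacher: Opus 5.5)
Your proposal is correct and follows the paper's own proof. The direction ``terminates $\Rightarrow$ strongly normalizing'' goes through \Cref{thm:algo-correctness} and the second item of \Cref{thm:characterization}, and the converse is \Cref{thm:termination}. Your unpacking via \Cref{prop:unif-out} and \Cref{lem:eqset-not-unsolvable}, together with your check that no empty lists occur, only spells out what \Cref{thm:algo-correctness} already contains.
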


\begin{proof}
($\Rightarrow$) By \Cref{thm:algo-correctness} and \Cref{thm:characterization}. ($\Leftarrow$) By \Cref{thm:termination}.
\end{proof}

\subsection{Confluence} \label{sec:confluence}

We proceed to show that, although the algorithm is non-deterministic, its output (if any) is unique. 
Before delving into details, some considerations about our definition of expansion are in order.


Expansion and erasure operations, as introduced by \Cref{def:expansion},
allow for a compact presentation, but are clearly suboptimal from an efficiency standpoint.
Indeed, it is possible to provide a more refined definition of expansion (resp. erasure), 
that modifies a given $\many$ rule by adding minimal premises 
(resp. by deleting premises) in \emph{specific} positions, instead of “starting over” from minimal subderivations only.
To reason about confluence properties of the algorithm,
it is convenient to assume this refined notion of expansion: 
we do so in stating the following \Cref{lem:algo-confluence} and \Cref{thm:algo-confluence}. 
This way we can rely on the fact that, 
during execution of $\InferStrong$, the length of all lists is \emph{non-decreasing}.

\vskip 1cm

\begin{lemma} \label{lem:algo-confluence}
Let $(\Pi,\eqset) \Rew (\Pi_1,\eqset_1)$ and $(\Pi,\eqset) \Rew (\Pi_2,\eqset_2)$. 
Then there is $(\Pi_3,\eqset_3)$
such that $(\Pi_1,\eqset_1) \Rew^* (\Pi_3,\eqset_3)$ 
and either $(\Pi_2,\eqset_2) \Rew (\Pi_3,\eqset_3)$ or $(\Pi_2,\eqset_2) = (\Pi_3,\eqset_3)$.
\end{lemma}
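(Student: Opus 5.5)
We prove this local-confluence (diamond-like) property by case analysis on the two blocked equations chosen in the two $\Rew$ steps from $(\Pi,\eqset)$. Both steps first reduce to the same $\rewo$-normal form $\nfo{\eqset}$; this form is unique by \Cref{prop:unif-out-confluence}. So we may assume that $\eqset = \nfo{\eqset}$ is blocked. Step $i$ picks a blocked equation $\ls_i \doteq \lstwo_i$ with, say, $|\ls_i| > |\lstwo_i|$, and performs $\Expand{\lstwo_i, n_i, \Pi}$ with $n_i = |\ls_i| - |\lstwo_i|$. We use the refined notion of expansion assumed in \Cref{sec:confluence}: premises are added at specific positions without resetting the rest. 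Two consequences follow. First, an expansion is determined by the target $\many$ rule, identified by its list of pre-type variables, which is unambiguous by disjointness, together with the number of premises appended. Second, lengths of lists only grow.

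\emph{Case 1: the two steps expand the same $\many$ rule by the same amount.} This covers the case where the same equation is chosen. Then $(\Pi_1,\eqset_1) = (\Pi_2,\eqset_2)$, and we take $(\Pi_3,\eqset_3)$ to be this pair, with zero steps on the first side.

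\emph{Case 2: the two steps act on distinct $\many$ rules, or on the same rule by different amounts.} We commute the two expansions. The key invariant is the following. After performing expansion $1$, the blocked equation used by step $2$, possibly with one of its lists lengthened, is still present in $\nfo{\eqset_1}$ and still blocked with the same length discrepancy, unless it was already resolved by expansion $1$. This holds because expansions only append fresh, disjoint copies of $\PDsnmin{N}$, and because $\rewo$ never substitutes inside lists. Hence the equations of $\eqset$ that do not involve the new variables keep their list components verbatim, and the new equations only add components. Using this, we take $(\Pi_3,\eqset_3)$ to be the result of applying both expansions, obtaining it from $(\Pi_1,\eqset_1)$ by one $\Rew$ step or by zero steps. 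From $(\Pi_2,\eqset_2)$ we reach it by a single $\Rew$ step, or by zero steps when $\Pi_1 = \Pi_2$. When the two steps expand the same rule by amounts $n_1 < n_2$, we set $\Pi_3 = \Pi_2$ and take $(\Pi_2,\eqset_2) = (\Pi_3,\eqset_3)$. The residual discrepancy $n_2 - n_1$ must still appear as a blocked equation in $\nfo{\eqset_1}$, so that $(\Pi_1,\eqset_1) \Rew (\Pi_3,\eqset_3)$. The situation is asymmetric, which is why the statement allows $\Rew^*$ on the first side and $\Rew$ or equality on the second. We label the steps so that step $2$ is the one doing the larger expansion.

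\emph{Main obstacle.} The hard part is the invariant in Case 2: a blocked equation of $\nfo{\eqset}$ remains blocked, with the correct discrepancy, in $\nfo{\eqset_1}$. An expansion alters the equation set non-locally, because the appended copies of $\PDsnmin{N}$ introduce fresh variables, and these can flow through $\rewo$ into other equations. We would track each list occurring in a blocked equation back to its originating $\many$ rule, or to the environment of an abstraction. We then argue that its length changes only when that very rule is expanded, and that the matching list on the other side is unaffected by unrelated expansions, again by disjointness. One possible failure is that an expansion of one rule lengthens a list that appears in several equations, for instance via an environment $\tGamma(x)$ shared between an abstraction and its uses. We would handle this by showing that every such equation then requires expanding the same rule, so that the needed amounts are compared and the maximum is reached in at most one extra step.
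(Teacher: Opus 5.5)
Your Case 2 has a genuine gap. It rests on the invariant that the blocked equation chosen by the second step survives in $\nfo{\eqset_1}$ with the \emph{same} length discrepancy, and that invariant is false. The reason you give for it (the matching list ``is unaffected by unrelated expansions, again by disjointness'') is exactly the step that fails.

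Consider a blocked equation $\lstwo \doteq \ls$ arising at an $\app$ rule with subject $PQ$. Only $\ls$ is the conclusion of a fixed $\many$ rule. The list $\lstwo$ comes from $\conb \doteq \lstwo \arrow \A$, i.e.\ typically from the environment list $\tGamma(x)$ recorded by an abstraction. That list is the concatenation of the environments of the subderivations of the body. Expanding any $\many$ rule in the functional subderivation whose subject contains $x$ free adds axioms for $x$ and lengthens $\lstwo$. Disjointness of pre-type variables does not prevent this.

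Concretely, take $M = (\l x.\, x\,((\l y.\,y\,y)\,x))\,z$. Here $\nfo{\eqset}$ contains an inner blocked equation $\llist{\conb_1,\conb_2} \doteq \llist{\con_2}$ and an outer one $\llist{\con_1,\con_2} \doteq \llist{\cond}$. Expanding the inner argument rule by $1$ gives $x$ three axioms. The outer discrepancy then becomes $2$, and the only $\Rew$ step left expands the $z$-rule by $2$. Your $\Pi_3$ (both original expansions, with the $z$-rule enlarged by $1$) is still blocked and is not reachable from that branch at all. Your fallback in the last paragraph (``every such equation then requires expanding the same rule'') does not describe this situation either: the two expansions act on different rules, and one changes the amount demanded at the other.

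What is missing is the paper's case analysis by position relative to the $\app$ rule. An expansion inside the argument subderivation touches neither $\ls$ nor $\lstwo$, and it genuinely commutes with the other step. An expansion inside the functional subderivation may turn $\lstwo$ into a longer $\lstwo'$. The join is then $\Pi_3 = \Expand{\ls, |\lstwo'|-|\ls|, \Pi_1}$. It is reached in one step from the functional-side branch, and in two steps from the argument-side branch (redo the other expansion, then enlarge $\ls$ again). So the side that is allowed a single step must be chosen by position, not by ``the larger expansion''.

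This needs a second ingredient absent from your plan: the paper's invariant that the abstraction-side list is never shorter than the argument list. It holds in $\PDsnmin{M}$, where every $\many$ rule has one premise, and it is preserved by the non-decreasing refined expansions. It ensures three things:
\begin{itemize}
\item every blocked equation is fixed by enlarging the conclusion of an existing $\many$ rule;
\item other expansions can only increase a discrepancy, never overshoot it;
\item hence the enlarged target is reachable by expansions alone, with no erasure.
\end{itemize}
Your generic ``expand the shorter side'' does not have this guarantee. It could instead target a list such as $\tGamma(x)$ that is the conclusion of no $\many$ rule, where the expansion has no effect.
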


\begin{proof}
See \Cref{sec:confluence-appendix}.
\end{proof}

\Cref{lem:algo-confluence} implies that the relation $\Rew$ is confluent, 
hence that the output of $\InferStrong$ is uniquely determined. 
Given the similarities between our type inference procedure and term reduction,
this result can be seen as a natural consequence of the confluence properties of $\lambda$-calculus.

\begin{theorem} \label{thm:algo-confluence}
If $\InferStrong(M) = (\Pi,\psi)$ and $\InferStrong(M) = (\Pi',\psi')$, 
then $(\Pi,\psi) = (\Pi',\psi')$ modulo renaming of pre-type variables.
\end{theorem}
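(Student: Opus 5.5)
The uniqueness statement will follow from \Cref{lem:algo-confluence} by a standard diagram-tiling argument. The plan is to first show that $\Rew$ is confluent, and then observe that a terminating run of $\InferStrong$ is determined, up to renaming, by the $\Rew$-normal form it reaches. Throughout, pseudo-derivations and equation sets are considered modulo renaming of pre-type variables; renaming commutes with $\rewo$, $\rewu$ and expansion, so this is harmless.

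First, I read a terminating run as a finite sequence $\PDsnmin{M} = (\Pi_0,\eqset_0) \Rew \dots \Rew (\Pi_k,\eqset_k)$. It ends when $\nfo{\eqset_k}$ is not blocked, i.e.\ when $(\Pi_k,\eqset_k)$ is $\Rew$-normal, since a $\Rew$ step exists exactly when the $\rewo$-normal form is blocked. After this, the algorithm computes $\nf{\eqset_k}$ and $\psi = \mgu{\nf{\eqset_k}}$. These are uniquely determined by $\eqset_k$ through Property~\ref{prop:unif-confluence}, and by Property~\ref{prop:unif-out} they coincide with the result of continuing from $\nfo{\eqset_k}$. It therefore suffices to prove that any two terminating runs reach the same $\Rew$-normal pair $(\Pi_k,\eqset_k)$.

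Second, I derive that uniqueness from \Cref{lem:algo-confluence}. The lemma is a ``semi-strong'' local confluence: one side closes in $\Rew^*$, the other in at most one step. This is exactly the hypothesis of the classical strip lemma. I prove by induction on $n$ that if $X \Rew^{n} Y$ and $X \Rew Z$, then there is $W$ with $Y \Rew^* W$ and $Z \Rew^{\le n} W$, and that the same holds with the roles of the two sides exchanged. At each step one tile from the lemma is placed; because the side that was allowed to be long only gets at most one extra step per tile, the induction measure (the length of the other side) decreases.

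Then I apply this to two runs $X \Rew^* Y_1$ and $X \Rew^* Y_2$ with $Y_1, Y_2$ normal. Induct on the length of the run to $Y_2$. Peel off its first step $X \Rew Z$, and use the strip property with the run to $Y_1$. This gives $W$ with $Y_1 \Rew^* W$; normality of $Y_1$ forces $W = Y_1$, so $Z \Rew^* Y_1$. The induction hypothesis applied at $Z$ then yields $Y_1 = Y_2$.

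The main obstacle is the strip-lemma induction. \Cref{lem:algo-confluence} does not bound the length of the $\Pi_1$ side, so a naive tiling of an $m\times n$ grid could fail to terminate. I would therefore choose the induction carefully: induct on the length of the run that is required to be matched with at most one step per tile, which is the side coming from $(\Pi_2,\eqset_2)$. The unbounded side is then never the measure and only needs to exist. I would also double-check that the refined notion of expansion assumed in the statement is used consistently, so that ``$=$'' in the lemma really means equality modulo renaming.
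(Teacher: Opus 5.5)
Your proposal follows the paper's route: the paper gets confluence of $\Rew$ from \Cref{lem:algo-confluence} by citing \cite[Lemma 2.5]{Huet80} (strong confluence implies confluence), which is exactly what your strip argument reproves, and then uses \Cref{prop:unif-confluence} to fix $\nf{\eqset}$ and $\psi$.

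One orientation slip needs fixing. The one-tile-per-step induction in your last paragraph places each step of the $n$-step run in the role of $(\Pi_2,\eqset_2)$. That construction proves $Y \Rew^{\le 1} W$ and $Z \Rew^* W$, not $Y \Rew^* W$ and $Z \Rew^{\le n} W$ as stated in your third paragraph. With the stated orientation, after the first tile you must join $X_1 \Rew^{n-1} Y$ with a $\Rew^*$ path, and your single-step induction hypothesis does not apply. Your normal-form argument only needs $Y_1 \Rew^* W$ and $Z \Rew^* W$, so it goes through unchanged with the corrected strip statement.
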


\begin{proof}
By \Cref{lem:algo-confluence} and \cite[Lemma 2.5]{Huet80},
plus \Cref{prop:unif-confluence} of unification.
\end{proof}

\subsection{Principality}

Let us recall that the principal typing for a term $M$ 
is a typing from which all other typings for $M$ can be derived
via suitable operations.
The procedure $\InferStrong(M)$, if it terminates, 
constructs the principal typing of the (strongly normalizing) term $M$, 
in the following sense.

\begin{theorem} \label{thm:algo-principal}
Let $\Sigma \dem \Gamma \dersn M: \iA$ and $\InferStrong(M) = (\Pi,\psi)$. 
Then $\Sigma = \ltom \circ \psi' \circ \expseq(\Pi)$ for some sequence of expansions $\expseq$
and substitution $\psi'$.
\end{theorem}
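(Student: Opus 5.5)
The plan is to compare the run of $\InferStrong(M)$ with the structure of $\Sigma$, using the completeness half of \Cref{thm:corr-comp-strong}. That result gives a sequence of expansions $\expseq_\Sigma$ and a substitution $\psi_\Sigma$ with $\Sigma = \ltom \circ \psi_\Sigma \circ \expseq_\Sigma(\Pi_M)$, where $\PDsnmin{M} = (\Pi_M,\eqset_M)$ and $\psi_\Sigma$ solves the equations of $\expseq_\Sigma(\Pi_M)$.

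The run of $\InferStrong(M)$ produces a chain $\Pi_M = \Pi_0 \rewe \Pi_1 \rewe \dots \rewe \Pi_k = \Pi$, interleaved with $\rewo$-normalizations. Using the refined notion of expansion of \Cref{sec:confluence}, in which list lengths never decrease, I will prove by induction on $i$ the following invariant: there is a sequence of expansions $\expseq_i$ with $\expseq_i(\Pi_i) = \expseq_\Sigma(\Pi_M)$ up to renaming of pre-type variables. Write $\Pi_\Sigma = \expseq_\Sigma(\Pi_M)$.

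The base case holds with $\expseq_0 = \expseq_\Sigma$. For the inductive step, let $\Pi_i \rewe \Pi_{i+1}$ be triggered by a blocked equation $\ls \doteq \lstwo$ in $\nfo{\eqset_{\Pi_i}}$ with $|\ls| > |\lstwo|$, so that $\lstwo$ is enlarged by $|\ls|-|\lstwo|$ premises. This step is the main obstacle. Each variable of $\Pi_i$ has a corresponding node in $\Pi_\Sigma$, and $\eqset_{\Pi_\Sigma}$ contains the image of every equation of $\eqset_{\Pi_i}$, with lists possibly lengthened. The derivation of $\ls \doteq \lstwo$ by $\rewo$ therefore lifts to a derivation in $\eqset_{\Pi_\Sigma}$ of an equation $\ls' \doteq \lstwo'$, where $\ls'$ and $\lstwo'$ extend $\ls$ and $\lstwo$. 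For this lifting I intend to use that $\usubszero$ never substitutes inside lists, so the occurrences involved are tracked by position. Since $\psi_\Sigma$ solves $\eqset_{\Pi_\Sigma}$, \Cref{prop:unif-confluence} gives $|\ls'| = |\lstwo'|$. Hence the $\many$ rule of $\Pi_\Sigma$ corresponding to $\lstwo$ already has at least $|\ls|$ premises, and $\expseq_\Sigma$ adds at least $|\ls|-|\lstwo|$ premises there. Taking $\expseq_{i+1}$ to be $\expseq_i$ with those premises removed at that node, I obtain $\expseq_{i+1}(\Pi_{i+1}) = \Pi_\Sigma$. The lifting step is delicate, and I expect to argue it by a simulation lemma stating that each $\rewo$ step on $\eqset_{\Pi_i}$ is matched by $\rewo$ steps on $\eqset_{\Pi_\Sigma}$ that preserve the correspondence between variables.

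At termination, $\nf{\eqset_\Pi}$ is solved and $\psi = \mgu{\nf{\eqset_\Pi}}$. Set $\expseq = \expseq_k$, so that $\expseq(\Pi) = \Pi_\Sigma$ modulo renaming. The substitution $\psi_\Sigma$ solves $\eqset_{\Pi_\Sigma}$, which contains (the images of) $\eqset_\Pi$. Because $\psi$ is a most general unifier, $\psi_\Sigma$ restricted to the variables of $\Pi$ factors as $\chi \circ \psi$ for some substitution $\chi$. The fresh variables introduced by $\expseq$ are not affected by $\psi$, so I define $\psi'$ as $\psi_\Sigma$ on those variables and as $\chi$ elsewhere. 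Then, reading $\psi'$ as acting after $\psi$ on the variables of $\Pi$, $\psi' \circ \expseq(\Pi)$ coincides with $\psi_\Sigma(\Pi_\Sigma)$. Applying $\ltom$ to both sides yields $\Sigma$.
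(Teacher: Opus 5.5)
Your plan is the paper's own argument made explicit: completeness (\Cref{lem:from-pd-to-der-strong}) plus the claim that every expansion $\InferStrong$ performs is forced by any solution. Casting this as an invariant along an arbitrary run is a good idea, and it would even make the paper's appeal to \Cref{thm:algo-confluence} unnecessary. But the lifting step, which carries the whole content, does not work the way you propose.

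The problem is $\ulist$. It pairs the elements of an equal-length equation by index. In $\Pi_\Sigma$, the lists $\ls',\lstwo'$ corresponding to $\ls,\lstwo$ contain extra elements at indices fixed by $\expseq_\Sigma$ and by the reordering of premises in the completeness proof, not by you. So after $\eqset_{\Pi_i}$ decomposes an equal-length list equation into equations $\con_j\doteq\conb_j$, the images of $\con_j$ and $\conb_j$ need not sit at the same index of the two lengthened lists. Then $\psi_\Sigma$ need not identify them, and equations derived later from $\con_j\doteq\conb_j$ (in particular the next blocked one) have no counterpart in $\eqset_{\Pi_\Sigma}$. ``Tracking by position'' is exactly what breaks.

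The invariant has to carry an explicit embedding of $\Pi_i$ into $\Pi_\Sigma$ that is aligned with every list decomposition performed so far. The inductive step must then show that, once $\lstwo$ is enlarged, its old premises (with their already expanded subtrees) and the new ones can be embedded exactly at the indices occupied by the image of $\ls$ in $\ls'$. This may require re-embedding old premises into other copies, or permuting premises of $\Pi_\Sigma$; the latter is harmless for $\Sigma$ but changes $\psi_\Sigma$ and other lists. ``Removing those premises at that node'' does not provide this. The paper asserts the same fact only informally (``expansions strictly required, in a minimal way''), so your proposal leaves the gap where the paper is sketchy, without closing it.

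Your last paragraph is also wrong, though the error is avoidable. $\psi_\Sigma$ does not solve $\eqset_\Pi$, because $\eqset_{\Pi_\Sigma}$ contains the equations of $\Pi$ only with lengthened lists. Hence $\psi_\Sigma$ restricted to $\Tvar(\Pi)$ need not factor through the mgu $\psi$. Take $M=\lambda f.\lambda z.fz$. $\InferStrong(M)$ returns the minimal $\Pi$, with $\psi(\con)=\llist{\conb}\arrow\conc$ for the variable $\con$ typing $f$. For the $\Sigma$ giving $z$ the multiset $\mult{a,b}$, however, $\psi_\Sigma(\con)$ is an arrow with a two-element list, and no substitution changes list lengths. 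This is the phenomenon noted in the Remark after the theorem. The statement does not involve $\psi$ at all: once $\expseq(\Pi)=\Pi_\Sigma$ up to renaming, just take $\psi'$ to be $\psi_\Sigma$ composed with that renaming.
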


\begin{proof} 
By \Cref{lem:from-pd-to-der-strong}, every derivation $\Sigma$ in system $\systemITsn$ 
can be obtained from a suitable strong pseudo-derivation; 
moreover, the same Lemma guarantees that any strong pseudo-derivation can be 
built starting from $\PDsnmin{M}$ via a sequence of expansions. 
$\InferStrong$ starts from $\PDsnmin{M}$ and, if it terminates, 
by \Cref{thm:algo-confluence} its output is uniquely determined. 
Observe that the algorithm only performs expansions that are strictly required
to solve blocked equations, and does so in a \emph{minimal} way (\ie lists are never expanded more than needed).
We conclude that every $\Sigma$ can be obtained from $\Pi$ via a suitable sequence $\expseq$ of additional
(\ie not strictly necessary) expansions.  
\end{proof}

\begin{remark}
\Cref{thm:algo-principal} does not speak about the relationships 
between $\psi$ and $\psi'$. A formal analysis would 
be quite technical and goes beyond the scope of this paper. 
Here we only point out that $\dom{\psi} \subseteq \dom{\psi'}$,
as it is safe to assume $\Tvar(\Pi) \subseteq \Tvar(\expseq(\Pi))$.
Of course, even for $\con \in \dom{\psi} \cap \dom{\psi'}$ one may have $\psi(\con) \not = \psi'(\con)$, 
since in general lists in $\expseq(\Pi)$ are larger than those in $\Pi$.
Observe that if $\Sigma$ shares the same (minimal) tree structure as $\Pi$, then 
$\Sigma = \ltom \circ \psi' (\Pi)$ where $\psi' = \psi'' \circ \psi$ for some $\psi''$; 
this is because $\psi$ is the most general unifier of the set of equations associated to $\Pi$. 
\end{remark}

%% file: conclusion.tex

In this paper we 
design a type inference semi-algorithm for system $\systemITsn$, 
computing the principal typing  
of all and only the strongly $\beta$-normalizing terms 
(\Cref{cor:termination-iff-sn} and \Cref{thm:algo-principal}).
The procedure is based on essential notions that we prove correct and complete
w.r.t. $\systemIT$-typability (Theorems \ref{thm:correctness} and~\ref{thm:completeness}).
Although our results were already known, the methodology is new, and we leverage advancements 
in intersection type theory made over the past decades to streamline some technical details.

It is possible to see our semi-algorithm  
as a ``blueprint'' from which inference procedures for other type systems can be derived. 
For example, it is natural to consider a variation of $\InferStrong$ that,
making use of both expansion and erasure operations, constructs a derivation in system $\systemIT$ 
for all and only the $\beta$-normalizing terms.  
We point out that the interest of such a variation 
is somewhat limited, as $\systemIT$ also admits an alternative
(and conceptually simpler) type inference semi-algorithm:
first, reduce $M$ to normal form $N$ (if any, otherwise the procedure does not terminate); 
second, build a derivation for $N$ (easy by \Cref{lem:nf-no-expansion}); third, build a derivation 
for $M$ via subject expansion (\Cref{thm:subconv}). 
Of course this approach cannot be easily adapted to $\systemITsn$,
because subject expansion does not hold ``on the nose'' in strong systems (\Cref{rmk:strong-systems}).
Yet another variation on the theme yields the \emph{uniform} 
intersection type inference algorithm of \cite{PautassoR23}, 
which is nothing more than an \emph{always terminating} specialization of $\InferStrong$.
Additional details can be found in the first author's Ph.D. thesis \cite{Pautasso25}. 

Let us conclude by spending a few words about type inference
in \emph{idempotent} intersection type systems. We highlight that the approach presented here can be
applied also in an idempotent setting, as is. Indeed, when performing type inference, the most
general assumption one can make is that each copy of a subterm exhibits a different behaviour,
\ie must be assigned a different intersection type. Observe that we already followed this principle by
imposing disjointness conditions on pseudo-derivations (see \Cref{def:pseudo}): replacing multisets by sets would have
no real impact, because set union behaves exactly as multiset union when sets are disjoint.

%% file: appendix.tex


\section{Completeness} \label{sec:completeness-appendix}

We prove the two points of \Cref{lem:from-pd-to-der} separately, starting from:

\begin{enumerate}
\item \textit{Let $\PD{M}= (\Pi, \eqset)$ and $\PDmin{M}=(\Pi_M, \eqset_M)$. Then there is 
a sequence of expansions and erasures $\expseq$ such that 
$(\Pi, \eqset)=\expseq(\Pi_M, \eqset_M)$.}
\end{enumerate}

\begin{proof}
By induction on $\Pi$, considering its last rule.
\begin{itemize}
\item Case $\axvar$. It is immediate to check that 
$\PD{x}=\PDmin{x}$ modulo renaming of pre-type variables, hence $\expseq$ is the empty sequence.
\item Case $\abs$. Letting $M = \l x.N$, the pseudo-derivation $\Pi$ has shape:
\[
\infer{\Pi \dem \tGamma \derp \l x.N : \con}{\Pi_0 \dem \tGamma, x: \sigma \derp N : \conb}
\]
Letting $\PDmin{N}=(\Pi_N,\eqset_N)$, the minimal pseudo-derivation $\Pi_M$ has shape:
\[
\infer{\Pi_M \dem \tDelta \derp \l x.N : \con}{\Pi_N \dem \tDelta, x: \tau \derp N : \conb}
\]
By inductive hypothesis there is 
$\expseq_0$ such that $(\Pi_0,\eqset_0)=\expseq_0(\Pi_N, \eqset_N)$,
therefore $\expseq_0$ is the desired sequence.
\item Case $\app$. Letting $M = PQ$, the pseudo-derivation $\Pi$ has shape:
\[
\infer
{\Pi \dem \tGamma \derp PQ : \con}
{ \Pi_1 \dem \tGamma_1 \derp P : \conb & 
	\infer{\Pi_2 \dem \tGamma_2 \derp Q : \llist{\conc_1,...,\conc_n}}
	{(\Pi_{2i} \dem \tGamma_{2i} \derp Q : \conc_i)_{i=1}^n} 
}
\]
for some $n \geq 0$.
Letting $\PDmin{P} = (\Pi_P,\eqset_P)$ and $\PDmin{Q} = (\Pi_Q,\eqset_Q)$, 
the minimal pseudo-derivation $\Pi_M$ has shape:
\[
\infer
{\Pi_M \dem \tDelta \derp PQ : \con}
{ \Pi_P \dem \tDelta_1 \derp P : \conb & 
	\infer{\tDelta_{2} \derp Q : \llist{\conc_1}}
	{\Pi_Q \dem \tDelta_{2} \derp Q : \conc_1} 
}
\]
By inductive hypothesis there is $\expseq_1$ such that $(\Pi_1, \eqset_1)=\expseq_1(\Pi_P, \eqset_P)$.
Hence, if $n = 0$, the desired sequence is $\expseq_1 \circ \Erase{\llist{\conc_1},1}$.
Otherwise, if $n \geq 1$, consider also $n$ disjoint copies of $(\Pi_Q, \eqset_Q)$, namely
$(\Pi^i_Q,\eqset^i_Q)$ for $1 \leq i \leq n$. 
By inductive hypothesis, for each $i$ there is $\expseq_{2i}$ 
such that $(\Pi_{2i}, \eqset_{2i})=\expseq_{2i}(\Pi^i_Q, \eqset^i_Q)$.
Let $\expseq'$ be a sequence obtained by composing, 
in any order, the sequences $\expseq_1, \expseq_{21},\dots, \expseq_{2n}$ 
(observe that, since $\expseq_1, \expseq_{21},\dots, \expseq_{2n}$ 
all act on different subtrees, the order of composition is not important).
If $n=1$, the desired sequence is simply $\expseq'$.
Lastly, if $n \geq 2$, the desired sequence is $\expseq' \circ \Expand{\llist{\conc_1},n-1}$.
\end{itemize}
\end{proof}



\noindent Now we prove the second point of \Cref{lem:from-pd-to-der}, namely:

\begin{enumerate}
\setcounter{enumi}{1}
\item \textit{Let $\Sigma \dem \Gamma \der M:\iA$. Then there are a pseudo-derivation $\PD{M}=(\Pi, \eqset)$ 
and a substitution $\psi$ such that $\psi$ solves $\eqset$ and $\Sigma = \ltom \circ \psi(\Pi)$.}
\end{enumerate}

\begin{proof}
We show that if $\Pi$ shares the same tree structure as $\Sigma$,
then the desired $\psi$ is easily found. We proceed by induction on $\Sigma$, considering its last rule. 
In what follows, $\psi_1 \cup \psi_2$ denotes the union of compatible substitutions.
Moreover, for brevity, we simply write $\cod{\psi}$ instead of $\Tvar(\cod{\psi})$.

\begin{itemize}
\item Case $\axvar$. Letting $M = x$, the derivation $\Sigma$ has shape:
\[
\infer{\Sigma \dem x : \mult{\iA} \derp x : \iA}{}
\]
As necessarily $\PD{M} = (\Pi \dem x : \llist{\con} \derp x : \con, \emptyset)$, 
any $\psi$ such that $\psi(\con) = \A$ and $\ltom(\A) = \iA$ meets the requirements.
\item Case $\abs$. Letting $M = \l x.N$ and $\iA = \ms \arrow \iB$, the derivation $\Sigma$ has shape:
\[
\infer{\Sigma \dem \Gamma \der \l x.N : \ms \arrow \iB}{\Sigma_0 \dem \Gamma, x:\ms \der N : \iB}
\]
By inductive hypothesis there is 
$\PD{N} = (\Pi_0, \eqset_0)$ and  $\psi_0$ solving $\eqset_0$ such that
$\Sigma_0 = \ltom \circ \psi_0(\Pi_0)$.
Now consider the pseudo-derivation $\PD{M} = (\Pi, \eqset)$ of shape:
\[
\infer{\Pi \dem \tGamma \derp \l x.N : \con}{\Pi_0 \dem \tGamma, x: \ls \derp N : \conb }
\qquad 
\eqset = \eqset_0 \cup \set{\con \doteq \ls \arrow \conb}
\]
Since $\con$ is fresh, it is safe to assume $\con \not \in \dom{\psi_0} \cup \cod{\psi_0}$; 
therefore the substitution $\psi = \psi_0 \cup \{\con \doteq \psi_0(\ls \arrow \conb)\}$ meets the requirements.
\item Case $\app$. Letting $M = PQ$, the derivation $\Sigma$ has shape:
\[
\infer
{\Sigma \dem \Gamma \der PQ : \iA}
{ \Sigma_1 \dem \Gamma_1 \der P : \ms \arrow \iA & 
	\infer{\Sigma_2 \dem \Gamma_2 \der Q : \ms}
	{(\Sigma_{2i} \dem \Gamma_{2i} \der Q : c_i)_{i=1}^n} 
}
\]
where $n \geq 0$. 
By inductive hypothesis, there are $\PD{P} = (\Pi_1,\eqset_1)$ and $\psi_1$ solving $\eqset_1$ such that
$ \Sigma_1 = \ltom \circ \psi_1(\Pi_1)$.
Moreover, if $n > 0$, by inductive hypothesis there are 
$\PD{Q} = (\Pi_{2i}, \eqset_{2i})$ and
$\psi_{2i}$ solving $\eqset_{2i}$ for each $1 \leq i \leq n$, such that
$\Sigma_{2i} = \ltom \circ \psi_{2i}(\Pi_{2i})$.
Since $\Pi_{2j} * \Pi_{2k}$ whenever $j \not = k$, 
it is safe to assume that $\psi_{21},\dots,\psi_{2n}$ are such that  
$\dom{\psi_{2j}} \cap \dom{\psi_{2k}} = \dom{\psi_{2j}} \cap \cod{\psi_{2k}} = \emptyset $ 
whenever $j \not = k$; therefore $\psi_2 = \bigcup_{i=1}^n \psi_{2i}$ is a substitution
solving $\eqset_{2} = \bigcup_{i=1}^n \eqset_{2i}$.
Now consider the pseudo-derivation $\PD{M} = (\Pi,\eqset)$ of shape: 
\[
\infer[]
{\Pi \dem \tGamma \derp PQ : \con}
{ \Pi_1 \dem \tGamma_1 \derp P : \conb & 
  \infer{\Pi_2 \dem \tGamma_2 \derp Q : \ls}{(\Pi_{2i} \dem \tGamma_{2i} \derp Q : \conc_i)_{i=1}^n}
}
\quad
\eqset = \eqset_1 \cup \eqset_2 \cup \{\conb \doteq \sigma \arrow \con\}
\]
Note that in case $n=0$ one has $\ls = \ellist$ and $\eqset_2 = \emptyset$.
Again, as $\Pi_1 * \Pi_2$, it is safe to assume that for $j,k \in \{1,2\}$ we have 
$ \dom{\psi_j} \cap \dom{\psi_k} = \dom{\psi_j} \cap \cod{\psi_k} = \emptyset $
whenever $j \not = k$; therefore $\psi_1 \cup \psi_2$ is a substitution.
Since $\ltom \circ \psi_1 (\conb) = \ms \arrow \iA$, 
it must be the case that $\psi_1(\conb) = \lstwo \arrow \A$ for some $\lstwo,\A$
such that $\ltom(\lstwo) = \ms = \ltom \circ \psi_2(\ls)$ and $\ltom(\A) = \iA$.
As the lists $\lstwo$ and $\psi_2(\ls)$ collapse into the same multiset $\ms$,
they are the same up to permutation; 
since it is always possible to reorder the $n$ premises of $\Pi_2$, 
w.l.o.g. we can assume $\lstwo = \psi_2(\ls)$.
Lastly, since $\con$ is fresh, we can assume 
$\con \not \in \dom{\psi_1} \cup \dom{\psi_2} \cup \cod{\psi_1} \cup \cod{\psi_2}$;
therefore the substitution $\psi = \psi_1 \cup \psi_2 \cup \{\con \doteq \A\}$ meets the requirements.
\end{itemize}
\end{proof}


\section{Termination} \label{sec:termination-appendix}

\input{termination}


\section{Confluence} \label{sec:confluence-appendix}

\begin{lemma*} [\ref{lem:algo-confluence}]
Let $(\Pi,\eqset) \Rew (\Pi_1,\eqset_1)$ and $(\Pi,\eqset) \Rew (\Pi_2,\eqset_2)$. 
Then there exists $(\Pi_3,\eqset_3)$
such that $(\Pi_1,\eqset_1) \Rew^* (\Pi_3,\eqset_3)$ 
and either $(\Pi_2,\eqset_2) \Rew (\Pi_3,\eqset_3)$ or $(\Pi_2,\eqset_2) = (\Pi_3,\eqset_3)$.
\end{lemma*}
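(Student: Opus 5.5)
The plan is to prove the lemma as a local diamond property, relying on two facts: the two $\Rew$ steps differ only in which blocked equation they choose, and expansions acting on distinct $\many$ rules commute. Write the two steps as $(\Pi,\eqset) \rewor (\Pi,\nfo{\eqset}) \rewe (\Pi_k,\eqset_k)$ for $k=1,2$. By \Cref{prop:unif-out-confluence} both start from the same $\rewo$-normal form. The $k$-th step picks a blocked equation $\ls_k \doteq \lstwo_k$ in it and applies the refined expansion to the shorter list, say $\rho_k$, by $n_k = \big||\ls_k|-|\lstwo_k|\big|$ premises. Because every $\many$ rule of a pseudo-derivation has pairwise disjoint pre-type variables, and $\usubszero$ never rewrites inside lists, each list in $\nfo{\eqset}$ still identifies a unique $\many$ rule of $\Pi$ (or none).

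\textbf{Case analysis on the chosen lists.}

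\emph{Same list, same amount.} If $\rho_1 = \rho_2$ and $n_1 = n_2$, the two results coincide, and we take $(\Pi_3,\eqset_3) = (\Pi_2,\eqset_2)$.

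\emph{Same list, different amounts.} If $\rho_1 = \rho_2$ but the amounts differ, say $n_1 < n_2$, then $\rho$ occurs in two blocked equations against lists of different lengths. After the first expansion, the equation that demanded $n_2$ is still blocked with difference $n_2-n_1$. One further $\Rew$ step from $(\Pi_1,\eqset_1)$ expands $\rho$ by $n_2-n_1$ and reaches $\Pi_2$ up to renaming. This relies on the refined expansion appending minimal premises, so that sizes just add. We then still have to show that $(\Pi_2,\eqset_2)$ needs at most one more step to meet it. If $n_1 > n_2$, the roles are symmetric: $\Pi_1$ is already the larger one, and $(\Pi_2,\eqset_2) \Rew (\Pi_1,\eqset_1)$.

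\emph{Different lists.} If $\rho_1 \neq \rho_2$, the two expansions touch different $\many$ rules, so they commute at the level of trees: $\Expand{\rho_2,n_2,\Expand{\rho_1,n_1,\Pi}} = \Expand{\rho_1,n_1,\Expand{\rho_2,n_2,\Pi}}$. This holds even when one rule is nested inside the subderivation of the other, since the refined expansion keeps existing premises and only adds fresh minimal copies.

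\textbf{Main obstacle.} It remains to show that the blocked equation selected by the second step is still present, and still blocked by the same amount, in $\nfo{\eqset_1}$, so that the commuted step really is a $\Rew$ step. I expect this to be the hardest part. To attack it, I would describe $\nfo{\eqset}$ structurally: each blocked equation arises from an $\uarrow$ decomposition of a chain of arrow equations linking an $\app$ rule to an $\abs$ rule through $\usubszero$ steps. Adding fresh minimal premises to one $\many$ rule only lengthens that list and adds fresh equations with disjoint variables. I would argue, by confluence of $\rewo$ and a simulation of the old $\rewo$-derivation inside the new one, that every equation of $\nfo{\eqset}$ other than those involving $\rho_1$ reappears in $\nfo{\eqset_1}$. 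The equation involving $\rho_2$ also reappears, with the side lengths non-decreasing.

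One subtlety is when $\rho_1$ occurs inside the subderivation that the equation $\ls_2 \doteq \lstwo_2$ depends on. There the difference may grow, or the equation may be replaced by new blocked equations coming from the copied premises. I would handle this by taking $(\Pi_3,\eqset_3)$ to be the result of performing, from $(\Pi_1,\eqset_1)$, finitely many $\Rew$ steps that enlarge every relevant $\many$ rule to the size already reached in $\Pi_2$, plus whatever $\Pi_2$ then forces. This is why the statement allows $\Rew^*$ on the left but only one step or equality on the right. Since list lengths are non-decreasing, I would close the case by a measure argument on the sum of list-length differences.
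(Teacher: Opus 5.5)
The commuting cases are fine, but the configuration that carries the real content of the lemma is left open, and the resolution you sketch for it does not work. In that configuration, step~2 picks the blocked equation $\lstwo \doteq \ls$ of an $\app$ rule with premises $\Sigma_1 \dem \tGamma_1 \derp P : \conb$ and $\Sigma_2 \dem \tGamma_2 \derp Q : \ls$, and expands $\ls$ by $|\lstwo|-|\ls|$. Step~1 expands a $\many$ rule inside $\Sigma_1$ and thereby lengthens $\lstwo$ to $\lstwo'$.

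The algorithm has no freedom in the amount of an expansion. From $\Pi_1$ it must expand $\ls$ by exactly $|\lstwo'|-|\ls|$, so it cannot ``enlarge every relevant $\many$ rule to the size already reached in $\Pi_2$''. Worse, with your orientation ($\Pi_1 \Rew^* \Pi_3$, at most one step from $\Pi_2$) no suitable $\Pi_3$ exists. Since lists never shrink, $\Pi_3$ must contain the effect of step~1. The only way to obtain that in one step from $\Pi_2$ is $\expstep_1(\Pi_2)$, whose argument list has length $|\lstwo|$. Along $\Rew$-steps from $\Pi_1$, however, that list either keeps length $|\ls|$ or jumps to length at least $|\lstwo'| > |\lstwo|$. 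A measure on list-length differences cannot replace this. $\Rew$ is not terminating in general, which is exactly why the lemma has the asymmetric strip form used in \Cref{thm:algo-confluence}. What you must prove is an exact one-step bound on one side, not eventual joinability.

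The missing idea is to put the single step on the side of the inner expansion. Take $\Pi_3 = \Expand{\ls,|\lstwo'|-|\ls|,\Pi_1}$, which is one $\Rew$ step from $\Pi_1$. From $\Pi_2$, step~1 is still enabled, because expanding $\ls$ only adds fresh copies of $\PDsnmin{Q}$ and does not touch $\Sigma_1$. Applying it gives $\Pi'$, in which $\ls$ has become $\ls'$ with $|\ls'|=|\lstwo|$. Then $\Expand{\ls',|\lstwo'|-|\ls'|,\Pi'} = \Pi_3$, because refined expansions append premises. So $\Pi_1 \Rew \Pi_3$ and $\Pi_2 \Rew^* \Pi_3$, which suffices since the two hypotheses are interchangeable.

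The paper also relies on an invariant you never state. The list in the major premise of an $\app$ rule is never shorter than the list in its minor premise; this holds in $\PDsnmin{M}$ and is preserved by expansions. Hence the list the algorithm expands is always the argument list $\ls$. This reduces the analysis to two cases: an interfering expansion inside $\Sigma_1$ (the case above), or one inside $\Sigma_2$ (plain commutation).
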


\begin{proof}
(Sketch) We analyse the cases where two coinitial expansions
that the algorithm can perform may influence each other.
Let $\Pi$ contain a subderivation of shape:
\[
\infer[\app]{ \tGamma \derp PQ : \con}{ \Sigma_1 \dem \tGamma_1 \derp P : \conb & \Sigma_2 \dem \tGamma_2 \derp Q : \ls}
\]
and assume that applying unification rules to $\eqset$ results in a blocked equation $\lstwo \doteq \ls$,
originated from an equation $\conb \doteq \lstwo \arrow \A$ for some $\lstwo$ such that $|\lstwo| \not = |\ls|$.
Let $\expstep_2$ be the expansion that the algorithm can perform to try to unblock such an equation.
We distinguish two cases:
\begin{itemize}
\item First, consider the case where,
in addition to $\expstep_2$, the algorithm can also perform
an expansion $\expstep_1$ inside $\Sigma_1$.
Then $\Pi \Rew \Pi_1$ and $\Pi \Rew \Pi_2$, 
where $\Pi_1 = \expstep_1(\Pi)$ and $\Pi_2 = \expstep_2(\Pi)$.

Let us start from the scenario $|\ls| < |\lstwo|$.
If in $\Pi_1$ the list $\lstwo$ is replaced by $\lstwo'$ such that |$\lstwo| < |\lstwo'|$,
the algorithm can no longer apply $\expstep_2$;
instead, the algorithm can build $\Pi_3 = \Expand{\ls,|\lstwo'|-|\ls|,\Pi_1}$.
In $\Pi_2$, the list $\ls$ is replaced by $\ls'$ such that $|\ls'| = |\lstwo|$, 
and $\expseq_1$ is still applicable.
Therefore, starting from $\Pi' = \expstep_1(\Pi_2)$,
the algorithm can build the same pseudo-derivation as before, 
namely $\Pi_3 = \Expand{\ls',|\lstwo'|-|\ls'|,\Pi'}$.
Thus we have $\Pi_1 \Rew \Pi_3$ and $\Pi_2 \Rew^* \Pi_3$.
If, on the other hand, $\expstep_1$ does not modify $\lstwo$,
one can check that $\expstep_2(\Pi_1) = \expstep_1(\Pi_2)= \Pi_3$,
\ie $\Pi_1 \Rew \Pi_3$ and $\Pi_2 \Rew \Pi_3$.

Now observe that, during
the whole execution of $\InferStrong(M)$, the scenario $|\lstwo| < |\ls|$ cannot happen: indeed, the length of
the list occurring in the major premise of an $\app$ rule is always greater or equal
than the length of the list occurring in its minor premise. 
To see why, note that this property holds for
the starting point of the algorithm, namely the minimal strong pseudo-derivation $\PDsnmin{M}$
(since $n=1$ in all rules $\many$), and is preserved by expansion operations (recall that we assume a definition
of expansion such that the length of lists is non-decreasing).
\item Second, consider the case where,
in addition to $\expstep_2$, the algorithm can also perform
an expansion $\expstep_1$ inside $\Sigma_2$.
One can check that $\expstep_1$ does not modify $\ls$ nor $\lstwo$, 
hence $\expstep_2$ can be applied to $\Pi_1 = \expstep_1(\Pi)$.
Similarly, since $\expstep_2$ introduces additional fresh copies of $\PDsnmin{Q}$
without interfering with the existing premises of $\Sigma_2$
(including the one on which $\expstep_1$ acts), the expansion
$\expstep_1$ can be applied to $\Pi_2 = \expstep_2(\Pi)$.
Therefore $\expstep_2 \circ \expstep_1(\Pi) = \expstep_1 \circ \expstep_2(\Pi) = \Pi_3$,
\ie  $\Pi_1 \Rew \Pi_3$ and $\Pi_2 \Rew \Pi_3$.
\end{itemize}
\end{proof}

%% file: termination.tex


\begin{lemma*}[\ref{lem:algo-as-redb}]
Let $M = M_0 \redb M_1 \redb \dots \redb M_k$ be a $F_\infty$ reduction sequence 
(\ie let $M_{i+1} = F_\infty(M_{i})$ for $0 \leq i < k$).
Then $\InferStrong(M) $ terminates if and only if $\InferStrong(M_k) $ terminates. 
\end{lemma*}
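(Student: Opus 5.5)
By induction on $k$ it suffices to treat a single step: if $N = F_\infty(M)$ with $M$ not in normal form, then $\InferStrong(M)$ terminates iff $\InferStrong(N)$ terminates. I will induct on the structure of the context $\ccontext$ in which the leftmost-outermost redex $(\l x.P)Q$ sits, following the three clauses of \Cref{def:f-infinity}.

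The context cases reduce to the root case by a locality argument. The constraints of $\PDsnmin{\ccontext[R]}$ split into those of the subderivation for $R$ and those of the surrounding skeleton. Because the redex is leftmost-outermost, $\ccontext$ has the shape $\l \vec y.\, z\, K_1 \dots K_j\, \square\, M_1\dots M_m$ with $K_i$ normal, or the hole lies in head position. The surrounding equations have the same form as in the proof of \Cref{lem:nf-no-expansion}. They only link the root pre-type variable of the hole's subderivation with fresh variables, and they never produce list equations by themselves. Consequently every blocked equation that ever arises in a run involves lists created inside the subderivation for $R$ or inside the $M_i$.

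Blocks in the $M_i$ and the $K_i$ are independent of blocks in $R$. By the confluence lemma \Cref{lem:algo-confluence}, which lets me schedule expansions freely, termination of the whole run is therefore equivalent to termination on $R$ together with termination on the independent pieces. This gives the context cases.

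For the root redex $(\l x.P)Q$ I treat the three clauses separately.

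\begin{enumerate}
\item In the clause $x\in\FV(P)$, the minimal pseudo-derivation contains the equations $\conb \doteq \ls\arrow\con$ for the abstraction, where $\ls$ is the list of the $k\ge1$ axiom occurrences of $x$, and $\conb \doteq \llist{\conc}\arrow \cond$ for the application. Unification yields $\ls \doteq \llist{\conc}$, which is blocked exactly when $k>1$. The algorithm then expands the argument to $k$ fresh copies of $\PDsnmin{Q}$, and $\usubszero$ identifies each copy's type with the matching axiom of $x$. I will show that, up to $\rewo$-steps and renaming, the resulting pair $(\Pi,\eqset)$ is the pair obtained by grafting the $k$ copies of $\PDsnmin{Q}$ at the axiom leaves of $\PDsnmin{P}$, which is precisely $\PDsnmin{P\subs{x}{Q}}$ with the $\abs$ and $\app$ equations adding only trivial solved bindings. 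From that point on, every $\Rew$-step on one side is mirrored by a $\Rew$-step on the other. Since the output is unique (\Cref{thm:algo-confluence}), termination of one run transfers to the other.
\item In the clause $x\notin\FV(P)$ with $Q$ normal, rule $\absk$ produces $\llist{\conb}$ against $\llist{\conc}$. These lists have equal length, so no block arises on this side. By \Cref{lem:nf-no-expansion}, the constraints of $Q$ are solvable with no expansion, so the run on $(\l x.P)Q$ consists of the run on $P$ plus a finite amount of independent work.
\item In the clause where the step is $(\l x.P)F_\infty(Q)$, the redex is unchanged and the inner step happens inside $Q$. Since $Q$'s subderivation occurs only once and the $\absk$ equation never forces an expansion of it, termination on $Q$ transfers to the whole term by the locality argument above.
\end{enumerate}

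The main obstacle is the simulation in the first clause. The point is not only the first expansion: pre-types of $Q$ may later be constrained differently by different occurrences of $x$, and this must match exactly the constraints of the separate copies of $Q$ in $P\subs{x}{Q}$. Handling this relies on the disjointness of the fresh copies and on the fact that $\rewo$ does not substitute inside lists. I would state this as an explicit invariant, namely a bijection between the pseudo-derivations reachable from each side, and verify it by checking that each unification and expansion step preserves the bijection.
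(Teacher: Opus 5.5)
\textbf{There is a genuine gap: the locality argument that reduces the context cases to the root case is false.} This is where the difficulty of the lemma lies.

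\emph{Application in the context interacts with the redex.} You claim the skeleton equations only link the root variable of the hole to fresh variables. This fails when the hole is in head position. For $\ccontext=\l\vec y.\,\square\,M_1\dots M_m$, the root variable $\con_R$ of $R$ receives $\con_R\doteq\ls_1\arrow\con_1$, where $\ls_1$ is the argument list of $M_1$. Once unification inside $R$ turns $\con_R$ into an arrow $\lstwo\arrow\B$, you get the list equation $\lstwo\doteq\ls_1$. It mixes a list of $R$ with a list of the context, and it is resolved by expanding the $\many$ rule of $M_1$. So blocks in the $M_i$ are not independent of $R$.

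A concrete counterexample: take $M=(\l x.x)\,\delta\,\delta$ with $\delta=\l w.ww$.
\begin{itemize}
\item The redex is $R=(\l x.x)\delta$, in the context $\square\,\delta$.
\item No blocked equation ever arises for $R$, and $\InferStrong(\delta)$ terminates by \Cref{cor:nf-termination}.
\item Yet $\InferStrong(M)$ cannot terminate: by \Cref{thm:algo-correctness} and \Cref{thm:characterization}, $M$ would then be strongly normalizing.
\end{itemize}
More fundamentally, the root statement ``$\InferStrong(R)$ terminates iff $\InferStrong(R')$ terminates'' cannot be lifted through contexts at all. It holds for any two strongly normalizing terms, whatever context they are later plugged into.

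\emph{Binders in the context see inside the hole.} The abstractions $\l\vec y$ of the context bind free variables of the redex. Their equations $\conb\doteq\tGamma(y)\arrow\con$ therefore contain axiom variables from inside the hole, and $|\tGamma(y)|$ changes when $Q$ is duplicated or erased. Your locality claim ignores this too.

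\textbf{What is needed is a simulation on the whole constraint system.} The paper does this directly on $\eqset_M$, with no induction on $\ccontext$.
\begin{itemize}
\item \emph{Case $x\in\FV(P)$.} Expand the argument list of the redex to $n$ fresh copies of $\PDsnmin{Q}$, then apply a few unification steps. The result is $\eqset_N$, context equations included, plus the equations $\conc\doteq\llist{\cond_i}_{i}\arrow\conb$, $\cone\doteq\conb$ and $\con_i\doteq\cond_i$. After the expansion, the number of axioms for each abstracted $z\in\dom{\tDelta}$ equals that in $\Pi_N$. The left-hand variables $\conc,\cone,\con_i$ of the extra equations occur nowhere in $\eqset_N$, so these equations never produce blocks.
\item \emph{Erasing case ($Q$ normal).} The paper uses \Cref{lem:nf-no-expansion} for $\eqset_Q$. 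It then uses the leftmost-outermost property of the contracted redex. An abstraction binding a free variable of the erased $Q$ can never be in function position. In argument position it only feeds the type of a head variable, and one follows the finite chain of such variables. Hence the longer lists it carries are never matched against another list.
\end{itemize}

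Your clause-1 invariant, made global with this bookkeeping, would essentially be the paper's argument. It also cannot be step-for-step. Expanding in $N$ a $\many$ rule whose subject contains $x$ corresponds in $M$ to that expansion followed by a further expansion of the list of $Q$, because new axioms for $x$ appear.
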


\begin{proof} 
In what follows, 
we write $\uniseq(\eqset)$ for the result of applying 
a sequence of unification rules $\uniseq$ to the set of equations $\eqset$.
The union of \emph{disjoint sets} is noted $\sqcup$, 
\ie writing $X \sqcup Y$ tacitly means that $X \cap Y = \emptyset$.
We argue by induction on $k$.

\paragraph{Case $k = 1$.}
We develop in details the base step, 
namely $M = \ccontext[(\lambda x.P)Q] \redb \ccontext[P\subs{x}{Q}] = M_1 = N$.
Let $\PDsnmin{M}=(\Pi_M,\eqset_M)$ and $\PDsnmin{N}=(\Pi_N,\eqset_N)$.  
The proof depends on the occurrences of $x$ in $P$. 

\paragraph*{Case $x$ occurs in $P$.}
We show that $\InferStrong(M)$ can transform $\eqset_M$ into a superset $\eqset_M^\dagger$ of $\eqset_N$, 
that is $\eqset_N \subset \eqset_M^\dagger$;
at the same time we show that equations in $\eqset_M^\dagger - \eqset_N$ cannot
play any role in generating blocked equations when solving $\eqset_M^\dagger$, \ie that
$\eqset_M^\dagger$ is solvable if and only if $\eqset_N$ is.
To ease the presentation, for the time being we assume $P \not = x$; 
the special case $P = x$ is treated separately later on. 
The derivation $\Pi_M$ contains a subderivation $\Pi_{(\lambda x.P)Q}$ of shape:
\[ 
\infer{\Pi_{(\lambda x.P)Q} \dem \tGamma \cons \tDelta \derp (\lambda x.P)Q: \cone}
{\infer[\absi]{ \tGamma \derp \lambda x.P: \conc}{\Pi_P \dem \tGamma, x: \llist{\con_i}_{i \in I} \derp P: \conb  & \conc \text{ fresh} } 
& \infer{\tDelta \derp Q: \llist{\cond} }{\Pi_Q \dem \tDelta \derp Q: \cond} & \cone \text{ fresh}} 
\]
where $|I| \geq 1$, $\PDsnmin{P} = (\Pi_P,\eqset_P)$ and  $\PDsnmin{Q}=(\Pi_Q,\eqset_Q)$.
By construction, for each occurrence of $x$ in $P$ there is an axiom with subject $x$ in $\Pi_P$.
Since we assumed $P \not = x$, in $\Pi_P$ each such axiom is followed either by a rule $\abs$ or by a rule $\app$. 
Let us consider all possible scenarios by partitioning the set of indexes $I$ into three subsets, 
\ie letting $I = I_\eabs \sqcup I_\eappl \sqcup I_\eappr$.
For each axiom with subject $x$ followed by an $\abs$ rule, $\Pi_P$ contains a subderivation $\Xi_i$ of shape:
\[
\vcenter{ \infer{ \Xi_i \dem x :\llist{\con_i} \derp \lambda y.x : \conf_i}
{x :\llist{\con_i} \derp x: \con_i & \conf_i, \conf_i' \text{ fresh}} } \qquad (i \in I_\eabs)  
\] 
where $y \not = x$.
Let the set of equations generated by all the aforementioned $\Xi_i$ be:
\[ \eqset_{\eabs} = \{ \conf_i \doteq \llist{\conf_i'} \arrow \con_i  \mid  i \in I_{\eabs} \} \]
On the other hand, if the axiom introducing $x$ is followed by an $\app$ rule we distinguish two subcases, 
depending on whether $x$ occurs on the left or on the right side of the application.
For each occurrence of $x$ in $P$ in functional position,  
$\Pi_P$ contains a subderivation $\Sigma_i$ of shape:
\[
\vcenter{ \infer{ \Sigma_i \dem \{x:\llist{\con_i}\} \cons \tPsi_i \derp xR_i: \conh_i}
{x:\llist{\con_i} \derp x: \con_i 
& \infer{\tPsi_i \derp R_i: \llist{\cong_i}}{\tPsi_i \derp R_i: \cong_i} 
& \conh_i \text{ fresh}} } \qquad (i \in I_\eappl)  
\] 
Let the set of equations generated by the last rule of all the aforementioned $\Sigma_i$ be:
\[ \eqset_{\eappl} = \{ \con_i \doteq \llist{\cong_i} \arrow \conh_i  \mid  i \in I_{\eappl} \} \]
Focusing instead on the occurrences of $x$ in $P$ in argument position, 
$\Pi_P$ contains subderivations $\Theta_i$ of shape:
\[
\vcenter{ \infer{ \Theta_i \dem \tPhi_i \cons \{x:\llist{\con_i}\} \derp S_i x: \conl_i}
{ \tPhi_i \derp S_i: \coni_i 
& \infer{x:\llist{\con_i} \derp x:\llist{\con_i}}{x:\llist{\con_i} \derp x:\con_i} 
& \conl_i \text{ fresh}} } \qquad (i \in I_\eappr) 
\] 
Let the set of equations generated by the last rule of all the aforementioned $\Theta_i$ be:
\[ \eqset_{\eappr} = \{ \coni_i \doteq \llist{\con_i} \arrow \conl_i \mid i \in I_{\eappr} \} \]
Let $\eqset_P^x = \eqset_\eabs \cup \eqset_\eappl \cup \eqset_\eappr$ 
(observe that clearly $\eqset_P^x \subseteq \eqset_P$); 
moreover, let $\eqset_P^* = \eqset_P - \eqset_P^x$ be the subset of constraints generated by $\Pi_P$
which are not related to the occurrences of $x$ in $P$.
From the above considerations, it follows that $\eqset_M$ has shape:
\[ \eqset_M = \eqset \sqcup \eqset_P^* \sqcup \eqset_P^x \sqcup \eqset_Q \sqcup \{\conc \doteq \llist{\con_i}_{i\in I} \arrow \conb, \conc \doteq \llist{\cond} \arrow \cone\} \] 
for some $\eqset$.
If $|I| = n > 1$, \ie $(\l x.P)Q$ is a non-linear redex, 
applying unification rules to $\eqset_M$ yields a blocked equation 
$\llist{\con_1,\dots,\con_n} \doteq \llist{\cond}$ 
(for the case $|I| = n = 1$, ignore the following considerations about the expansion and proceed directly
to the discussion of $\uniseq$, letting $\Pi_M' = \Pi_M$ and $\eqset_M' = \eqset_M$).
At this point, $\InferStrong$ can perform an expansion, obtaining 
a pseudo-derivation $\Expand{\llist{\cond}, n-1, \Pi_M} =(\Pi_M',\eqset_M')$.
In place of $\Pi_{(\lambda x.P)Q}$, the derivation $\Pi_M'$ contains 
a subderivation $\Pi_{(\lambda x.P)Q}'$ of shape:
\[ 
\infer{\Pi_{(\lambda x.P)Q}' \dem \tGamma \cons_{i \in I} \tDelta_i \derp (\lambda x.P)Q: \cone}
{\infer{ \tGamma \derp \lambda x.P: \conc}{\Pi_P \dem \tGamma, x: \llist{\con_i}_{i \in I} \derp P: \conb  & \conc \text{ fresh} } 
& \infer{\cons_{i \in I} \tDelta_i \derp Q: \llist{\cond_i}_{i \in I}}{(\Pi_Q^i \dem \tDelta_i \derp Q: \cond_i)_{i \in I}} 
& \cone \text{ fresh}} 
\]
where $(\Pi_Q^i, \eqset_Q^i)$ ($i \in I$) are $n$ disjoint copies of $\PDsnmin{Q}$. 
Hence we have:
\[ \textstyle
\eqset_M' = \eqset' \sqcup \eqset_P^* \sqcup \eqset_P^x \sqcup (\bigsqcup_{i \in I} \eqset_Q^i)
\sqcup \{\conc \doteq \llist{\con_i}_{i\in I} \arrow \conb,\conc \doteq \llist{\cond_i}_{i \in I} \arrow \cone\} 
\] 
for some $\eqset'$. Remark that $\con_i$ ($i \in I$), $\conc$ do not occur in 
$\eqset' \sqcup \eqset_P^* \sqcup (\bigsqcup_{i \in I} \eqset_Q^i)$, 
and that $\conc, \cone$ do not occur in $\eqset_P \sqcup (\bigsqcup_{i \in I} \eqset_Q^i)$.
Applying a suitable sequence $\uniseq$ of unification rules to $\eqset_M'$,
the equation $\llist{\con_i}_{i \in I} \doteq \llist{\cond_i}_{i \in I}$ can be decomposed, 
each $\con_i$ can be substituted by $\cond_i$, and $\cone$ can be substituted by $\conb$.
This yields:
\begin{gather*}
\uniseq(\eqset_{\eabs}) = \{ \conf_i \doteq \llist{\conf_i'} \arrow \cond_i  \mid  i \in I_{\eabs} \}  \\[5pt]
\uniseq(\eqset_{\eappl}) = \{ \cond_i \doteq \llist{\cong_i} \arrow \conh_i  \mid  i \in I_{\eappl} \}  \\[5pt]
\uniseq(\eqset_{\eappr}) = \{ \coni_i \doteq \llist{\cond_i} \arrow \conl_i \mid i \in I_{\eappr} \}  \\[5pt]
\textstyle
\uniseq(\eqset_M')  =  
\eqset'\subs{\cone}{\conb} \sqcup \eqset_P^* \sqcup \uniseq(\eqset_P^x) \sqcup (\bigsqcup_{i \in I} \eqset_Q^i) \sqcup \{\conc \doteq \llist{\cond_i}_{i \in I} \arrow \conb, \cone \doteq \conb\} \sqcup \{\con_i \doteq \cond_i\}_{i \in I}
\end{gather*}
where $\uniseq(\eqset_P^x) = \uniseq(\eqset_\eabs) \cup \uniseq(\eqset_\eappl) \cup \uniseq(\eqset_\eappr)$. 
Now we show that: 
\begin{equation} \label{eq:constraints-relation-two}
\textstyle
\eqset_N = \eqset'\subs{\cone}{\conb} \sqcup \eqset_P^* \sqcup \uniseq(\eqset_P^x) \sqcup (\bigsqcup_{i \in I} \eqset_Q^i)
\end{equation} 
We start by pointing out that $\Pi_N$, in place of $\Pi_{(\lambda x.P)Q}$, 
contains a subderivation $\PDsnmin{P\subs{x}{Q}}$ $= (\Pi_{P\subs{x}{Q}}, \eqset_{P\subs{x}{Q}})$ ending by: 
\[ \Pi_{P\subs{x}{Q}} \dem \tGamma \cons_{i \in I} \tDelta_i \derp P\subs{x}{Q} : \conb \]
In turn, $\Pi_{P\subs{x}{Q}}$ contains subderivations corresponding to 
the various $\Xi_i$, $\Sigma_i$ and $\Theta_i$, namely:
\[
\vcenter{ \infer{ \Xi^{\subs{x}{Q}}_i \dem \tDelta_i \derp \lambda y.Q : \conf_i}
{ \Pi_Q^i \dem \tDelta_i \derp Q: \cond_i & \conf_i, \conf_i' \text{ fresh}} } \qquad (i \in I_\eabs)  
\]  
\[
\vcenter{ \infer{ \Sigma^{\subs{x}{Q}}_i \dem \tDelta_i \cons \tPsi_i \derp Q(R_i\subs{x}{Q}): \conh_i }
{\Pi_Q^i \dem \tDelta_i \derp Q: \cond_i & 
\infer{\tPsi_i \derp R_i\subs{x}{Q}: \llist{\cong_i}}{\tPsi_i \derp R_i\subs{x}{Q}: \cong_i} & 
\conh \text{ fresh}} }  
\qquad (i \in I_\eappl) 
\]
\[
\vcenter{ \infer{ \Theta^{\subs{x}{Q}}_i \dem \tPhi_i \cons \tDelta_i \derp (S_i\subs{x}{Q})Q: \conl_i }
{ \tPhi_i \derp S_i\subs{x}{Q}: \coni_i & 
\infer{\tDelta_i \derp Q: \llist{\cond_i}}{\Pi_Q^i \dem \tDelta_i \derp Q: \cond_i} & 
\conl \text{ fresh}} } 
\qquad (i \in I_\eappr)
\]
where $(\Pi^i_Q, \eqset_Q^i)$ ($i \in I$) are $n$ disjoint copies of $\PDsnmin{Q}$ 
(exactly as in the right premise of $\Pi_{(\lambda x.P)Q}'$).
Observe that the environments of
$\Xi^{\subs{x}{Q}}_i$, $\Sigma^{\subs{x}{Q}}_i$ and $\Theta^{\subs{x}{Q}}_i$ 
are possibly larger than those of 
$\Xi_i$, $\Sigma_i$ and $\Theta_i$ ($i \in I$). 
Therefore one may wonder if, due to such modifications,
the $\abs$ rules of $\Pi_N$ could generate equations that are different 
from those in $\uniseq(\eqset_M')$. 
We prove this is not the case.
First, notice that bound variables in $P$ cannot belong to $\dom{\tDelta_i} = \dom{\tDelta}$. 
Consequently, the equations generated by $\Pi_{P\subs{x}{Q}}$ are unaffected
by the enlargement of the environments, \ie we have 
$\eqset_{P\subs{x}{Q}} = \eqset_P^* \sqcup \uniseq(E_P^x) \sqcup (\bigsqcup_{i \in I} \eqset_Q^i)$. 
It remains to show that $\eqset_N - \eqset_{P\subs{x}{Q}} = \eqset'\subs{\cone}{\conb}$.
For most equations in $\eqset_N - \eqset_{P\subs{x}{Q}}$ this is easy, 
but once again special care is required for 
the equations generated by $\abs$ rules of $\Pi_N$ 
abstracting the various $z \in \dom{\tDelta}$. 
Consider any such $z$, and let $p$ (resp. $q$) be the number of 
axioms with subject $z$ in $\Pi_P$ (resp. $\Pi_Q$).
By construction, there are $m = h + p + q$ axioms with subject $z$ in $\Pi_M$, for some $h \geq 0$.
On the other hand, there are $m' = h + p + n \times q$ 
axioms with subject $z$ in $\Pi_N$. Looking at $\Pi_{(\lambda x.P)Q}'$, 
one realises that also $\Pi_M'$ contains exactly $m'$ axioms with subject $z$.
Thus, the rule $\abs$ abstracting $z$ in $\Pi_N$
generates an equation that belongs to $\eqset'\subs{\cone}{\conb}$.
Having proved \Cref{eq:constraints-relation-two}, we conclude by observing that 
$\uniseq(\eqset_M') - \eqset_N = \{\conc \doteq \llist{\cond_i}_{i \in I} \arrow \conb, \cone \doteq \conb\} 
\sqcup \{\con_i \doteq \cond_i\}_{i \in I}$ 
cannot play any role in generating blocked equations when solving $\uniseq(\eqset_M')$, 
because $\con_i$ $(i \in I)$, $\conc$ and $\cone$ do not occur in $\eqset_N$.
\paragraph*{The special case $P = x$.}
The derivation $\Pi_M$ contains a subderivation $\Pi_{(\lambda x.x)Q}$ of shape:
\[ 
\infer{\Pi_{(\lambda x.x)Q} \dem \tDelta \derp (\lambda x.x)Q: \cond}
{\infer[\absi]{ \derp \lambda x.x: \conb}{\Pi_P \dem  x: \llist{\con} \derp x: \con  & \conb \text{ fresh} } 
& \infer{\tDelta \derp Q: \llist{\conc} }{\Pi_Q \dem \tDelta \derp Q: \conc} & \cond \text{ fresh}} 
\]
Hence we have:
\[ \eqset_M = \eqset \sqcup \eqset_Q \sqcup \{\conb \doteq \llist{\con} \arrow \con, \conb \doteq \llist{\conc} \arrow \cond\} \]
for some $\eqset$.  Remark that $\con$, $\conb$ do not occur in $\eqset \sqcup \eqset_Q$, 
and that $\cond$ does not occur in $\eqset_Q$.  
Applying a suitable sequence $\uniseq$ of unification rules to $\eqset_M$, one obtains:
\[  \uniseq(\eqset_M) = \eqset\subs{\cond}{\conc} \sqcup \eqset_Q \sqcup \{\conb \doteq \llist{\conc} \arrow \conc, \con \doteq \conc, \cond \doteq \conc\} \]
The reader can easily check that $\eqset_N = \eqset\subs{\cond}{\conc} \sqcup \eqset_Q$;
indeed, in $\Pi_N$ the subderivation corresponding to $\Pi_{(\lambda x.x)Q}$ is simply $\Pi_Q$.
We conclude by observing that $\con$, $\conb$ and $\cond$ do not occur in $\eqset_N$; therefore 
$\uniseq(\eqset_M) - \eqset_N = \{\conb \doteq \llist{\conc} \arrow \conc, \con \doteq \conc, \cond \doteq \conc\}$
cannot contribute to the generation of blocked equations when solving $\uniseq(\eqset_M)$.

\paragraph*{Case $x$ does not occur in $P$.}
Let us start by highlighting that the $F_\infty$ reduction strategy selects 
the leftmost-outermost redex in a suitable subterm (see \Cref{def:f-infinity}).
Recall that a \emph{neutral term} has shape $x N_1 \dots N_n$ ($n \geq 0$), 
where each $N_i$ is a normal form, 
and that the leftmost-outermost reduction strategy $\redlo$
can reduce under the following contexts:
\begin{enumerate}[(a)]
\item \label{lo-abs} if $M \redlo M'$, then $\l x. M \redlo \l x.M'$;
\item \label{lo-appl} if $M \redlo M'$, then $MN \redlo M'N$;
\item \label{lo-appr} if $M \redlo M'$ and $N$ is a neutral term, then $NM \redlo NM'$.
\end{enumerate}
Now consider the derivation $\Pi_M$, which contains a subderivation $\Pi_{(\lambda x.P)Q}$ of shape:
\[ 
\infer{\Pi_{(\lambda x.P)Q} \dem \tGamma \cons \tDelta \derp (\lambda x.P)Q: \cond}
{\infer[\absk]{ \tGamma \derp \lambda x.P: \conb}
	{\Pi_P \dem \tGamma \derp P: \con  & \conb, \cone \text{ fresh} } 
& \infer{\tDelta \derp Q: \llist{\conc} }{\Pi_Q \dem \tDelta \derp Q: \conc} & \cond \text{ fresh}} 
\]
where $\PDsnmin{P} = (\Pi_P,\eqset_P)$ and $\PDsnmin{Q}=(\Pi_Q,\eqset_Q)$. Consequently $\eqset_M$ has shape:
\[\eqset_M = \eqset \sqcup \eqset_P \sqcup \eqset_Q \sqcup \{\conb \doteq \llist{\cone} \arrow \con, \conb \doteq \llist{\conc} \arrow \cond\}\] 
for some $\eqset$. 
First, we point out that in $\Pi_N$ the subderivation 
corresponding to $\Pi_{(\lambda x.P)Q}$ is simply $\Pi_P$,
hence $\eqset_N$ does not contain $\eqset_Q$ nor $\{\conb \doteq \llist{\cone} \arrow \con, \conb \doteq \llist{\conc} \arrow \cond\}$. Clearly the latter set does 
not contribute to generating blocked equations, as $\conb$, $\cond$ and $\cone$ are fresh.
Similarly, one can check that
the equations in $\eqset_Q$ can essentially be ignored: 
indeed, by \Cref{lem:nf-no-expansion}, 
solving $\eqset_Q$ \emph{by itself} yields a solved form, because $Q$ is in normal form.  
However, if some free variables of $Q$
are abstracted, 
pre-types variables occurring in $\eqset_Q$ may also occur in $\eqset$;
we prove that such variables do not play any significant role in the solution of $\eqset_M$, 
hence that $\eqset_M$ is solvable if and only if $\eqset_N$ is. 
Let $y \in \FV(Q)$ be an abstracted variable, and let $\tDelta (y) = \ls$
(note that, of course, pre-type variables in $\ls$ do not occur in $\eqset_N$). 
Now consider the subterm $\l y.R$ of $M$; remark that $R$ contains $(\l x.P)Q$. 
We analyse the two cases that may originate equations involving pre-type variables in $\ls$;
this happens when a term containing $\l y.R$ becomes either 
the functional or the argument part of an application.
For readability, in what follows we slightly abuse the syntax of pseudo-derivations, 
by implicitly performing some trivial unification steps. 
\begin{enumerate}[(i)]
\item \label{appl-eq}
In the first case, we reason by contradiction. 
Assume there is a subderivation of $\Pi_M$ of shape:
\begin{prooftree}
\def\ScoreOverhang{0pt}
\AxiomC{$\tPhi \derp \l y.R : \ls_1 \cons \ls \cons \ls_2 \arrow \conf$}
\UnaryInfC{$\vdots$}
\UnaryInfC{$\tPsi_1 \derp \kcontext[\l y.R] : \ls_1 \cons \ls \cons \ls_2 \arrow \conf$}	
	\AxiomC{$\tPsi_2 \derp S : \lstwo $}
\BinaryInfC{$\tPsi \derp \kcontext[\l y.R] S : \conf$}
\end{prooftree}
for some reduction context $\kcontext$. By definition of $F_\infty$, 
together with points (\ref{lo-abs}) and (\ref{lo-appl}) of $\redlo$-reduction,
the context $\kcontext$ must be generated by the grammar:
\[  \kcontext \grameq \square \mid \l x.\kcontext \mid \kcontext M  \]
However, this would create at least one redex containing $(\l x.P)Q$ as a subterm, 
thus contradicting the assumption that $(\l x.P)Q$ was the leftmost-outermost redex in $\kcontext[\l y.R] S$.
Therefore, this scenario is not possible.
\item \label{appr-eq} 
For the second case, consider a subderivation of $\Pi_M$ of shape:
\begin{prooftree}
\AxiomC{$\tPsi_1 \derp S : \cong$}
	\AxiomC{$\tPhi \derp \l y.R : \ls_1 \cons \ls \cons \ls_2 \arrow \conf$}
	\UnaryInfC{$\vdots$}
	\UnaryInfC{$\tPsi_2 \derp \l v_1 \dots v_m y.R : \A $}
	\UnaryInfC{$\tPsi_2 \derp \l v_1 \dots v_m y.R : \llist{\A} $}
\BinaryInfC{$\tPsi \derp S(\l v_1 \dots v_m y.R) : \conh$}
\end{prooftree} 
where $\A = \lstwo_1 \arrow \dots \arrow \lstwo_m \arrow \ls_1 \cons \ls \cons \ls_2 \arrow \conf$ for some $m \geq 0$.
By definition of $F_\infty$ and point (\ref{lo-appr}) of $\redlo$-reduction, 
$S$ must be a neutral term, \ie $S = z S_1 \dots S_n$ for some $n \geq 0$.
Let $z : \llist{\conl} \derp z : \conl$ be the axiom introducing $z$; 
then unification rules generate an equation of shape
$\conl \doteq \tau_1' \arrow \dots \arrow \lstwo_n' \arrow \llist{\A} \arrow \conh $.
A potentially problematic scenario could arise if some other equation involving $\conl$ were
generated, because then the pre-types variables contained in
$\ls$ could play a role in the solution of $\eqset_M$.
However, as seen before, this can only happen if $z$ is abstracted, and later a subterm containing $\l z.T$ 
becomes either the functional or the argument part of an application. 
Observe that the reasoning discussed in points (\ref{appl-eq}) and (\ref{appr-eq})
can be repeated replacing $\l y.R$ by $\l z.T$;
this would lead to a variable $z'$ in head position, an axiom $z' : \llist{\conl'} \derp z' : \conl'$,
an equation $\conl' \doteq \B$ such that $\conl$ occurs in $\B$, and so on.
Since this process cannot be iterated ad infinitum, we conclude
that only a finite chain of equations that do not impact the solvability of $\eqset_M$
can be created.
\end{enumerate}

\paragraph*{Inductive step.} 
By inductive hypothesis, 
$\InferStrong(M)$ terminates if and only if $\InferStrong(M_k)$ terminates.
Let $M_k=\ccontext[(\l x.P)Q] \redb \ccontext[P\subs{x}{Q}] = M_{k+1}$
following the $F_\infty$ reduction strategy, and let $\PDsnmin{M_k}=(\Pi_k, \eqset_k)$. 
Since $\Pi_k$ contains a subderivation with subject $(\lambda x.P)Q$, 
we can proceed as in the base step, 
proving that $\InferStrong(M_k)$ terminates if and only if $\InferStrong(M_{k+1})$ terminates. 
Therefore the property holds for all $k$.

\end{proof}

\begin{remark}
The proof technique for \Cref{lem:algo-as-redb}
can be used to address some gaps in a similar
proof of termination in \cite{PautassoR23}, 
which deals with $\kr$-reduction steps in an incorrect way.
\end{remark}